\journal{}
\newcommand{\eps}{\varepsilon}
\newcommand{\abs}[1]{\left\vert#1\right\vert}
\newcommand{\set}[1]{\left\{#1\right\}}
\newcommand{\norm}[1]{\left|#1\right|}
\newcommand{\cL}{\mathcal{L}}
\newcommand{\p}{\partial}
\newcommand{\mx}{\mathbf{x}}
\newcommand{\my}{\mathbf{y}}
\newcommand{\mz}{\mathbf{z}}
\newcommand{\mH}{\mathbf{H}}
\newcommand{\mU}{\mathbf{U}}
\newcommand{\mV}{\mathbf{V}}
\newcommand{\mS}{\mathbf{S}}
\newcommand{\vn}{\boldsymbol{\nu}}
\newcommand{\vt}{\boldsymbol{\theta}}
\newcommand{\vx}{\boldsymbol{\xi}}
\newtheorem{thm}{Theorem}[section]
\newtheorem{rem}[thm]{Remark}
\begin{document}

\begin{frontmatter}



\title{Multi-frequency subspace migration for imaging of perfectly conducting, arc-like cracks}


\author{Won-Kwang Park}
\ead{parkwk@kookmin.ac.kr}
\address{Department of Mathematics, Kookmin University, Seoul, 136-702, Korea.}

\begin{abstract}
Multi-frequency subspace migration imaging technique are usually adopted for the non-iterative imaging of unknown electromagnetic targets such as cracks in the concrete walls or bridges, anti-personnel mines in the ground, etc. in the inverse scattering problems. It is confirmed that this technique is very fast, effective, robust, and can be applied not only full- but also limited-view inverse problems if suitable number of incident and corresponding scattered field are applied and collected. But in many works, the application of such technique is somehow heuristic. Under the motivation of such heuristic application, this contribution analyzes the structure of imaging functional employed in the subspace migration imaging technique in two-dimensional full- and limited-view inverse scattering when the unknown target is arbitrary shaped, arc-like perfectly conducting cracks located in the homogeneous two-dimensional space. Opposite to the Statistical approach based on the Statistical Hypothesis Testing, our approach is based on the fact that subspace migration imaging functional can be expressed by a linear combination of Bessel functions of integer order of the first kind. This is based on the structure of the Multi-Static Response (MSR) matrix collected in the far-field at nonzero frequency in either Transverse Magnetic (TM) mode (Dirichlet boundary condition) or Transverse Electric (TE) mode (Neumann boundary condition). Explored expression of imaging functionals gives us certain properties of subspace migration and an answer of why multi-frequency enhances imaging resolution. Particularly, we carefully analyze the subspace migration and confirm some properties of imaging when a small number of incident field is applied. Consequently, we simply introduce a weighted multi-frequency imaging functional and confirm that which is an improved version of subspace migration in TM mode. Various results of numerical simulations via the far-field data affected by large amount of random noise are well matched with the analytical results derived herein, and give some ideas of future studies.
\end{abstract}

\begin{keyword}
Full- and limited-view inverse scattering problems \sep multi-frequency subspace migration \sep perfectly conducting cracks \sep Multi-Static Response (MSR) matrix \sep numerical experiments



\end{keyword}

\end{frontmatter}





\section{Introduction}\label{Sec1}
The main purpose of inverse scattering problems like a non-destructive evaluation is identifying unknown characteristics of defects such as sizes, locations, shapes, electric and magnetic properties. Among them, identification of shape of arbitrary shaped cracks in a structure such as bridges, concrete walls, machines, etc., is an interesting problems that can be easily faced in human life. Unfortunately, due to the intrinsic difficulties of its ill-posedness and nonlinearity, a successful accomplish of this problem cannot be easily performed.

Nowadays, many remarkable inversion techniques and corresponding computational environments are developed and established to solve this problem. The main approach of solving is based on the Newton-type iteration method that is finding the shape of target (minimizer) which minimize the discrete norm (generally $L^2-$norm) between measured scattered or far-fields in the presence of true and man-made targets. This iterative-based technique such as the level-set method and optimization algorithm have been applied successfully to identify the number, locations, shapes, and topological properties of cracks with a small number of directions of incident and scattered field data as exhibited in many works \cite{ADIM,CR,DL,K,PL4,VXB}. Nevertheless, finding a good initial guess close to the target, estimating \textit{a priori} information such as length, locations, material properties, selecting an appropriate regularization terms highly dependent on the problem, and evaluation of so-called Fr{\'e}chet (or domain) derivative are must be considered beforehand as described in \cite{KSY}. If one of these conditions is not full-filled, one shall encounters various problems such as phenomenon of non-convergence, occurrence of local minimizer problem, and requirement of large computational costs due to the large number of iteration procedure. In order to overcome, a simultaneous reconstruction algorithm is developed \cite{GH,RLLZ}. However, this kind of algorithm still has a difficulty that it must be performed with a good initial guess and generally very slow hence this fact indicate that it is desperately required the development of alternative fast algorithm and corresponding rigorous mathematical theory for finding a good initial guess in the the beginning stage of iteration procedure.

Correspondingly, for an alternative, various non-iterative imaging algorithms have been developed and successfully applied to various inverse problems. Based on the calculations of inverse Fourier transform, a variational algorithm has been proposed in \cite{AIM,AK,AMV}. Based on these nobel works, this algorithm produces very good results but it is still restricted for identifying locations of small inclusions so that extension to the reconstruction of the arbitrary shaped electromagnetic targets is further research topic. In \cite{CD,CHM,KR}, a linear sampling method is developed and applied for determining locations and shapes of unknown scatterers, but this approach requires a large number of directions of incident and scattered field, and does not considered in the limited-view inverse scattering problems. MUltiple SIgnal Classification, which is closely linked to the linear sampling method (see \cite{C}), is also applied various inverse scattering problems for full- and limited-view problems for imaging arbitrary shaped thin penetrable electromagnetic scatterers, cracks, and extended targets in two- and three-dimensional spaces, refer to \cite{AIL,AILP,AGKPS,AKLP,HSZ,JP,PhDWKP,P4,PL1,PL2,PL3}. Based on results in \cite{ABC,PL1,PL3}, it still requires huge amount of directions of incident and scattered field data in order to obtain an acceptable result and does not guarantee complete shape of targets due to the intrinsic resolution limit related to the half of applied wavelength. In recent works \cite{AIL,P4,PL2}, MUSIC algorithm is applied to the limited-view inverse problems however, it yields incorrect location of small inhomogeneities and shape of extended targets (for example, see Figure \ref{FigMUSIC}), and throughout the structure of MUSIC imaging function derived in \cite{JP}, the reason of this phenomenon is mathematically proved. Originally, topological derivative strategy is applied in shape optimization problems but throughout recent works, it has been confirm that it is a non-iterative imaging technique and successfully applied to the various inverse scattering problems, see \cite{AGJK,B,CR,MKP,NB,P2,P5,P6,SZ} and references therein. The remarkable advantages of topological derivative based imaging technique are that it produces good results even in the small number of directions of incident field data and robust with respect to the huge amount of random noise. However, throughout the derivation of topological derivatives in \cite{AKLP,P2,P5,P6}, this strategy covers only the case of full-view inverse scattering problems and small number of directions of incident field must span unit circle $\mathbb{S}^1$ it means that is is especially vulnerable in the limited-view inverse problems.

\begin{figure}[!ht]
\begin{center}
\subfloat[full-view case]{\label{FigMUSIC1}\includegraphics[width=0.32\textwidth]{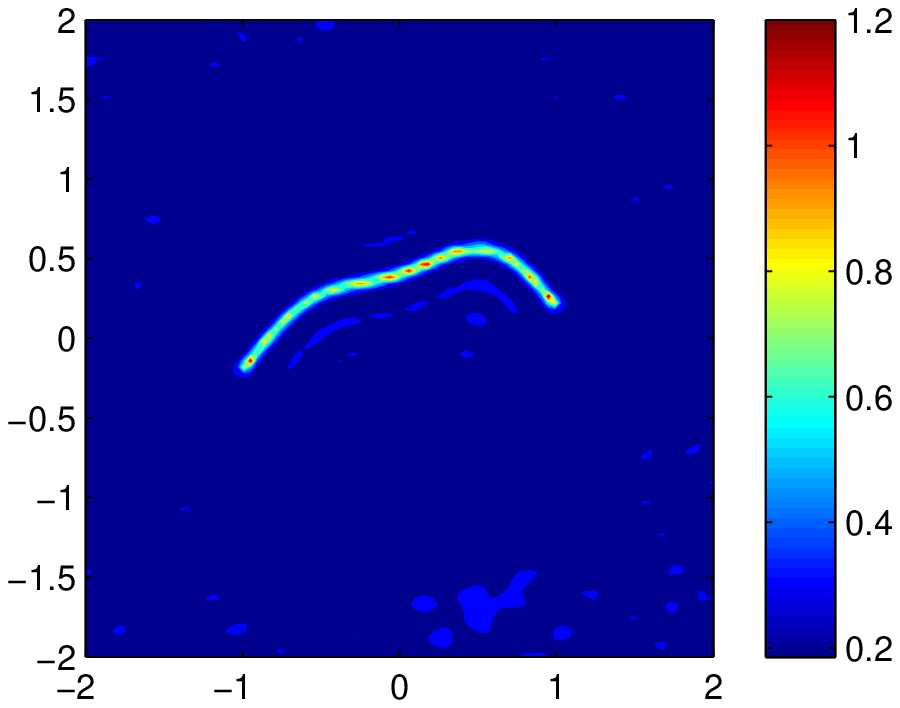}}
\subfloat[limited-view case]{\label{FigMUSIC2}\includegraphics[width=0.32\textwidth]{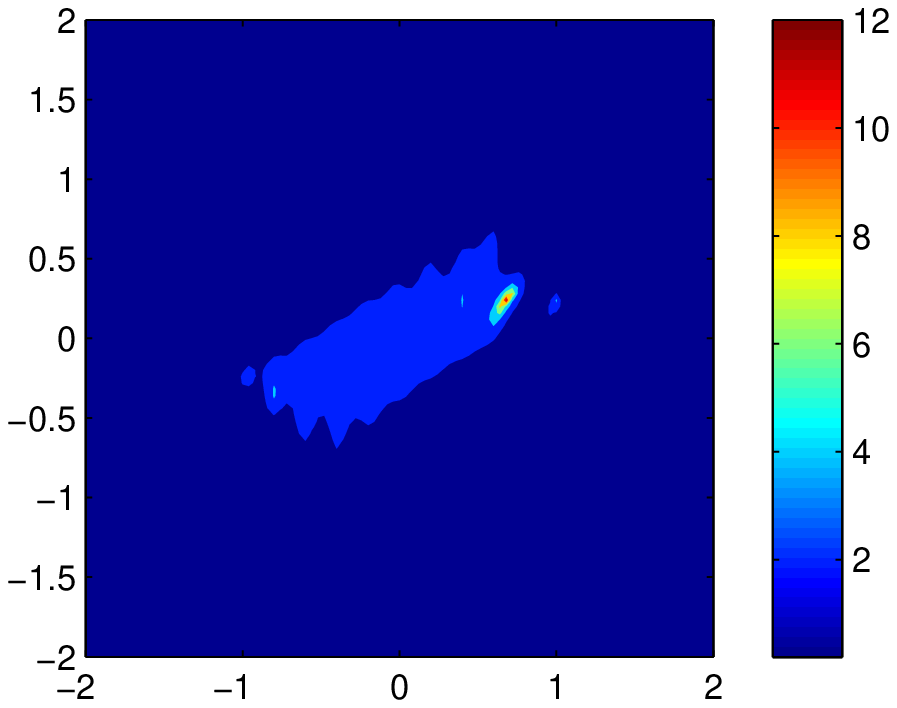}}
\subfloat[true shape]{\label{FigMUSIC3}\includegraphics[width=0.32\textwidth]{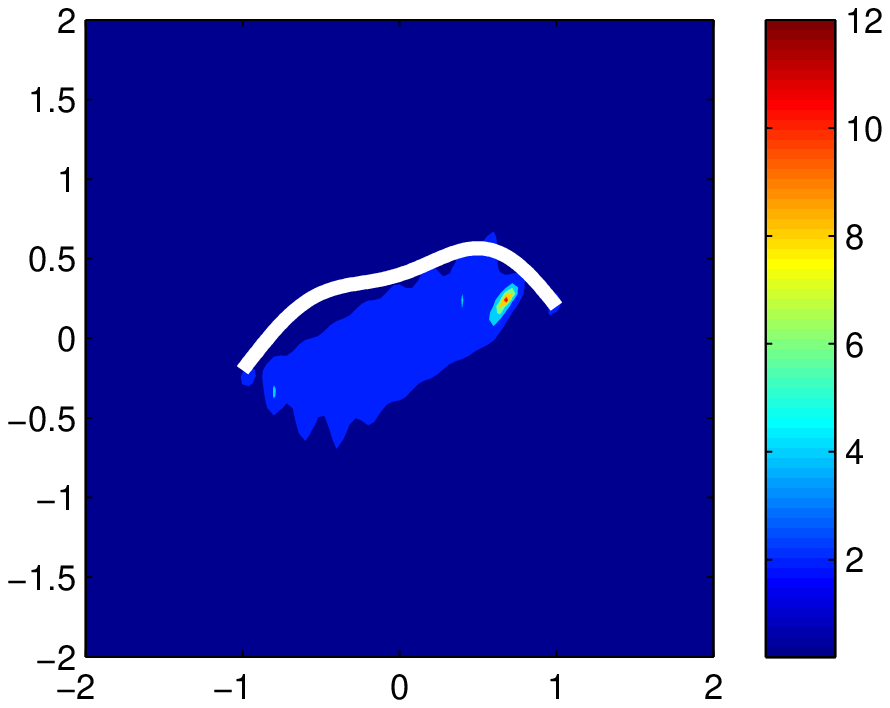}}
\caption{\label{FigMUSIC}Reconstruction of arbitrary shaped extended perfectly conducting crack via MUSIC-type algorithm \cite{PL1}. In the full-view problem, MUSIC offers very good result but very poor result is appeared in the limited-view problem.}
\end{center}
\end{figure}

Kirchhoff and subspace migration imaging techniques operated at single and multiple time-harmonic frequency has been applied successfully not only for full- but also limited-view inverse scattering problem if the number of directions of incident field and corresponding scattered field are sufficiently large enough. Related works can be found in \cite{AGKPS,HHSZ,JKHP,KP,P1,P3,P4,PL2,PP} and reference therein. Although these techniques are very robust with respect to the random noise and media, and offer very good results for imaging of small and extended scatterers, it is still used in many research field without rigorous mathematical theory. Based on the statistical hypothesis testing \cite{AGKPS} and relationship between Bessel functions of integer order of the first kind \cite{JKHP}, it is confirm that why multi-frequency Kirchhoff and subspace migration technique gives better results than single-frequency ones in the full-view inverse scattering problems. Recently, in \cite{KP}, it is proved that why they can be applied in the limited-view problems, and discovered some necessary conditions for obtaining proper results. However, this work was restricted in the imaging of small targets so that analysis of imaging functions for arbitrary shaped extended target is still remaining.

The main purpose of this paper is a rigorous mathematical analysis and identification of the structure of subspace migration for imaging of arbitrary shaped perfectly conducting cracks in either Transverse Magnetic (TM) mode (Dirichlet boundary condition) or Transverse Electric (TE) mode (Neumann boundary condition) for full- and limited-view inverse scattering problems. This is based on a factorization of the Multi-Static Response (MSR) matrix collected in the far-field data, and the fact that structure of singular vectors of MSR matrix is linked to the known incident field data. Based on the orthonormal property of left- and right-singular vectors of MSR matrix, it is clear that why subspace migration imaging produces an imaging of targets but this fact cannot explain some facts for example, appearance of unexpected replicas and the reason behind enhancement in the imaging performance by applying multiple frequencies. Fortunately, structure of singular vectors of MSR matrix leads us that subspace migration imaging functional is highly related to the integral representations of the Bessel functions of integer order of the first kind so that some definite integral formulas of Bessel functions is needed. Unfortunately, some of such formulas derived in the full-view case and does not considered in the limited-view case because there is no finite representation linked to the considering problem so that we will evaluate an approximation of such integrals by applying well-known Jacobi-Anger expansion formula and asymptotic properties of Bessel functions. From the derived structure of imaging functional, we can explore certain properties (such that why two curves are appeared instead of true curve for TE case), fundamental limitations, and a clue of improvements (for example, multi-frequency subspace migration weighted by each frequency).

For numerical simulations, two different approaches -- the Nystr\"{o}m method in \cite{K,M1} and the second-kind Fredholm integral equation \cite{N} -- are adopted for evaluating the far-field pattern data in order to avoid committing \textit{inverse crimes}. The numerical results of images from far-field data corrupted by the large amount of random noise appear almost indistinguishable from those. Let us signal that the extension of several cracks is available without any additional configuration. For simplicity, its mathematical derivation is not considered; only some numerical results for two cracks are illustrated. Once the shape of interested crack is mapped, it can be accepted as an initial guess and one can evolve it in order to retrieve a better shape via a Newton-type iteration algorithm \cite{K,KS,M2}, level-set methodology \cite{ADIM,DL,PL2,RLLZ,VXB} and optimization concept \cite{AGJKLY}, and it is possible to observe that only a few iteration procedure is required due to the closeness of initial guess so that it does not requires a great deal of computation.

We will organize this paper as follows. In section \ref{Sec2}, the two-dimensional direct scattering problem is briefly discussed in some detail, here mostly for the sake of completeness. In section \ref{Sec3}, multi-frequency based subspace migration imaging functional is sketched and mathematical analysis of the structure of imaging functional is provided in full- and limited-view cases. In section \ref{Sec4}, a set of results of numerical simulations from noisy, discrete TM and TE data, such data being computed from the application of
the Nystr\"{o}m method in \cite{K,M1} or by the second-kind Fredholm integral equation \cite{N}, is exhibited for supporting our investigations. A short conclusion including an outline of current and future work is mentioned in section \ref{Sec5}.

\section{Mathematical survey on two-dimensional direct scattering problem and multi-frequency subspace migration imaging}\label{Sec2}
\subsection{Direct scattering problem}
In this section, we consider the two-dimensional electromagnetic scattering by a perfectly conducting crack, denoted by $\Gamma$, located in the homogeneous space $\mathbb{R}^2$. The crack is an oriented piecewise smooth nonintersecting arc without cusp that can be represented as
\begin{equation}\label{crack}
\Gamma=\set{\mz(s):s\in[-1,1]}
\end{equation}
where $\mz:[-1,1]\longrightarrow\mathbb{R}^2$ is an injective piecewise $\mathcal{C}^3$ function (see Figure \ref{figureCrack}).

\begin{figure}[!ht]
\begin{center}
\includegraphics[width=0.25\textwidth]{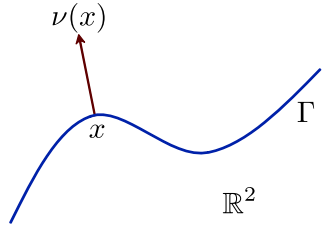}
\caption{\label{figureCrack}Illustration of two-dimensional perfectly conducting crack $\Gamma$.}
\end{center}
\end{figure}

First, let us consider the so-called Transverse Magnetic polarization case, letting $u(\mx,\vt;k)$ be the (single-component) electric field that satisfies the two-dimensional Helmholtz wave equation
\begin{equation}\label{HelmEQ}
\Delta u(\mx,\vt;k)+k^2 u(\mx,\vt;k)=0\quad\mbox{in}\quad\mathbb{R}^2\backslash\Gamma
\end{equation}
with incident direction $\vt$ and strictly positive wave number $k=\omega\sqrt{\mu\eps}$, letting $\eps$ be the electric permittivity and $\mu$ the magnetic permeability. Throughout this paper, applied wave number $k$ is of the form $k=2\pi/\lambda$, where $\lambda$ denotes the wavelength. The field cannot penetrate into $\Gamma$, i.e., $u$ satisfies the Dirichlet boundary condition
\begin{equation}\label{HelmBC}
u(\mx,\vt;k)=0\quad\mbox{on}\quad\Gamma.
\end{equation}
Conversely, let us consider the Transverse Electric polarization case, letting $u$ be the (single-component) magnetic field that satisfies the two-dimensional Helmholtz wave equation  (\ref{HelmEQ}) yet now with the following
Neumann boundary condition on $\Gamma$:
\begin{equation}\label{HelmNC}
\frac{\p u(\mx,\vt;k)}{\p\vn(\mx)}=0\quad\mbox{on}\quad\Gamma\backslash\set{\mz(-1),\mz(1)},
\end{equation}
where $\vn(\mx)$ is a unit normal vector to $\Gamma$ at $\mx$.

Let us notice that the total field can always be decomposed as $u(\mx,\vt;k)=u_{\mathrm{inc}}(\mx,\vt;k)+u_{\mathrm{scat}}(\mx,\vt;k)$. In this paper, we will consider the plane-wave illumination hence, $u_{\mathrm{inc}}(\mx,\vt;k)=e^{ik\vt\cdot\mx}$ be the given incident field for incident direction $\vt\in\mathbb{S}^1$ (unit circle), and $u_{\mathrm{scat}}(\mx,\vt;k)$ be the unknown scattered field, which is required to satisfy the Sommerfeld radiation condition
\[\lim_{\abs{\mx}\to\infty}\sqrt{\abs{\mx}}\left(\frac{\p u_{\mathrm{scat}}(\mx,\vt;k)}{\p\abs{\mx}}-iku_{\mathrm{scat}}(\mx,\vt;k)\right)=0\]
uniformly into all directions $\hat{\mx}=\frac{\mx}{\abs{\mx}}$. The determination of $u_{\mathrm{scat}}$ is a special case of the following problem
\begin{equation}\label{HelmSEQ}
\Delta u_{\mathrm{scat}}(\mx,\vt;k)+k^2 u_{\mathrm{scat}}(\mx,\vt;k)=0\quad\mbox{in}\quad\mathbb{R}^2\backslash\Gamma
\end{equation}
that satisfies the Dirichlet boundary condition
\begin{equation}\label{ExtDC}
u_{\mathrm{scat}}(\mx,\vt;k)=f(\mx,\vt;k)\quad\mbox{on}\quad\Gamma
\end{equation}
or the Neumann boundary condition
\begin{equation}\label{ExtNC}
{\frac{\p u_{\mathrm{scat}}(\mx,\vt;k)}{\p\vn(\mx)}=f(\mx,\vt;k)\quad\mbox{on}\quad\Gamma,}
\end{equation}
and the Sommerfeld radiation condition. Let us remind that from the boundary conditions (\ref{HelmBC}) and (\ref{HelmNC}), we can set $f(\mx,\vt;k)=-u_{\mathrm{inc}}(\mx,\vt;k)$ and $f(\mx,\vt;k)=-\nabla u_{\mathrm{inc}}(\mx,\vt;k)\cdot\vn(\mx)$ for the boundary conditions (\ref{ExtDC}) and (\ref{ExtNC}), respectively. Let us notice at this stage that the above  works only for \textit{smooth} arcs, and we should refer to \cite{N} for a broader and thoughtful coverage of electromagnetic scattering by general arcs, including ours.

\subsection{The far-field pattern}
Let us first consider the case of the Dirichlet boundary problem (TM polarization). The author in \cite{K} establishes the existence of a solution by searching it in the form of a single-layer potential
\begin{equation}\label{SLP}
u_{\mathrm{scat}}(\mx,\vt;k)=\int_{\Gamma}\Phi(\mx,\my;k)\varphi(\my,\vt;k)d\my\quad\mbox{for}\quad \mx\in\mathbb{R}^2\backslash\Gamma
\end{equation}
with the two-dimensional fundamental solution to the Helmholtz equation
\[\Phi(\mx,\my;k)=\frac{i}{4}\mathrm{H}_0^1(k\abs{\mx-\my})\quad\mbox{for}\quad \mx\ne\my,\]
expressed in terms of the Hankel function $\mathrm{H}_0^1$ of order zero and of the first kind. For the Neumann boundary problem, the author in \cite{M1} establishes the existence of a solution by searching it as a double-layer potential
\begin{equation}\label{DLP}
u_{\mathrm{scat}}(\mx,\vt;k)=\int_{\Gamma}\frac{\p\Phi(\mx,\my;k)}{\p\vn(\my)}\psi(\my,\vt;k)d\my\quad\mbox{for}\quad \mx\in\mathbb{R}^2\backslash\Gamma.
\end{equation}

Let us assume that for all $\mx\in\Gamma\backslash\set{\mz(-1),\mz(1)}$, the limits of the following quantities exist
\begin{align*}
u(\mx,\vt;k)&=\lim_{h\to+0}u(\mx\pm h\vn(\mx),\vt;k)\\
\frac{\p u_{\pm}(\mx,\vt;k)}{\p\vn(\mx)}&=\lim_{h\to+0}\vn(\mx)\cdot\nabla u(\mx\pm h\vn(\mx),\vt;k)\\
-\varphi(\mx,\vt;k)&=\frac{\p u_{+}(\mx,\vt;k)}{\p\vn(\mx)}-\frac{\p u_{-}(\mx,\vt;k)}{\p\vn(\mx)}\\
-\psi(\mx,\vt;k)&=u_{+}(\mx,\vt;k)-u_{-}(\mx,\vt;k).
\end{align*}

The far-field pattern $u_{\infty}$ of the scattered field $u_{\mathrm{scat}}$ is defined on the two-dimensional unit circle $\mathbb{S}^1$. It can be represented as
\[u_{\mathrm{scat}}(\mx,\vt;k)=\frac{e^{ik\abs{\mx}}}{\sqrt{\abs{\mx}}}\left(u_{\infty}(\hat{\mx},\vt;k)+O\left(\frac{1}{\abs{\mx}}\right)\right)\]
uniformly in all directions $\hat{\mx}=\mx/\abs{\mx}$ and $\abs{\mx}\longrightarrow\infty$. From the above representation and the asymptotic formula for the Hankel function, the far field pattern for Dirichlet boundary problem can be written as
\begin{align}
\begin{aligned}\label{FFPD}
u_\infty(\hat{\mx},\vt;k)&=-\frac{e^{i\frac{\pi}{4}}}{\sqrt{8\pi k}}\int_{\Gamma}e^{-ik\hat{\mx}\cdot\my}\left(\frac{\p u_{+}(\my,\vt;k)}{\p\vn(\my)}-\frac{\p u_{-}(\my,\vt;k)}{\p\vn(\my)}\right)d\my\\
&=\frac{e^{i\frac{\pi}{4}}}{\sqrt{8\pi k}}\int_{\Gamma}e^{-ik\hat{\mx}\cdot\my}\varphi(\my,\vt;k)d\my.
\end{aligned}
\end{align}

Similarly, the far-field pattern for the Neumann boundary problem can be expressed as
\begin{align}
\begin{aligned}\label{FFPN}
u_\infty(\hat{\mx},\vt;k)&=-\frac{e^{i\frac{\pi}{4}}}{\sqrt{8\pi k}}\int_{\Gamma}\frac{\p e^{-ik\hat{\mx}\cdot\my}}{\p\vn(\my)}\bigg(u_{+}(\my,\vt;k)-u_{-}(\my,\vt;k)\bigg)d\my\\
&=-\sqrt{\frac{k}{8\pi}}e^{-i\frac{\pi}{4}} \int_{\Gamma}\hat{\mx}\cdot\vn(\my)e^{-ik\hat{\mx}\cdot\my}\psi(\my,\vt;k)d\my.
\end{aligned}
\end{align}

\subsection{Introduction to multi-frequency subspace migration imaging functional}
In this section, we apply the far-field pattern formulas (\ref{FFPD}) and (\ref{FFPN}) in order to build up a subspace migration imaging functional. For that purpose, we use the eigenvalue structure of the Multi-Static Response (MSR) matrix
\[\mathbb{K}(k):=\bigg[K_{jl}(\hat{\mx}_j,\vt_l;k)\bigg]_{j,l=1}^{N}=\bigg[u_\infty(\hat{\mx}_j,\vt_l;k)\bigg]_{j,l=1}^{N}.\]

First, let us consider the Dirichlet boundary condition case. If the directions of incident and observation are coincide i.e., if $\hat{\mx}_j=-\vt_j$, then the MSR matrix $\mathbb{K}$ can be written as
\begin{equation}\label{MSRD}
\mathbb{K}(k)=\frac{e^{i\frac{\pi}{4}}}{\sqrt{8\pi k}}\int_{\Gamma}\mathbb{E}_{\mathrm{D}}(\hat{\mx},\my;k)\mathbb{F}_{\mathrm{D}}(\hat{\mx},\my;k)^Td\my,
\end{equation}
where $\mathbb{E}_{\mathrm{D}}(\hat{\mx},\my;k)$ is the illumination vector
\begin{equation}\label{VecED}
\mathbb{E}_{\mathrm{D}}(\hat{\mx},\my;k)=\bigg[e^{-ik\hat{\mx}_1\cdot\my},e^{-ik\hat{\mx}_2\cdot\my},\cdots,e^{-ik\hat{\mx}_N\cdot\my}\bigg]^T\bigg|_{\hat{\mx}_j=-\vt_j} =\bigg[e^{ik\vt_1\cdot\my},e^{ik\vt_2\cdot\my},\cdots,e^{ik\vt_N\cdot\my}\bigg]^T
\end{equation}
and where $\mathbb{F}_{\mathrm{D}}(\hat{\mx},\my;k)$ is the resulting density vector
\begin{equation}
\mathbb{F}_{\mathrm{D}}(\hat{\mx},\my;k)=\bigg[\varphi(\my,\vt_1;k),\varphi(\my,\vt_2;k),\cdots,\varphi(\my,\vt_N;k)\bigg]^T.
\end{equation}
Here, $\set{\hat{\mx}_j}_{j=1}^N\subset\mathbb{S}^1$ is a discrete finite set of observation directions and $\set{\vt_l}_{l=1}^N\subset\mathbb{S}^1$ is the same number of incident directions.

Formula (\ref{MSRD}) is a factorization of the MSR matrix that separates the known incoming wave information from the unknown information. The range of $\mathbb{K}(k)$ is determined by the span of the $\mathbb{E}_{\mathrm{D}}(\hat{\mx},\my;k)$ corresponding to the $\Gamma$, i.e., we can define a signal subspace by using a set of left singular vectors of $\mathbb{K}(k)$. We refer to \cite{HSZ,PL1} for a detailed discussion.

Assume that the crack is divided into $M$ different segments of size of order half the wavelength $\lambda/2$. Having in mind the Rayleigh resolution limit, any detail less than one-half of the wavelength cannot be probed, and only one point, say $\my_m$ for $m=1,2,\cdots,M$, at each segment is expected to contribute at the image space of the response matrix $\mathbb{K}(k)$, refer to \cite{ABC,AKLP,PL1,PL3}.

Since the coincide configuration of incident and observation directions, MSR matrix $\mathbb{K}(k)$ is complex symmetric, refer to \cite{AGKLS,HHSZ,PhDWKP,P3,P4,PL1,PL2,PL3}. Therefore, $\mathbb{K}(k)$ can be decomposed as
\begin{equation}\label{MatrixK}
\mathbb{K}(k)=\mathbb{H}(k)\mathbb{B}(k)\mathbb{H}(k)^T\approx\sum_{m=1}^{M}B_m(k)\mH_m(k)\mH_m^T(k).
\end{equation}
Having in mind of the structure of $\mathbb{E}_{\mathrm{D}}(\hat{\mx},\my;k)$ in (\ref{VecED}), we define a vector
\begin{equation}\label{VecD}
\mS_{\mathrm{D}}(\mx;k)=\bigg[e^{ik\vt_1\cdot\mx},e^{ik\vt_2\cdot\mx},\cdots,e^{ik\vt_N\cdot\mx}\bigg]^T
\end{equation}
and corresponding normalized vector $\hat{\mS}_{\mathrm{D}}(\mx;k)=\frac{\mS_{\mathrm{D}}(\mx;k)}{|\mS_{\mathrm{D}}(\mx;k)|}$. Then for $m=1,2,\cdots,M$, there exists some constants $\gamma_m$ such that (see \cite{HHSZ} for instance)
\begin{equation}\label{Approx1}
\mH_m(k)=e^{i\gamma_m}\hat{\mS}(\my_m;k).
\end{equation}
Since the first $M$ columns of the matrix $\mathbb{H}(k)$, $\set{\mH_1(k),\mH_2(k),\cdots,\mH_M(k)}$, are orthonormal, one can easily examine that
\begin{align}\label{Approx2}
\begin{aligned}
\hat{\mS}_{\mathrm{D}}(\mx;k)^*\mH_m(k)&\ne 0\quad\mbox{if}\quad \mx=\my_m\\
\hat{\mS}_{\mathrm{D}}(\mx;k)^*\mH_m(k)&\approx 0\quad\mbox{if}\quad \mx\ne\my_m
\end{aligned}
\end{align}
where $*$ is the mark of complex conjugate.

Based on above observations, we consider the following:
\begin{equation}\label{ImagingFunctionS}
\mathcal{I}_{\mathrm{D}}(\mx)=\left|\sum_{m=1}^{M}\left(\hat{\mS}_{\mathrm{D}}(\mx;k)^*\mH_m(k)\right) \left(\hat{\mS}_{\mathrm{D}}(\mx;k)^*\mH_m(k)\right)\right| =\left|\sum_{m=1}^{M}|\hat{\mS}_{\mathrm{D}}(\mx;k)^*\mH_m(k)|^2\right|.
\end{equation}
Then, from the relationships (\ref{Approx1}) and (\ref{Approx2}), $\mathcal{I}_{\mathrm{D}}(\mx)$ becomes $1$ at $\mx=\my_m\in\Gamma$ for $m=1,2,\cdots,M$, and $0$ at $\mx\in\mathbb{R}^2\backslash\Gamma$. With this, we can design a subspace migration imaging functional as follows: let us perform the Singular Value Decomposition (SVD) of $\mathbb{K}(k)$ as
\begin{equation}\label{SVD}
  \mathbb{K}(k)=\mathbb{U}(k)\Lambda(k)\mathbb{V}(k)^*\approx\sum_{m=1}^{M}\sigma_m(k)\mU_m(k)\mV_m^*(k),
\end{equation}
where $\sigma_m(k)$ denotes non-zero singular values, $\mU_m(k)$ and $\mV_m(k)$ are left- and right-singular vectors, respectively. Then, for several frequencies $\{k_f:f=1,2,\cdots,F\}$ we design a normalized imaging functional as
\begin{equation}\label{ImagingFunctionD}
\mathcal{I}_{\mathrm{D}}(\mx;F)=\frac{1}{F}\abs{\sum_{f=1}^{F}\sum_{m=1}^{M_f} \left(\hat{\mS}_{\mathrm{D}}(\mx;k_f)^*\mU_m(k_f)\right)\left(\hat{\mS}_{\mathrm{D}}(\mx;k_f)^*\overline{\mV}_m(k_f)\right)},
\end{equation}
where $M_f$ is number of nonzero singular values of MSR matrix at $k_f$ for $f=1,2,\cdots,F$. Then, from the relationships (\ref{Approx1}) and (\ref{Approx2}), $\mathcal{I}_{\mathrm{D}}(\mx;F)$ is expected to exhibit peaks of magnitude of $1$ at the location $\mx=\my_m$ for $m=1,2,\cdots,M_f$ and of small magnitude at $\mx\in\mathbb{R}^2\backslash\Gamma$. A suitable number of $M_f$ for each frequency can be found via careful thresholding (see \cite{HSZ,PL1,PL3}).

For the Neumann boundary condition case, the MSR matrix $\mathbb{K}(k)$ can be decomposed as
\begin{equation}\label{MSRN}
\mathbb{K}(k)=\sqrt{\frac{k}{8\pi}}e^{-i\frac{\pi}{4}}\int_{\Gamma}\mathbb{E}_{\mathrm{N}}(\hat{\mx},\my;k)\mathbb{F}_{\mathrm{N}}(\hat{\mx},\my;k)^Td\my,
\end{equation}
where $\mathbb{E}_{\mathrm{N}}(\hat{\mx},\my;k)$ is the illumination vector
\begin{align}
\begin{aligned}\label{VecEN}
\mathbb{E}_{\mathrm{N}}(\hat{\mx},\my;k)&=-\bigg[\hat{\mx}_1\cdot\vn(\my)e^{-ik\hat{\mx}_1\cdot \my},\hat{\mx}_2\cdot\vn(\my)e^{-ik\hat{\mx}_2\cdot \my},\cdots,\hat{\mx}_N\cdot\vn(\my)e^{-ik\hat{\mx}_N\cdot\my}\bigg]^T\bigg|_{\hat{\mx}_j=-\vt_j}\\
&=\bigg[\vt_1\cdot\vn(\my)e^{ik\vt_1\cdot\my},\vt_2\cdot\vn(\my)e^{ik\vt_2\cdot \my},\cdots,\vt_N\cdot\vn(\my)e^{ik\vt_N\cdot\my}\bigg]^T
\end{aligned}
\end{align}
and where $\mathbb{F}_{\mathrm{N}}(\hat{\mx},\my;k)$ is the corresponding density vector
\begin{equation}
\mathbb{F}_{\mathrm{N}}(\hat{\mx},\my;k)=\bigg[\psi(\my,\vt_1;k),\psi(\my,\vt_2;k),\cdots,\psi(\my,\vt_N;k)\bigg]^T.
\end{equation}

Formula (\ref{MSRN}) is a factorization of the MSR matrix that, like with the Dirichlet boundary condition case, separates the known incoming plane wave information from the unknown information. The range of $\mathbb{K}(k)$ is determined by the span of the $\mathbb{E}_{\mathrm{N}}(\hat{\mx},\my;k)$ corresponding to the $\Gamma$, i.e., we can define a signal subspace by using a set of left singular vectors of $\mathbb{K}(k)$.

The imaging algorithm for the Neumann boundary condition case is very similar to the Dirichlet boundary condition case. Based on the structure of (\ref{VecEN}), define a vector
\begin{equation}
\mS_{\mathrm{N}}(\mx;k)=\bigg[\vt_1\cdot\vn(\mx)e^{ik\vt_1\cdot\mx},\vt_2\cdot\vn(\mx)e^{ik\vt_2\cdot\mx},\cdots,\vt_N\cdot\vn(\mx)e^{ik\vt_N\cdot\mx}\bigg]^T
\end{equation}
and corresponding normalized vector $\hat{\mS}_{\mathrm{N}}(\mx;k)=\frac{\mS_{\mathrm{N}}(\mx;k)}{\norm{\mS_{\mathrm{N}}(\mx;k)}}$. Since the unit normal $\vn(\mx)$ is still unknown, for each point $\mx$ of the search domain, we use a set of directions $\vn_l(\mx)$ for $l=1,2,\cdots,L$, and we choose $\vn_l(\mx)$ which is to maximize the imaging functional among these directions at $\mx$. With this considerations, we suggest following normalized image functional with MSR matrices at several frequencies $\{k_f:f=1,2,\cdots,F\}$ as
\begin{equation}\label{ImagingFunctionN}
\mathcal{I}_{\mathrm{N}}(\mx;F)=\frac{1}{F}\abs{\sum_{f=1}^{F}\sum_{m=1}^{M_f}\max_{1\leq l\leq L} \bigg\{\left(\hat{\mS}_{\mathrm{N}}(\mx;k_f)^*\mU_m(k_f)\right)\left(\hat{\mS}_{\mathrm{N}}(\mx;k_f)^*\overline{\mV}_m(k_f)\right)\bigg\}},
\end{equation}
where $M_f$ is number of nonzero singular values of MSR matrix $\mathbb{K}$ at $k_f$ for $f=1,2,\cdots,F$.

\begin{rem}
  For the near-field data case, the illumination vector (\ref{VecED}) of MSR matrix $\mathbb{K}(k)$ (\ref{MSRD}) is (see \cite{HHSZ})
  \[\mathbb{E}_{\mathrm{D}}(\hat{\mx},\my;k)=\bigg[\Phi(-\vt_1,\my;k),\Phi(-\vt_2,\my;k),\cdots,\Phi(-\vt_N,\my;k)\bigg]^T.\]
  Therefore, it is natural to use the vector $\mS_{\mathrm{D}}(\mx;k)$ of (\ref{VecD}) for establishing imaging functional (\ref{ImagingFunctionD}) as follows:
  \begin{equation}\label{VecD2}
    \mS_{\mathrm{D}}(\mx;k)=\bigg[\Phi(-\vt_1,\mx;k),\Phi(-\vt_2,\mx;k),\cdots,\Phi(-\vt_N,\mx;k)\bigg]^T.
  \end{equation}
  Note that for sufficiently large $k$,
  \[\Phi(\mx,\my;k)=\frac{e^{i\frac{\pi}{4}}}{\sqrt{8k\pi}}e^{-ik\hat{\my}\cdot\mx}+o\left(\frac{1}{\sqrt{|\my|}}\right),\]
  hence the result will be similar to (\ref{ImagingFunctionD}).

  For the Neumann boundary case, the vector $\mS_{\mathrm{N}}(\mx;k)$ is selected as: \[\mS_{\mathrm{N}}(\mx;k)=\left[\frac{\p\Phi(-\vt_1,\mx;k)}{\p\vn_l(\mx)},\frac{\p\Phi(-\vt_2,\mx;k)}{\p\vn_l(\mx)}, \cdots,\frac{\p\Phi(-\vt_N,\mx;k)}{\p\vn_l(\mx)}\right]^T.\]
  Since for sufficiently large $k$,
  \[\frac{\p\Phi(\mx,\my;k)}{\p\vn(\my)}=-\frac{e^{i\frac{\pi}{4}}}{\sqrt{8k\pi}} \bigg(ik\hat{\my}\cdot\vn(\my)\bigg)e^{-ik\hat{\my}\cdot\mx}+o\left(\frac{1}{\sqrt{|\my|}}\right),\]
  the result will be similar to (\ref{ImagingFunctionN}).
\end{rem}

\section{Analysis of multi-frequency subspace migration imaging functionals: full-view case}\label{Sec3}
\subsection{Common features}\label{Sec3-1}
Based on the statistical hypothesis testing, image functionals (\ref{ImagingFunctionD}) and (\ref{ImagingFunctionN}) at single frequency offers an image with poor resolution hence one must applied sufficiently large $F$, i.e, application of multi-frequency is needed (see \cite{AGKPS,HSZ,P3,P4,PL2,PP} for instance). However, some phenomena such as appearance of unexpected ghost replicas cannot be explained. From now on, we carefully analyze the structure of (\ref{ImagingFunctionD}) and (\ref{ImagingFunctionN}). For this, we shall prove two useful identities in Theorem \ref{TheoremBessel}. Note that this result is derived in \cite[Lemma 4.1]{G} but our approach is different.

\begin{thm}\label{TheoremBessel}
  For sufficiently large $N$, following relations holds
  \begin{align*}
    &\frac{1}{N}\sum_{n=1}^{N}e^{ik\vt_n\cdot\mx}=\frac{1}{2\pi}\int_{\mathbb{S}^1}e^{ik\vt\cdot\mx}d\vt=J_0(k|\mx|)\\
    &\frac{1}{N}\sum_{n=1}^{N}\vt_n\cdot\vx e^{ik\vt_n\cdot\mx}=\frac{1}{2\pi}\int_{\mathbb{S}^1}\vt\cdot\vx e^{ik\vt\cdot\mx}d\vt=i\left(\frac{\mx}{|\mx|}\cdot\vx\right)J_1(k|\mx|),
  \end{align*}
  where $\vx\in\mathbb{R}^2$, and $\vt_n\in\mathbb{S}^1$, $n=1,2,\cdots,N$. Here, $J_n(\cdot)$ denotes the Bessel function of integer order $n$ of the first kind.
\end{thm}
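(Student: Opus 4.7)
\medskip

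\noindent\textbf{Proof proposal.} The plan is to verify both identities first in the integral form and then handle the sum-to-integral step separately. For the latter, when the $\vt_n\in\mathbb{S}^1$ form an equi-spaced discretization of the unit circle, the average $\frac{1}{N}\sum_{n=1}^N f(\vt_n)$ is a trapezoidal-rule approximation to $\frac{1}{2\pi}\int_{\mathbb{S}^1}f(\vt)\,d\vt$. Both integrands here, $e^{ik\vt\cdot\mx}$ and $\vt\cdot\vx\,e^{ik\vt\cdot\mx}$, are smooth and $2\pi$-periodic in the angular variable, so the trapezoidal rule has spectral accuracy and the sums converge to the integrals as $N\to\infty$, uniformly in $\mx$ on any compact set (with $N$ needing only to be large relative to $k|\mx|$). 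It therefore suffices to evaluate the two angular integrals.

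For the first identity I would pass to polar coordinates, writing $\vt=(\cos\theta,\sin\theta)$ and $\mx=|\mx|(\cos\phi,\sin\phi)$, so that $\vt\cdot\mx=|\mx|\cos(\theta-\phi)$. Substituting $\tau=\theta-\phi$ and using $2\pi$-periodicity of the integrand reduces the expression to the classical integral representation
$$ J_0(z)=\frac{1}{2\pi}\int_0^{2\pi}e^{iz\cos\tau}\,d\tau $$
at $z=k|\mx|$, which gives the first identity directly.

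For the second identity, set $\vx=|\vx|(\cos\alpha,\sin\alpha)$ and decompose the inner product along and transverse to $\mx/|\mx|$ via
$$ \cos(\theta-\alpha)=\cos(\theta-\phi)\cos(\phi-\alpha)+\sin(\theta-\phi)\sin(\phi-\alpha). $$
After substituting $\tau=\theta-\phi$, the sine contribution $\int_0^{2\pi}\sin\tau\,e^{ik|\mx|\cos\tau}\,d\tau$ vanishes by the antisymmetry $\tau\mapsto-\tau$ (the exponential is even, the sine is odd). The surviving cosine contribution factors as $|\vx|\cos(\phi-\alpha)$ times $\frac{1}{2\pi}\int_0^{2\pi}\cos\tau\,e^{ik|\mx|\cos\tau}\,d\tau$, which I would compute by differentiating the first identity with respect to $z=k|\mx|$ and invoking $J_0'(z)=-J_1(z)$; this yields $iJ_1(k|\mx|)$. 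Finally, recognising the coordinate-free identification $|\vx|\cos(\phi-\alpha)=\frac{\mx}{|\mx|}\cdot\vx$ assembles the claimed right-hand side $i\bigl(\frac{\mx}{|\mx|}\cdot\vx\bigr)J_1(k|\mx|)$.

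The main obstacle is essentially notational bookkeeping: juggling the three angles $\theta,\phi,\alpha$ and confirming the coordinate-free form of $|\vx|\cos(\phi-\alpha)$. The one point worth stating carefully is the mode of convergence in the sum-to-integral step, which, as noted above, follows from spectral accuracy of the trapezoidal rule on $\mathbb{S}^1$ for smooth periodic integrands; no genuinely hard analytic step arises.
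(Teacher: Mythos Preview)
Your proof is correct, but the route differs from the paper's. The paper proceeds via the Jacobi--Anger expansion
\[
e^{iz\cos\phi}=J_0(z)+2\sum_{n=1}^{\infty}i^nJ_n(z)\cos(n\phi),
\]
and then integrates term by term over $[0,2\pi]$: for the first identity only the $J_0$ term survives; for the second, the product $\cos(\theta-\xi)\cos n(\theta-\phi)$ is integrated explicitly and shown to vanish unless $n=1$. Your approach is more direct: you invoke the classical integral representation of $J_0$ outright, and obtain the $J_1$ term by differentiating under the integral and using $J_0'=-J_1$, together with a parity argument to kill the transverse sine piece. This is shorter and avoids any series manipulation. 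The paper's Jacobi--Anger route, on the other hand, is set up with later reuse in mind: the same expansion is the engine behind the limited-view analogue (integration over an arc $[\alpha,\beta]\subsetneq[0,2\pi]$), where your shift $\tau=\theta-\phi$ no longer reduces matters to a standard Bessel integral and the higher modes $J_n$ genuinely contribute.
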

\begin{proof}
  We write $\vt:=(\cos\theta,\sin\theta)$, $\vx=\rho(\cos\xi,\sin\xi)$ and $\mx=r(\cos\phi,\sin\phi)$. Then since following Jacobi-Anger expansion holds uniformly (see \cite{CK}),
  \begin{equation}\label{JAExp}
    e^{iz\cos\phi}=J_0(z)+2\sum_{n=1}^{\infty}i^nJ_n(z)\cos(n\phi),
  \end{equation}
  elementary calculus yields
  \begin{align*}
    \int_{\mathbb{S}^1}e^{ik\vt\cdot\mx}d\vt&=\int_{0}^{2\pi}e^{ikr\cos(\theta-\phi)}d\theta\\
    &\approx2\pi J_0(kr)+2\sum_{n=1}^{\infty}i^nJ_n(kr)\int_{0}^{2\pi}\cos n(\theta-\phi)d\theta\\
    &=2\pi J_0(kr)+4\sum_{n=1}^{\infty}\frac{i^n}{n}J_n(kr)\cos\frac{n(2\pi-2\phi)}{2}\sin\frac{2n\pi}{2}=2\pi J_0(kr).
  \end{align*}
  Therefore,
  \[\frac{1}{2\pi}\int_{\mathbb{S}^1}e^{ik\vt\cdot\mx}d\vt=J_0(kr)=J_0(k|\mx|).\]

  Similarly, we can write
  \begin{align*}
    \int_{\mathbb{S}^1}\vt\cdot\vx e^{ik\vt\cdot\mx}d\vt&=\int_0^{2\pi}\rho\cos(\theta-\xi)e^{ikr\cos(\theta-\phi)}d\theta\\
    &\approx\int_0^{2\pi}\rho\cos(\theta-\xi)\left(J_0(kr)+2\sum_{n=1}^{\infty}i^nJ_n(kr)\cos n(\theta-\phi)\right)d\theta\\
    &=2\rho\sum_{n=1}^{\infty}i^nJ_n(kr)\int_0^{2\pi}\cos(\theta-\xi)\cos n(\theta-\phi)d\theta.
  \end{align*}
  If $n=1$ then elementary calculus yields
  \begin{align*}
    \int_0^{2\pi}\cos(\theta-\xi)\cos(\theta-\phi)d\theta&=\left[\frac{\theta}{2}\cos(\phi-\xi)+\frac14\sin(2\theta-\phi-\xi)\right]_0^{2\pi}=\pi\cos(\phi-\xi).
  \end{align*}
  If $n\geq2$ then since $\sin(1-n)\pi=\sin(1+n)\pi=0$,
  \begin{align*}
    \int_0^{2\pi}\cos(\theta-\xi)\cos n(\theta-\phi)d\theta=&\left[\frac{\sin\{(1-n)\theta+n\phi-\xi\}}{2(1-n)}
    +\frac{\sin\{(1+n)\theta-n\phi-\xi\}}{2(1+n)}\right]_0^{2\pi}\\
    =&\frac{\sin\{(1-n)\pi\}\cos\{(1-n)\pi+n\phi-\xi\}}{1-n}\\
    &+\frac{\sin\{(1+n)\pi\}\cos\{(1+n)\pi-n\phi-\xi\}}{1-n}=0.
  \end{align*}
  Therefore,
  \[\frac{1}{2\pi}\int_{\mathbb{S}^1}\vt\cdot\vx e^{ik\vt\cdot\mx}d\vt=i\rho\cos(\phi-\xi)J_1(kr)=i\frac{r\rho}{r}\cos(\phi-\xi)J_1(kr)
    =i\left(\frac{\mx}{|\mx|}\cdot\vx\right)J_1(k|\mx|).\]
  This completes the proof.
\end{proof}

\subsection{Dirichlet boundary condition (TM) case}
First, we consider the imaging functional (\ref{ImagingFunctionD}). Applying Lemma \ref{TheoremBessel}, we can obtain following results:
\begin{thm}\label{StructureImagingFunctional}
  For sufficiently large $N$ and $F$, (\ref{ImagingFunctionD}) can be written as follows:
  \begin{enumerate}
    \item If $k_F<+\infty$ then
      \begin{multline}\label{StructureD1}
        \mathcal{I}_{\mathrm{D}}(\mx;F)=\left|\sum_{m=1}^{M}\left[\frac{k_F}{k_F-k_1}\bigg(J_0(k_F|\mx-\my_m|)^2+J_1(k_F|\mx-\my_m|)^2\bigg)\right.\right.\\
        \left.\left.-\frac{k_1}{k_F-k_1}\bigg(J_0(k_1|\mx-\my_m|)^2+J_1(k_1|\mx-\my_m|)^2\bigg)+\int_{k_1}^{k_F}J_1(k|\mx-\my_m|)^2dk\right]\right|.
      \end{multline}
      Moreover, if $F=1$ then (\ref{ImagingFunctionD}) becomes
      \[\mathcal{I}_{\mathrm{D}}(\mx;1)=\sum_{m=1}^{M}J_0(k_f|\mx-\my_m|)^2.\]
  \item If $k_F\longrightarrow+\infty$ then
      \begin{equation}\label{StructureD2}
        \mathcal{I}_{\mathrm{D}}(\mx;F)=\chi(\Gamma),
      \end{equation}
      where $\chi$ denotes the characteristic function.
  \end{enumerate}
\end{thm}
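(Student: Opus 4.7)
The plan is to push the Singular Value Decomposition (\ref{SVD}) through the symmetric factorization (\ref{MatrixK}) to express the singular vectors in terms of $\hat{\mS}_{\mathrm{D}}(\my_m;k)$, then convert the resulting inner products into Bessel functions via Theorem~\ref{TheoremBessel}, and finally turn the average over $f$ into a Riemann sum that can be evaluated in closed form through a Bessel-function identity.

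First, I would exploit that $\mathbb{K}(k)$ is complex symmetric: comparing (\ref{SVD}) with (\ref{MatrixK}) and invoking (\ref{Approx1}) gives $\mU_m(k) = e^{i\alpha_m(k)}\hat{\mS}_{\mathrm{D}}(\my_m;k)$ and, by Takagi-type symmetry, $\overline{\mV}_m(k) = \mU_m(k)$ up to a unimodular factor. Substituting into (\ref{ImagingFunctionD}) produces
\[\mathcal{I}_{\mathrm{D}}(\mx;F) \approx \frac{1}{F}\left|\sum_{f=1}^{F}\sum_{m=1}^{M_f} e^{2i\alpha_m(k_f)}\bigl(\hat{\mS}_{\mathrm{D}}(\mx;k_f)^{*}\hat{\mS}_{\mathrm{D}}(\my_m;k_f)\bigr)^{2}\right|.\]
Because $|\mS_{\mathrm{D}}(\mx;k)|=\sqrt{N}$, the inner product unfolds to $N^{-1}\sum_n e^{ik_f\vt_n\cdot(\my_m-\mx)}$, which by the first identity of Theorem~\ref{TheoremBessel} tends to $J_0(k_f|\mx-\my_m|)$ for $N$ large. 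The phase $e^{2i\alpha_m(k_f)}$ can be pulled out of the modulus because, by (\ref{Approx2}), at a given $\mx$ essentially one index $m$ contributes significantly so interference between summands is negligible. In particular, for $F=1$ the summands are non-negative and one reads off $\mathcal{I}_{\mathrm{D}}(\mx;1) = \sum_{m=1}^{M} J_0(k_f|\mx-\my_m|)^{2}$ at once.

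For general $F$ with $k_F<+\infty$, I treat $F^{-1}\sum_{f=1}^{F}$ as a Riemann sum approximating $(k_F-k_1)^{-1}\int_{k_1}^{k_F}\,dk$, so the task reduces to computing $\int_{k_1}^{k_F}J_0(kr)^{2}\,dk$ with $r=|\mx-\my_m|$. Using $J_0'(z)=-J_1(z)$ and $J_1'(z)=J_0(z)-z^{-1}J_1(z)$, a short computation gives the key identity
\[\frac{d}{dk}\bigl[k\bigl(J_0(kr)^{2}+J_1(kr)^{2}\bigr)\bigr] = J_0(kr)^{2}-J_1(kr)^{2}.\]
Integrating from $k_1$ to $k_F$ and solving for $\int_{k_1}^{k_F}J_0(kr)^{2}\,dk$ produces exactly the three-term combination appearing in (\ref{StructureD1}). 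The main obstacle is the phase-absorption step: the modulus does not commute with the double sum in general, and it is essential that (\ref{Approx2}) suppresses cross terms $m\neq m'$ at each $\mx$; I would therefore state the result as an asymptotic equality controlled by the cross-term residue.

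Finally, for the limit $k_F\to+\infty$, I would combine the asymptotics $J_n(z)\sim\sqrt{2/(\pi z)}\cos(z-n\pi/2-\pi/4)$ with the shrinking Rayleigh resolution $\lambda/2 = \pi/k_F$. For a fixed $\mx\notin\Gamma$ every $r=|\mx-\my_m|$ stays bounded away from $0$, so both boundary terms $\tfrac{k_{F,1}}{k_F-k_1}(J_0^{2}+J_1^{2})$ decay like $O(r^{-1})$ times a factor tending to zero, and the $J_1^{2}$ term averages to zero as well; thus $\mathcal{I}_{\mathrm{D}}(\mx;F)\to 0$. For $\mx\in\Gamma$, the segmentation becomes arbitrarily fine and some $\my_m$ coincides with $\mx$, so $J_0(0)=1$, $J_1(0)=0$ contribute the value $1$. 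Combining these two regimes yields exactly $\chi(\Gamma)$, proving (\ref{StructureD2}).
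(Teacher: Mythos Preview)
Your argument follows essentially the same route as the paper: replace the singular vectors by the test vectors $\hat{\mS}_{\mathrm{D}}(\my_m;k)$, convert the resulting inner products to $J_0(k_f|\mx-\my_m|)$ via Theorem~\ref{TheoremBessel}, turn the frequency average into a Riemann integral over $[k_1,k_F]$, and evaluate $\int J_0(kr)^2\,dk$ through the same Bessel identity (which the paper simply quotes from \cite{R} rather than deriving from the recursions as you do); the asymptotic treatment of the $k_F\to\infty$ case is likewise parallel. The one place you are more explicit is the unimodular phase $e^{2i\alpha_m}$ and your ``one $m$ dominates'' justification for discarding it---the paper simply absorbs these phases into the $\approx$ sign when substituting $\mU_m,\overline{\mV}_m\to\overline{\hat{\mS}_{\mathrm{D}}(\my_m;k)}$ without comment, so both treatments are heuristic at exactly this step.
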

\begin{proof}
  We assume that for every $f$, number of non-zero singular values $M_f$ is almost equal to $M$. Then, based on (\ref{Approx1}), (\ref{Approx2}), and (\ref{SVD}), we can observe that
  \begin{align*}
    \mathcal{I}_{\mathrm{D}}(\mx;F)\approx&\frac{1}{F}\abs{\sum_{f=1}^{F}\sum_{m=1}^{M}  \left(\hat{\mS}_{\mathrm{D}}(\mx;k_f)^*\overline{\hat{\mS}_{\mathrm{D}}(\my_m;k_f)}\right)\left(\hat{\mS}_{\mathrm{D}}(\mx;k_f)^*\overline{\hat{\mS}_{\mathrm{D}}(\my_m;k_f)}\right)}\\
    =&\frac{1}{F}\abs{\sum_{f=1}^{F}\sum_{m=1}^{M}\left(\sum_{s=1}^{N}e^{ik_f\vt_s\cdot(\mx-\my_m)}\right)\left(\sum_{t=1}^{N}e^{ik_f\vt_t\cdot(\mx-\my_m)}\right)}\\
    =&\frac{1}{4\pi^2F}\abs{\sum_{f=1}^{F}\sum_{m=1}^{M}\left(\int_{\mathbb{S}^1}e^{ik_f\vt\cdot(\mx-\my_m)}\right)^2d\vt} =\frac{1}{F}\abs{\sum_{f=1}^{F}\sum_{m=1}^{M}J_0(k_f|\mx-\my_m|)^2}\\
    =&\frac{1}{k_F-k_1}\abs{\sum_{m=1}^{M}\int_{k_1}^{k_F}J_0(k|\mx-\my_m|)^2dk}.
  \end{align*}
  \begin{enumerate}
    \item Since $k_F<+\infty$, based on following indefinite integral (see \cite[Page 35]{R})
      \[\int J_0(t)^2dt=t\bigg(J_0(t)^2+J_1(t)^2\bigg)+\int J_1(t)^2dt\]
      with change of variable $t=k|\mx-\my_m|$ yields
      \begin{multline*}
        \mathcal{I}_{\mathrm{D}}(\mx;F)\approx\frac{1}{k_F-k_1}\left|\sum_{m=1}^{M}\left[k_F\bigg(J_0(k_F|\mx-\my_m|)^2+J_1(k_F|\mx-\my_m|)^2\bigg)\right.\right.\\
        \left.\left.-k_1\bigg(J_0(k_1|\mx-\my_m|)^2+J_1(k_1|\mx-\my_m|)^2\bigg)+\int_{k_1}^{k_F}J_1(k|\mx-\my_m|)^2dk\right]\right|.
      \end{multline*}
      Note that this proof can be found in \cite{JKHP}.
    \item If $\mx=\my_m$ then it is clear that $\mathcal{I}_{\mathrm{D}}(\mx;F)=1$. Suppose that $\mx\ne\my_m$ then
    \begin{multline*}
      \lim_{k_F\to+\infty}\int_{k_1}^{k_F}J_0(k|\mx-\my_m|)^2dk\\
      \approx\lim_{k_F\to+\infty}\left[k_F\bigg(J_0(k_F|\mx-\my_m|)^2+J_1(k_F|\mx-\my_m|)^2\bigg)+\int_{k_1}^{k_F}J_1(k|\mx-\my_m|)^2dk\right].
    \end{multline*}
    Since following asymptotic form holds for $k|\mx-\my_m|\gg|n^2-0.25|$,
    \begin{equation}\label{ApproximationBesselFunction}
      J_{n}(k|\mx-\my_m|)\approx\sqrt{\frac{2}{k\pi|\mx-\my_m|}}\cos\left(k|\mx-\my_m|-\frac{n\pi}{2}-\frac{\pi}{4} +O\left(\frac{1}{k|\mx-\my_m|}\right)\right),
    \end{equation}
    we can observe that
    \begin{align*}
      \lim_{k_F\to+\infty}\frac{1}{k_F-k_1}&\bigg(k_FJ_0(k_F|\mx-\my_m|)^2\bigg)\\
      &\approx\lim_{k_F\to+\infty}\frac{1}{k_F-k_1}\left[\frac{2}{\pi|\mx-\my_m|}\cos^2\left(k_F|\mx-\my_m|-\frac{\pi}{4}\right)\right]=0,\\
      \lim_{k_F\to+\infty}\frac{1}{k_F-k_1}&\bigg(k_FJ_1(k_F|\mx-\my_m|)^2\bigg)\\
      &\approx\lim_{k_F\to+\infty}\frac{1}{k_F-k_1}\left[\frac{2}{\pi|\mx-\my_m|}\cos^2\left(k_F|\mx-\my_m|-\frac{3\pi}{4}\right)\right]=0,\\
      \lim_{k_F\to+\infty}\frac{1}{k_F-k_1}&\int_{k_1}^{k_F}J_1(k|\mx-\my_m|)^2dk\\
      &\approx\lim_{k_F\to+\infty}\frac{1}{k_F-k_1}\int_{k_1}^{k_F}\left[\frac{2}{k\pi|\mx-\my_m|}\cos^2\left(k_F|\mx-\my_m|-\frac{3\pi}{4}\right)\right]dk\\
      &\leq\frac{2}{\pi|\mx-\my_m|}\lim_{k_F\to+\infty}\frac{\ln k_F-\ln k_1}{k_F-k_1}=0.
    \end{align*}
    Hence, (\ref{StructureD2}) can be derived.
  \end{enumerate}
\end{proof}

\begin{rem}
  Note that the last term of (\ref{StructureD1}) does not significantly contribute the imaging performance because \[\mathcal{I}_{\mathrm{D}}(\mx;F)=O(1)\quad\mbox{and}\quad\frac{1}{k_F-k_1}\int_{k_1}^{k_F}J_1(k|\mx-\my_m|)^2dk\ll O(1).\]
  Hence, Theorem \ref{StructureImagingFunctional} tells us that multi-frequency subspace migration (i.e., map of $\mathcal{I}_{\mathrm{D}}(\mx;F)$) yields better images owing to less oscillation than single-frequency subspace migration (i.e., map of $\mathcal{I}_{\mathrm{D}}(\mx;1)$) does. A detailed discussion can be found in \cite{JKHP}.
\end{rem}

\begin{rem}[Kirchhoff migration]
  Based on \cite{AGKPS}, classical Kirchhoff migration at single frequency can be written as
  \[\mathcal{I}_{\mathrm{KM}}(\mx):=\hat{\mS}_{\mathrm{D}}(\mx;k)^*\mathbb{K}(k)\overline{\hat{\mS}_{\mathrm{D}}(\mx;k)}.\]
  Then it can be represented as
  \begin{align*}
    \mathcal{I}_{\mathrm{KM}}(\mx)=&\hat{\mS}_{\mathrm{D}}(\mx;k)^*\left(\sum_{m=1}^{N}\sigma_m(k)\mU_m(k)\mV_m^*(k)\right)\overline{\hat{\mS}_{\mathrm{D}}(\mx;k)}\\
    \approx&\sum_{m=1}^{M}\sigma_m(k)\left(\hat{\mS}_{\mathrm{D}}(\mx;k)^*\overline{\hat{\mS}_{\mathrm{D}}(\my_m;k)}\right)\left(\hat{\mS}_{\mathrm{D}}(\mx;k)^*\overline{\hat{\mS}_{\mathrm{D}}(\my_m;k)}\right)\\
    &+\sum_{m=M+1}^{N}\sigma_m(k)\left(\hat{\mS}_{\mathrm{D}}(\mx;k)^*\overline{\hat{\mS}_{\mathrm{D}}(\my_m;k)}\right)\left(\hat{\mS}_{\mathrm{D}}(\mx;k)^*\overline{\hat{\mS}_{\mathrm{D}}(\my_m;k)}\right)\\
    =&\sum_{m=1}^{M}\sigma_m(k)J_0(k|\mx-\my_m|)^2+\sum_{m=M+1}^{N}\sigma_m(k)\bigg(1-J_0(k|\mx-\my_m|)^2\bigg),
  \end{align*}
  Hence, we can observe that $\mathcal{I}_{\mathrm{KM}}(\mx)$ plots magnitude $\sigma_m(k)$, $m=1,2,\cdots,M$, at the location $\my_m\in\Gamma$. However, it also produces (unexpected) magnitude $\sigma_m(k)(1-J_0(k|\mx-\my_m|)^2)$ at $\my_m\in\mathbb{R}^2\backslash\overline{\Gamma}$. This is the reason why subspace migration is an improved version of Kirchhoff migration. Multi-frequency Kirchhoff migration can be treated by a similar manner.
\end{rem}

Let us denote $N_{inc}$ and $N_{obs}$ be the number of incident and observation directions. At this moment, we assume that only $N_{obs}$ is sufficiently large. Then, we can obtain following result.
\begin{thm}\label{StructureImagingFunctionalS}
  For sufficiently large $N_{obs}$ and $k_F$, (\ref{ImagingFunctionD}) can be written as follows:
  \begin{enumerate}
    \item If $k_F<+\infty$ then
    \begin{align}
    \begin{aligned}\label{StructureDS1}
      \mathcal{I}_{\mathrm{D}}(\mx;F)\simeq&\left|\sum_{m=1}^{M}\sum_{s=1}^{N_{inc}}\left[\frac{k_F}{k_F-k_1}\bigg(J_0(k_F|\mx-\my_m|)^2+J_1(k_F|\mx-\my_m|)^2\bigg)\right.\right.\\
    &\left.\left.-\frac{k_1}{k_F-k_1}\bigg(J_0(k_1|\mx-\my_m|)^2+J_1(k_1|\mx-\my_m|)^2\bigg)+\Lambda_1(k_F,k_1,|\mx-\my_m|;\vt_s)\right]\right|.
    \end{aligned}
    \end{align}
    Here, $\Lambda_1(k_F,k_1,|\mx-\my_m|;\vt_s)$ is given by
    \[\Lambda_1(k_F,k_1,|\mx-\my_m|;\vt_s):=\frac{2}{k_F-k_1}\sum_{n=1}^{\infty}i^n\cos(n\hat{\theta}_s)\int_{k_1}^{k_F}J_0(k|\mx-\my_m|)J_n(k|\mx-\my_m|)dk\ll O(1),\]
    where
    \begin{equation}\label{thetahat}
      \hat{\theta}_s=\cos^{-1}\left(\frac{\vt_s\cdot(\mx-\my_m)}{|\vt_s\cdot(\mx-\my_m)|}\right).
    \end{equation}
    \item If $k_F\longrightarrow+\infty$ then
    \begin{equation}\label{StructureDS2}
      \mathcal{I}_{\mathrm{D}}(\mx;F)\simeq\sum_{m=1}^{M}\sum_{s=1}^{N_{inc}}\frac{1}{\sqrt{ |\mx-\my_m|^2-(\vt_s\cdot(\mx-\my_m))^2}},
  \end{equation}
  \end{enumerate}
  where $A\simeq B$ means that there exists a constant $C$ such that $A=BC$.
\end{thm}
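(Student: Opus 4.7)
The plan is to follow the blueprint of the proof of Theorem~\ref{StructureImagingFunctional} but to treat the observation-direction and incident-direction factors asymmetrically: only the observation-direction sum is replaced by a continuous integral (via Theorem~\ref{TheoremBessel}), while the incident-direction sum is kept finite and handled via the Jacobi--Anger expansion (\ref{JAExp}). I would first insert the SVD (\ref{SVD}) together with the identifications (\ref{Approx1})--(\ref{Approx2}) into (\ref{ImagingFunctionD}) to reach
\[
\mathcal{I}_{\mathrm{D}}(\mx;F)\approx\frac{1}{F}\left|\sum_{f=1}^{F}\sum_{m=1}^{M}\bigl(\hat{\mS}_{\mathrm{D}}(\mx;k_f)^{*}\hat{\mS}_{\mathrm{D}}(\my_m;k_f)\bigr)_{\mathrm{obs}}\bigl(\hat{\mS}_{\mathrm{D}}(\mx;k_f)^{*}\hat{\mS}_{\mathrm{D}}(\my_m;k_f)\bigr)_{\mathrm{inc}}\right|,
\]
where the ``obs'' factor, being a mean over $N_{obs}$ unit directions, collapses by Theorem~\ref{TheoremBessel} to $J_0(k_f|\mx-\my_m|)$, whereas the ``inc'' factor retains the finite form $\tfrac{1}{N_{inc}}\sum_{s=1}^{N_{inc}}e^{ik_f\vt_s\cdot(\mx-\my_m)}$.

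For part (1), I would expand the incident factor with (\ref{JAExp}) about the angle between $\vt_s$ and $\mx-\my_m$ (the $\hat{\theta}_s$ of (\ref{thetahat})) and multiply by $J_0(k_f|\mx-\my_m|)$ to obtain a $J_0^{2}$ leading term plus the double series $\tfrac{2}{N_{inc}}\sum_{s,n\ge 1}i^{n}\cos(n\hat{\theta}_s)\,J_0(k_fr_m)J_n(k_fr_m)$, with $r_m:=|\mx-\my_m|$. Replacing $\tfrac{1}{F}\sum_{f=1}^{F}$ by the Riemann integral $\tfrac{1}{k_F-k_1}\int_{k_1}^{k_F}dk$ and invoking the same primitive $\int J_0(t)^{2}dt=t(J_0(t)^{2}+J_1(t)^{2})+\int J_1(t)^{2}dt$ used in Theorem~\ref{StructureImagingFunctional} would give the two boundary pieces $k_F(J_0^{2}+J_1^{2})/(k_F-k_1)$ and $k_1(J_0^{2}+J_1^{2})/(k_F-k_1)$ of (\ref{StructureDS1}); the surviving $\int_{k_1}^{k_F}J_0(kr_m)J_n(kr_m)\,dk$ combine into exactly the $\Lambda_1$ of the statement (the overall factor $N_{inc}$ from summing the $s$-independent main terms is absorbed into the $\simeq$ constant, and the residual $\int J_1^{2}$ from the primitive is of the same size as $\Lambda_1$ and absorbed). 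The estimate $\Lambda_1\ll O(1)$ then follows by inserting (\ref{ApproximationBesselFunction}) into $J_0J_n$, applying a product-to-sum identity, and observing that the constant-in-$k$ contribution integrates to $O(\log(k_F/k_1)/(k_F-k_1))$ while the oscillatory piece is $O(1/(k_F r_m))$.

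For part (2), I would \emph{not} Jacobi--Anger-expand. Instead, I would keep the compact integrand $J_0(kr_m)e^{ik\vt_s\cdot(\mx-\my_m)}$, pass to the continuous integral, and let $k_F\to+\infty$ (absorbing the harmless $[0,k_1]$ low-frequency tail into the multiplicative constant permitted by $\simeq$). The classical Weber--Schafheitlin-type identity
\[
\int_0^{\infty}J_0(kr)\,e^{ik\alpha}\,dk=\frac{1}{\sqrt{r^{2}-\alpha^{2}}}\qquad(r>|\alpha|),
\]
applied with $r=|\mx-\my_m|$ and $\alpha=\vt_s\cdot(\mx-\my_m)$, then delivers (\ref{StructureDS2}) directly and simultaneously explains the ``ghost'' curves of the image: by Cauchy--Schwarz $|\alpha|\le r$ with equality precisely on the rays through $\my_m$ parallel to $\vt_s$, so the functional diverges exactly along those rays.

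The main obstacle will be part~(2): one must argue that neither the passage $k_F\to\infty$ nor the suppression of the low-frequency window $[0,k_1]$ alters the $\mx$-dependence of the limit, only an overall constant, and one must treat the regime $|\alpha|>r$ consistently with the outer modulus (there the Weber--Schafheitlin integral analytically continues to $-i\,\mathrm{sgn}(\alpha)/\sqrt{\alpha^{2}-r^{2}}$, whose modulus still matches the claimed formula). A secondary technicality in part~(1) is justifying the interchange of the infinite Jacobi--Anger series with the $k$-integration uniformly in $\mx$, so that the tail estimate $\Lambda_1\ll O(1)$ holds away from the ghost set.
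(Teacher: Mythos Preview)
Your proposal is essentially the paper's own argument: the same asymmetric handling of the two direction sums (observation sum $\to J_0$ via Theorem~\ref{TheoremBessel}, incident sum kept finite), the same Jacobi--Anger expansion together with the primitive $\int J_0^2=t(J_0^2+J_1^2)+\int J_1^2$ for part~(1), and the same integral identity for part~(2) (the paper's (\ref{BesselRelation}) with $\nu=0$ is exactly your Weber--Schafheitlin formula). Two minor deviations worth noting: the paper bounds $\Lambda_1$ by first truncating the Jacobi--Anger series at a large index $\mathcal{L}$ and then splitting into a near regime $k_F|\mx-\my_m|\ll\sqrt{\mathcal{L}+1}$ (handled via the power-series bound $J_n(x)\le|x|^n/(2^n n!)$) and a far regime (your asymptotic argument), rather than relying on (\ref{ApproximationBesselFunction}) throughout; and your concern about the case $|\alpha|>r$ is unnecessary, since $|\vt_s|=1$ gives $|\vt_s\cdot(\mx-\my_m)|\le|\mx-\my_m|$ always, a fact the paper records as (\ref{VecRelation}).
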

\begin{proof}
  Same as above, we assume that $M_f$ is almost equal to $M$ for every $f=1,2,\cdots,F$. Then, (\ref{ImagingFunctionD}) becomes
  \begin{align*}
    \mathcal{I}_{\mathrm{D}}(\mx;F)\approx&\frac{1}{F}\abs{\sum_{f=1}^{F}\sum_{m=1}^{M}  \left(\hat{\mS}_{\mathrm{D}}(\mx;k_f)^*\overline{\hat{\mS}_{\mathrm{D}}(\my_m;k_f)}\right)\left(\hat{\mS}_{\mathrm{D}}(\mx;k_f)^*\overline{\hat{\mS}_{\mathrm{D}}(\my_m;k_f)}\right)}\\
    =&\frac{1}{F}\abs{\sum_{f=1}^{F}\sum_{m=1}^{M}\left(\sum_{s=1}^{N_{inc}}e^{ik_f\vt_s\cdot(\mx-\my_m)}\right)\left(\sum_{t=1}^{N_{obs}}e^{ik_f\vt_t\cdot(\mx-\my_m)}\right)}\\
    =&\frac{1}{F}\abs{\sum_{f=1}^{F}\sum_{m=1}^{M}\sum_{s=1}^{N_{inc}}e^{ik_f\vt_s\cdot(\mx-\my_m)}J_0(k_f|\mx-\my_m|)}\\
    =&\frac{1}{k_F-k_1}\abs{\sum_{m=1}^{M}\sum_{s=1}^{N_{inc}}\int_{k_1}^{k_F}e^{ik\vt_s\cdot(\mx-\my_m)}J_0(k|\mx-\my_m|)dk}.
  \end{align*}
  Note that if $\mx=\my_m$, $\mathcal{I}_{\mathrm{D}}(\mx;F)\approx1$. Hence, we assume that $\mx\ne\my_m$.
  \begin{enumerate}
    \item Similar to the proof of Theorem \ref{TheoremBessel}, we let $\vt_s:=(\cos\theta_s,\sin\theta_s)$, $\mx-\my_m=r_m(\cos\phi,\sin\phi)$. Then applying Jacobi-Anger expansion (\ref{JAExp}), we can write
        \begin{align}
          \int_{k_1}^{k_F}e^{ik\vt_s\cdot(\mx-\my_m)}&J_0(k|\mx-\my_m|)dk=\int_{k_1}^{k_F}J_0(k|\mx-\my_m|)^2dk\label{StructureDS1-1}\\
          &+2\sum_{n=1}^{\infty}i^n\cos(n\hat{\theta}_s)\int_{k_1}^{k_F}J_0(k|\mx-\my_m|)J_n(k|\mx-\my_m|)dk\label{StructureDS1-2}.
        \end{align}
        Note that integration of (\ref{StructureDS1-1}) is derived in Theorem \ref{StructureImagingFunctional} hence we consider (\ref{StructureDS1-2}). Since all terms of (\ref{StructureDS1-1}) are convergent, (\ref{StructureDS1-2}) converges. For $x\in\mathbb{R}$, since following relation holds
        \begin{equation}\label{BoundednessofBesselFunction}
          J_n(x)\leq\frac{|x|^n}{2^n n!},
        \end{equation}
        applying H\"{o}lder's inequality, we can obtain
        \[\int_{k_1}^{k_F}J_0(k|\mx-\my_m|)J_n(k|\mx-\my_m|)dk\leq\frac{(k_F^{n+1}-k_1^{n+1})|\mx-\my_m|^n}{2^n(n+1)!}.\]
        Since
        \[\lim_{n\to\infty}i^n\cos(n\hat{\theta}_s)\int_{k_1}^{k_F}J_0(k|\mx-\my_m|)J_n(k|\mx-\my_m|)dk=0,\]
        for sufficiently large number $\cL$, (\ref{StructureDS1-2}) can be represented as
        \[2\sum_{n=1}^{\cL}i^n\cos(n\hat{\theta}_s)\int_{k_1}^{k_F}J_0(k|\mx-\my_m|)J_n(k|\mx-\my_m|)dk.\]
        Now, we assume that $\mx$ is close enough to $\my_m$ such that $k_F|\mx-\my_m|\ll\sqrt{\cL+1}$ then since
        \[\int_{k_1}^{k_F}J_0(k|\mx-\my_m|)J_n(k|\mx-\my_m|)dk\leq\frac{(k_F^{n+1}-k_1^{n+1})|\mx-\my_m|^n}{2^n(n+1)!}\ll\frac{k_F\sqrt{\cL+1}}{2^n(n+1)!},\]
        we can conclude that
        \begin{equation}\label{Boundednesso1}
          \left|\frac{2}{k_F-k_1}\sum_{n=1}^{\infty}i^n\cos(n\hat{\theta}_s)\int_{k_1}^{k_F}J_0(k|\mx-\my_m|)J_n(k|\mx-\my_m|)dk\right|\ll O(1).
        \end{equation}
        Suppose that $\mx$ is located away from $\my_m$ such that $k_F|\mx-\my_m|\gg|\cL^2-0.25|$, then for $n=1,2,\cdots,L$, applying asymptotic form (\ref{ApproximationBesselFunction}) yields
        \begin{multline*}
          \int_{k_1}^{k_F}J_0(k|\mx-\my_m|)J_{n}(k|\mx-\my_m|)dk \leq\int_{k_1}^{k_F}\sqrt{\frac{2}{k\pi|\mx-\my_m|}}\cos\left(k|\mx-\my_m|-\frac{n\pi}{2}-\frac{\pi}{4}\right)dk\\
          \leq\sqrt{\frac{2}{\pi|\mx-\my_m|}}(\sqrt{k_F}-\sqrt{k_1})\leq k_F\sqrt{\frac{2}{\pi k_F|\mx-\my_m|}}\ll k_F\sqrt{\frac{2}{\pi|\cL^2-0.25|}},
        \end{multline*}
        we can obtain (\ref{Boundednesso1}). Hence, (\ref{StructureDS1}) is derived.
    \item Since $\vt_s\in\mathbb{S}^1$, we can observe following relation
      \begin{equation}\label{VecRelation}
        |\mx-\my_m|^2-(\vt_s\cdot(\mx-\my_m))^2=|\mx-\my_m|^2\left[1-\left(\vt_s\cdot\frac{\mx-\my_m}{|\mx-\my_m|}\right)^2\right]\geq0.
      \end{equation}
      Applying (\ref{VecRelation}) with the following identity (see \cite{GR})
      \begin{equation}\label{BesselRelation}
        \int_0^\infty e^{iat}J_\nu(bt)dt=\frac{1}{\sqrt{b^2-a^2}}\bigg[\cos\left(\nu\sin^{-1}\frac{a}{b}\right)+i\sin\left(\nu\sin^{-1}\frac{a}{b}\right)\bigg]\quad\mbox{for}\quad a<b,
      \end{equation}
      we can evaluate following
      \begin{align*}
       \int_{k_1}^{k_F}e^{ik\vt_s\cdot(\mx-\my_m)}J_0(k|\mx-\my_m|)dk&\approx\int_0^{\infty}e^{ik\vt_s\cdot(\mx-\my_m)}J_0(k|\mx-\my_m|)dk\\
       &=\frac{1}{|\mx-\my_m|\sqrt{\displaystyle 1-\left(\vt_s\cdot\frac{\mx-\my_m}{|\mx-\my_m|}\right)^2}}
      \end{align*}
      Hence, (\ref{StructureDS2}) is obtained.
  \end{enumerate}
\end{proof}

\begin{rem}
  Theorems \ref{StructureImagingFunctional} and \ref{StructureImagingFunctionalS} tell us some properties of (\ref{ImagingFunctionD}) summarized as follows:

  \begin{enumerate}\renewcommand{\theenumi}{(D\arabic{enumi})}
    \item Application of multiple frequencies should guarantees an accurate shape of $\Gamma$ via (\ref{ImagingFunctionD}). Note that this fact can be identified via Statistical Hypothesis Testing \cite{AGKPS}.
    \item Based on (\ref{StructureDS2}), (\ref{ImagingFunctionD}) plots a large magnitude at $\mx$ satisfying
        \[\mx=\my_m\in\Gamma\quad\mbox{and}\quad\vt_s=\pm\frac{\mx-\my_m}{|\mx-\my_m|}.\]
        This means that (\ref{ImagingFunctionD}) produces not only the shape of $\Gamma$ but also unexpected ghost replicas. Hence, sufficiently large number of $N_{inc}$ and $N_{obs}$ is required for producing a good result.
  \end{enumerate}
\end{rem}

\subsubsection{Weighted multi-frequency imaging: an improvement}\label{subsec}
At this moment, we consider the following multi-frequency subspace migration imaging functional weighted by given frequency:
\begin{equation}\label{ImagingFunctionW}
\mathcal{I}_{\mathrm{W}}(\mx;F,p)=\frac{1}{F}\abs{\sum_{f=1}^{F}\sum_{m=1}^{M_f} (k_f)^p\left(\hat{\mS}_{\mathrm{D}}(\mx;k_f)^*\mU_m(k_f)\right)\left(\hat{\mS}_{\mathrm{D}}(\mx;k_f)^*\overline{\mV}_m(k_f)\right)},
\end{equation}
where $p$ is a positive integer. Based on recent work \cite{P3}, (\ref{ImagingFunctionW}) is an improved version of (\ref{ImagingFunctionD}) when $p=1$. We briefly introduce the structure of (\ref{ImagingFunctionW}) as follows.

\begin{thm}\label{TheoremStructureW1}
  For sufficiently large $N$ and $k_F$, (\ref{ImagingFunctionW}) can be written
  \begin{multline*}
      \mathcal{I}_{\mathrm{W}}(\mx;F,1)=\frac{1}{k_F-k_1}\left|\sum_{m=1}^{M}\frac{(k_F)^2}{2}\bigg(J_0(k_F|\mx-\my_m|)^2+J_1(k_F|\mx-\my_m|)^2\bigg)\right.\\
      -\left.\frac{(k_1)^2}{2}\bigg(J_0(k_1|\mx-\my_m|)^2+J_1(k_1|\mx-\my_m|)^2\bigg)\right|.
  \end{multline*}
\end{thm}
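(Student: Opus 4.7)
The plan is to mimic the derivation of Theorem \ref{StructureImagingFunctional} step by step, inserting the weight $k_f$ into the appropriate places, and then observing that the new weight turns the Bessel integral into something that admits a clean closed form without a residual $J_1^2$ remainder.

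First I would invoke the approximations (\ref{Approx1})--(\ref{Approx2}) together with the SVD (\ref{SVD}), assuming $M_f \approx M$ for all $f$, to replace $\mU_m(k_f)$ and $\overline{\mV}_m(k_f)$ in (\ref{ImagingFunctionW}) by $\hat{\mS}_{\mathrm{D}}(\my_m;k_f)$ up to a unimodular constant. Exactly as in Theorem \ref{StructureImagingFunctional}, each scalar product $\hat{\mS}_{\mathrm{D}}(\mx;k_f)^*\overline{\hat{\mS}_{\mathrm{D}}(\my_m;k_f)}$ becomes $\frac{1}{N}\sum_{n=1}^N e^{ik_f\vt_n\cdot(\mx-\my_m)}$, and an application of Theorem \ref{TheoremBessel} collapses the two inner products to $J_0(k_f|\mx-\my_m|)^2$. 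Carrying the weight $k_f$ through, one lands at
\begin{equation*}
\mathcal{I}_{\mathrm{W}}(\mx;F,1) \approx \frac{1}{F}\left|\sum_{f=1}^{F}\sum_{m=1}^{M} k_f\, J_0(k_f|\mx-\my_m|)^2\right|.
\end{equation*}

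Next I would pass from the Riemann sum over frequencies $\{k_f\}$ to the integral
\begin{equation*}
\mathcal{I}_{\mathrm{W}}(\mx;F,1) \approx \frac{1}{k_F-k_1}\left|\sum_{m=1}^{M}\int_{k_1}^{k_F} k\, J_0(k|\mx-\my_m|)^2\, dk\right|,
\end{equation*}
which is the natural continuous analogue once $F$ is large. The key observation, and the reason the $p=1$ weighting yields a genuine improvement over (\ref{ImagingFunctionD}), is the indefinite integral
\begin{equation*}
\int t\, J_0(t)^2\, dt = \frac{t^2}{2}\bigl(J_0(t)^2 + J_1(t)^2\bigr) + C,
\end{equation*}
which can be verified directly by differentiation using the recursions $J_0'(t) = -J_1(t)$ and $J_1'(t) = J_0(t) - J_1(t)/t$: the cross terms telescope and the unwanted $-t J_1(t)^2$ exactly cancels the contribution of the derivative of $\tfrac{t^2}{2}J_1(t)^2$. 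Performing the change of variable $t = k|\mx-\my_m|$ in the integral then gives a factor $|\mx-\my_m|^{-2}$ that cancels against the $t^2 = k^2|\mx-\my_m|^2$ prefactor in the antiderivative, leaving precisely $\frac{k_F^2}{2}(J_0(k_F|\mx-\my_m|)^2+J_1(k_F|\mx-\my_m|)^2) - \frac{k_1^2}{2}(J_0(k_1|\mx-\my_m|)^2+J_1(k_1|\mx-\my_m|)^2)$, which is the claimed expression.

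There is no serious technical obstacle: the hardest point is simply recognizing the right antiderivative, so that no residual $\int J_1^2 dk$ term survives, as it did in (\ref{StructureD1}). This cleanness is precisely what makes (\ref{ImagingFunctionW}) with $p=1$ an improvement: the leftover oscillatory integral that in Theorem \ref{StructureImagingFunctional} was bounded but non-negligible in the tail is absent here, and the envelope decays at the faster Bessel-asymptotic rate. I would close the proof by noting that the large-$k_F$ limit of the expression, combined with (\ref{ApproximationBesselFunction}), shows that the peaks continue to concentrate at $\mx = \my_m \in \Gamma$, just as in Theorems \ref{StructureImagingFunctional} and \ref{StructureImagingFunctionalS}.
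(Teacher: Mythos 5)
Your proposal is correct, and it is essentially the argument the paper relies on: the paper's own "proof" merely cites \cite[Theorem 4]{P3}, but the identical computation (reduction to $\frac{1}{k_F-k_1}\left|\sum_m\int_{k_1}^{k_F}kJ_0(k|\mx-\my_m|)^2dk\right|$ followed by the antiderivative $\int_0^x tJ_0(t)^2dt=\frac{x^2}{2}(J_0(x)^2+J_1(x)^2)$ from \cite{AS}) appears verbatim in the paper's proof of Theorem \ref{TheoremStructureW2}, and your verification of that identity via $J_0'=-J_1$, $J_1'=J_0-J_1/t$ is sound.
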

\begin{proof}
  See \cite[Theorem 4]{P3}.
\end{proof}

\begin{thm}\label{TheoremStructureW2}
  For sufficiently large $N_{obs}$ and $k_F$, (\ref{ImagingFunctionW}) can be written
    \begin{enumerate}
      \item For $k_F<+\infty$,
        \begin{multline*}
          \mathcal{I}_{\mathrm{W}}(\mx;F,1)=\frac{1}{k_F-k_1}\left|\sum_{m=1}^{M}\frac{(k_F)^2}{2}\bigg(J_0(k_F|\mx-\my_m|)^2+J_1(k_F|\mx-\my_m|)^2\bigg)\right.\\
          -\left.\frac{(k_1)^2}{2}\bigg(J_0(k_1|\mx-\my_m|)^2+J_1(k_1|\mx-\my_m|)^2\bigg)+\Lambda_5(k_F,k_1,|\mx-\my_m|;\vt_s)\right|,
        \end{multline*}
        where $\Lambda_5(k_F,k_1,|\mx-\my_m|;\vt_s)$ is represented as follows
        \[\Lambda_5(k_F,k_1,|\mx-\my_m|;\vt_s)=2\sum_{n=1}^{\infty}i^n\cos(n\hat{\theta}_s)\int_{k_1}^{k_F}kJ_0(k|\mx-\my_m|)J_n(k|\mx-\my_m|)dk.\]
        Here, $\hat{\theta}_s$ is given by (\ref{thetahat}).
      \item For $k_F\longrightarrow+\infty$,
        \begin{align*}
          \mathcal{I}_{\mathrm{W}}(\mx;F,1)=\left|\sum_{m=1}^{M}\sum_{s=1}^{N_{inc}}\frac{\vt_s\cdot(\mx-\my_m)}{\left(\displaystyle |\mx-\my_m|^2-(\vt_s\cdot(\mx-\my_m))^2\right)^{3/2}}\right|.
        \end{align*}
    \end{enumerate}
\end{thm}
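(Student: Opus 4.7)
The plan is to mirror the proof of Theorem \ref{StructureImagingFunctionalS} while carrying an extra factor of $k_f$ throughout. First I would use the SVD approximation (\ref{SVD}) together with the near-orthogonality relations (\ref{Approx1})--(\ref{Approx2}) to replace the singular vectors by $\hat{\mS}_{\mathrm{D}}(\my_m;k_f)$ (the phases $e^{\pm i\gamma_m}$ cancel inside the product). Writing the two inner products as separate sums over the $N_{inc}$ incident and $N_{obs}$ observation directions and converting the sum over $f$ into an integral on $[k_1,k_F]$, the hypothesis that $N_{obs}$ is sufficiently large allows me to invoke Theorem \ref{TheoremBessel} on the observation sum to collapse it to $J_0(k|\mx-\my_m|)$, leaving
\[
\mathcal{I}_{\mathrm{W}}(\mx;F,1)\approx \frac{1}{k_F-k_1}\abs{\sum_{m=1}^{M}\sum_{s=1}^{N_{inc}}\int_{k_1}^{k_F} k\,e^{ik\vt_s\cdot(\mx-\my_m)}J_0(k|\mx-\my_m|)\,dk}.
\]

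For part (1), I would substitute the Jacobi--Anger expansion (\ref{JAExp}) in the exponential, exactly as in Theorem \ref{StructureImagingFunctionalS}. The leading $J_0(k|\mx-\my_m|)^2$ term now carries an additional $k$, and the identity
\[
\int t\,J_0(t)^2\,dt=\frac{t^2}{2}\bigl(J_0(t)^2+J_1(t)^2\bigr),
\]
together with the substitution $t=k|\mx-\my_m|$, produces the first two bracketed terms of the stated expression. The surviving cross terms of the expansion match $\Lambda_5(k_F,k_1,|\mx-\my_m|;\vt_s)$ term by term; their subdominance relative to the leading $(k_F)^2,(k_1)^2$ contributions follows from the Bessel bound (\ref{BoundednessofBesselFunction}) for $\mx$ near $\my_m$ and from the asymptotic form (\ref{ApproximationBesselFunction}) for $\mx$ far from $\my_m$, both adapted to accommodate the extra $k$ in the integration measure.

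For part (2), I would extend the upper limit of integration to $+\infty$ and differentiate (\ref{BesselRelation}) at $\nu=0$ with respect to $a=\vt_s\cdot(\mx-\my_m)$, writing $b=|\mx-\my_m|$ so that the right-hand side reduces to $(b^2-a^2)^{-1/2}$. This yields
\[
\int_0^{\infty} k\,e^{ika}J_0(kb)\,dk=\frac{-i\,a}{(b^2-a^2)^{3/2}},
\]
whose modulus delivers the claimed formula after summation over $m$ and $s$. Observe that $|a|\le b$ by Cauchy--Schwarz, with equality precisely when $\vt_s$ is parallel to $\mx-\my_m$, so the denominator encodes the same ghost-replica directions already identified in the remark following Theorem \ref{StructureImagingFunctionalS}.

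The main obstacle will be justifying part (2) rigorously: the integrand $k\,J_0(kb)$ is not absolutely integrable on $[0,+\infty)$, since $J_0(kb)$ decays only like $k^{-1/2}$, so the integral converges only conditionally through oscillation and differentiation under the integral sign requires care. I would handle this by interpreting (\ref{BesselRelation}) and its $a$-derivative in the Abel-summable sense, inserting a damping factor $e^{-\eps k}$, differentiating the regularized identity, and passing $\eps\to 0^{+}$ by dominated convergence on the smoothed integrand. This is the same implicit regularization already used in Case 2 of Theorem \ref{StructureImagingFunctionalS}. Finally, a brief separate remark is needed at $\mx=\my_m$, where $b=0$ makes the closed-form limiting expression degenerate; the peak value there is read off directly from Case 1 by taking $k_F\to+\infty$.
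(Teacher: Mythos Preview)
Your proposal is correct and follows essentially the same route as the paper: reduction to the integral $\int_{k_1}^{k_F} k\,e^{ik\vt_s\cdot(\mx-\my_m)}J_0(k|\mx-\my_m|)\,dk$, then Jacobi--Anger plus $\int tJ_0(t)^2\,dt=\tfrac{t^2}{2}(J_0^2+J_1^2)$ for part (1), and a closed-form evaluation of the improper integral for part (2). The only notable difference is in part (2): the paper quotes the tabulated Laplace-type identity \cite[Formula 6.621-4]{GR}, namely $\int_0^\infty xe^{-ax}J_0(bx)\,dx=a(a^2+b^2)^{-3/2}$, and then substitutes $a\to -i\vt_s\cdot(\mx-\my_m)$, whereas you obtain the same integral by differentiating (\ref{BesselRelation}) in $a$. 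Your route is slightly more self-contained (it reuses an identity already in the paper rather than importing a new one), while the paper's route avoids the explicit differentiation-under-the-integral justification; in either case the convergence is only conditional and both arguments implicitly rely on the same $e^{-\eps k}$ regularization you describe.
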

\begin{proof}Similar to the proof of Theorem \ref{StructureImagingFunctionalS}, (\ref{ImagingFunctionW}) can be written as
\[\mathcal{I}_{\mathrm{W}}(\mx;F,1)=\frac{1}{k_F-k_1}\abs{\sum_{m=1}^{M}\sum_{s=1}^{N_{inc}}\int_{k_1}^{k_F}ke^{ik\vt_s\cdot(\mx-\my_m)}J_0(k|\mx-\my_m|)dk}.\]
\begin{enumerate}
  \item Borrowing the polar coordinate in the proof of Theorem \ref{StructureImagingFunctionalS} and Jacobi-Anger expansion (\ref{JAExp}), we can write
      \begin{multline*}
        \int_{k_1}^{k_F}ke^{ik\vt_s\cdot(\mx-\my_m)}J_0(k|\mx-\my_m|)dk\\
        =\int_{k_1}^{k_F}kJ_0(k|\mx-\my_m|)^2dk+2\sum_{n=1}^{\infty}i^n\cos(n\hat{\theta}_s)\int_{k_1}^{k_F}kJ_0(k|\mx-\my_m|)J_n(k|\mx-\my_m|)dk.
      \end{multline*}
      Following well-known indefinite integral (see \cite{AS})
      \[\int_0^xtJ_0(t)^2dt=\frac{x^2}{2}\bigg(J_0(x)^2+J_1(x)^2\bigg)\]
      yields
      \begin{multline*}
        \int_{k_1}^{k_F}kJ_0(k|\mx-\my_m|)^2dk=\frac{(k_F)^2}{2}\bigg(J_0(k_F|\mx-\my_m|)^2+J_1(k_F|\mx-\my_m|)^2\bigg)\\
        -\frac{(k_1)^2}{2}\bigg(J_0(k_1|\mx-\my_m|)^2+J_1(k_1|\mx-\my_m|)^2\bigg).
      \end{multline*}
      And similar to the proof of Theorem \ref{StructureImagingFunctionalS}, it is easy to observe that $\mathcal{I}_{\mathrm{W}}(\mx;F,1)=O(k_F)$ and
      \[\frac{1}{k_F-k_1}\Lambda_5(k_F,k_1,|\mx-\my_m|;\vt_s)\ll O(k_F).\]
  \item Let $k_F\longrightarrow+\infty$. Since following identify holds for $b>0$, $p>-q-2$ (see \cite[Formula 6.621-4]{GR})
  \[\int_0^\infty x^{q+1}e^{-ax}J_p(bx)=(-1)^{q+1}b^{-p}\frac{d^{q+1}}{da^{q+1}}\left[\frac{(\sqrt{a^2+b^2}-a)^p}{\sqrt{a^2+b^2}}\right],\]
  elementary calculus yields
  \[\int_0^{\infty}xe^{-ax}J_0(bx)=\frac{a}{(a^2+b^2)^{3/2}}.\]
  Hence, we can obtain
  \[\int_0^{\infty}ke^{ik\vt_s\cdot(\mx-\my_m)}J_0(k|\mx-\my_m|)dk=\frac{i\vt_s\cdot(\mx-\my_m)}{|\mx-\my_m|^3\left[\displaystyle 1-\left(\vt_s\cdot\frac{\mx-\my_m}{|\mx-\my_m|}\right)^2\right]^{3/2}}.\]
\end{enumerate}
\end{proof}

\subsection{Neumann boundary condition (TE) case}
Next, we consider the imaging functional (\ref{ImagingFunctionN}). It is worth mentioning that if we have \textit{a priori} information of $\Gamma$ (specially, $\vn(\mx)$ at $\mx\in\Gamma$) we can obtain same result in Theorem \ref{StructureImagingFunctional}. However, due to the fact that since the normal direction to $\Gamma$ is unknown, mapping of (\ref{ImagingFunctionN}) consumes large computational costs. Unfortunately, based on the results in section \ref{Sec4-3}, the results are still poor. Hence, we consider the following alternative subspace migration imaging functional:
\begin{equation}\label{ImagingFunctionA}
  \mathcal{I}_{\mathrm{A}}(\mx;F)=\frac{1}{F}\abs{\sum_{f=1}^{F}\sum_{m=1}^{M_f}\left(\hat{\mS}_{\mathrm{D}}(\mx;k_f)^*\mU_m(k_f)\right)\left(\hat{\mS}_{\mathrm{D}}(\mx;k_f)^*\overline{\mV}_m(k_f)\right)},
\end{equation}
where $\mS_{\mathrm{D}}(\mx;k_f)$ is defined in (\ref{VecD}). Then we can obtain following result.

\begin{thm}\label{StructureImagingFunctionalA}
  For sufficiently large $N$ and $F$, (\ref{ImagingFunctionA}) can be written as follows:
  \[\mathcal{I}_{\mathrm{A}}(\mx;F)=\frac{1}{k_F-k_1}\abs{\sum_{m=1}^{M}\int_{k_1}^{k_F}\left[\left(\frac{\mx-\my_m}{|\mx-\my_m|}\cdot\vn(\my_m)\right)J_1(k|\mx-\my_m|)\right]^2dk}.\]
  Furthermore,
  \begin{enumerate}
    \item If $\mx$ is close enough to $\my_m$, then
      \begin{equation}\label{PropertyIA}
        \mathcal{I}_{\mathrm{A}}(\mx;F)=\frac{(k_F)^3-(k_1)^3}{12(k_F-k_1)}\abs{\sum_{m=1}^{M}\bigg((\mx-\my_m)\cdot\vn(\my_m)\bigg)^2}.
      \end{equation}
    \item If $\mx$ is far away from $\my_m$, then
      \[\mathcal{I}_{\mathrm{A}}(\mx;F)\leq\frac{2}{\pi(k_F-k_1)}\abs{\sum_{m=1}^{M}\frac{((\mx-\my_m)\cdot\vn(\my_m))^2}{\sqrt{2k_F}|\mx-\my_m|^4}}.\]
  \end{enumerate}
\end{thm}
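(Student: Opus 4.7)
The plan is to follow the same template used in Theorems \ref{StructureImagingFunctional} and \ref{StructureImagingFunctionalS}, but now pairing the \emph{Neumann} singular vectors against the \emph{Dirichlet} test vector $\hat{\mS}_{\mathrm{D}}$ and invoking the \emph{second} identity of Theorem \ref{TheoremBessel} in place of the first. From the factorization (\ref{MSRN})--(\ref{VecEN}), the Neumann MSR matrix satisfies $\mathbb{K}(k_f)\approx\sum_{m}B_m(k_f)\mH_m(k_f)\mH_m(k_f)^T$ with $\mH_m(k_f)=e^{i\gamma_m}\hat{\mS}_{\mathrm{N}}(\my_m;k_f)$. Because $\mathbb{K}(k_f)$ is complex symmetric, its Takagi decomposition gives $\overline{\mV}_m=\mU_m$ up to a unit-modulus phase, so the two factors inside the summand of $\mathcal{I}_{\mathrm{A}}$ collapse into a single square, and after absorbing the phase into the outer $|\cdot|$ I am left with terms of the form $\bigl(\hat{\mS}_{\mathrm{D}}(\mx;k_f)^{*}\hat{\mS}_{\mathrm{N}}(\my_m;k_f)\bigr)^{2}$.

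Writing this inner product as $\frac{1}{N}\sum_{n}\vt_n\!\cdot\!\vn(\my_m)\,e^{ik_f\vt_n\cdot(\my_m-\mx)}$ and passing to the $\mathbb{S}^1$-average for large $N$, the second identity of Theorem \ref{TheoremBessel} with the substitutions $\mx\leftarrow\my_m-\mx$ and $\vx\leftarrow\vn(\my_m)$ yields $-i\bigl(\tfrac{\mx-\my_m}{|\mx-\my_m|}\!\cdot\!\vn(\my_m)\bigr)J_1(k_f|\mx-\my_m|)$. Squaring and taking moduli eliminate the $(-i)^{2}$ and the phase $e^{2i\gamma_m}$, leaving $\bigl(\tfrac{\mx-\my_m}{|\mx-\my_m|}\!\cdot\!\vn(\my_m)\bigr)^{2}J_1(k_f|\mx-\my_m|)^{2}$ per summand. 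Finally, I would interpret $\tfrac{1}{F}\sum_{f=1}^{F}$ as a Riemann sum on $[k_1,k_F]$ with mesh $(k_F-k_1)/F$, converting the $1/F$ prefactor into $1/(k_F-k_1)$ together with an integral $dk$; this produces the main identity asserted in the theorem.

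For the near-field regime I would invoke the power series $J_1(t)=t/2+O(t^{3})$, so that $J_1(k|\mx-\my_m|)^{2}\approx k^{2}|\mx-\my_m|^{2}/4$; integrating $k^{2}/4$ over $[k_1,k_F]$ gives $(k_F^{3}-k_1^{3})/12$, and combining the factor $|\mx-\my_m|^{2}$ with the geometric weight $\bigl(\tfrac{\mx-\my_m}{|\mx-\my_m|}\!\cdot\!\vn(\my_m)\bigr)^{2}$ collapses to $\bigl((\mx-\my_m)\!\cdot\!\vn(\my_m)\bigr)^{2}$, which is (\ref{PropertyIA}). For the far-field regime, the large-argument asymptotic (\ref{ApproximationBesselFunction}) with $n=1$ yields $J_1(k|\mx-\my_m|)^{2}\leq \tfrac{2}{\pi k|\mx-\my_m|}$; the geometric weight contributes $1/|\mx-\my_m|^{2}$, and the remaining frequency integral $\int_{k_1}^{k_F}k^{-1}dk$ must then be compared against a $k^{-1/2}$-type bound to extract the claimed $1/\sqrt{2k_F}$ tail, completing the stated inequality.

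The hard part will be that last far-field bound: a naive $\cos^{2}\leq 1$ inside the asymptotic expansion of $J_1^{2}$ produces only a logarithmic $\ln(k_F/k_1)$ dependence on the bandwidth, so recovering the specific $1/\sqrt{2k_F}$ factor requires either exploiting oscillation via $\cos^{2}(k|\mx-\my_m|-3\pi/4)=\tfrac{1}{2}+\tfrac{1}{2}\cos(2k|\mx-\my_m|-3\pi/2)$ with integration by parts on the oscillatory piece, or a Cauchy--Schwarz comparison tailored to the envelope $k^{-1/2}|\mx-\my_m|^{-1/2}$ of the asymptotic expansion. Steps 1--3 are algebraically parallel to the Dirichlet arguments already established, so it is this final envelope estimate, together with the bookkeeping on the range of validity of the asymptotic formula, that is expected to demand the most care.
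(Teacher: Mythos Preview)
Your derivation of the main identity and of the near-field case (\ref{PropertyIA}) is exactly the paper's: pair $\hat{\mS}_{\mathrm{D}}(\mx;k_f)$ against the Neumann singular vectors $\mU_m,\overline{\mV}_m\sim\hat{\mS}_{\mathrm{N}}(\my_m;k_f)$, invoke the second identity of Theorem \ref{TheoremBessel} to produce the factor $\bigl(\tfrac{\mx-\my_m}{|\mx-\my_m|}\cdot\vn(\my_m)\bigr)J_1(k_f|\mx-\my_m|)$, square, and replace $\tfrac{1}{F}\sum_f$ by $\tfrac{1}{k_F-k_1}\int_{k_1}^{k_F}dk$; the small-argument expansion $J_1(t)\approx t/2$ then gives (\ref{PropertyIA}) verbatim.

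The only methodological difference is in the far-field inequality. Rather than your half-angle/integration-by-parts or Cauchy--Schwarz plan, the paper substitutes (\ref{ApproximationBesselFunction}) directly, changes variables $t=k|\mx-\my_m|$, and then appeals to the Fresnel-type asymptotics from \cite{GR},
\[
\frac{1}{\sqrt{2\pi}}\int_0^x\frac{\sin t}{\sqrt{t}}\,dt=\frac12-\frac{\cos x}{\sqrt{2\pi x}}+O\!\left(\frac{1}{x}\right),\qquad
\frac{1}{\sqrt{2\pi}}\int_0^x\frac{\cos t}{\sqrt{t}}\,dt=\frac12+\frac{\sin x}{\sqrt{2\pi x}}+O\!\left(\frac{1}{x}\right),
\]
to evaluate $\int_{k_1}^{k_F}k^{-1/2}\cos\!\bigl(k|\mx-\my_m|-\tfrac{3\pi}{4}\bigr)\,dk$ in closed form; the $1/\sqrt{2k_F}$ and the additional $|\mx-\my_m|^{-1}$ (hence the final $|\mx-\my_m|^{-4}$) are read off from these boundary terms. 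Your diagnosis that oscillation rather than the crude $\cos^2\le1$ is what drives the bound is therefore on target; note, however, that in passing from $J_1^{2}$ to the displayed $k^{-1/2}\cos(\cdot)$ integrand the paper effectively drops a square on the asymptotic factor, so when you carry out your version you should track that step carefully rather than trying to force agreement with the stated $k_F^{-1/2}$ exponent.
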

\begin{proof}
  Same as the proof of Theorem \ref{StructureImagingFunctionalS}, we assume that for every $f$, number of non-zero singular values $M_f$ is almost equal to $M$. Then
  \begin{align*}
    \mathcal{I}_{\mathrm{A}}(\mx;F)&=\frac{1}{F}\abs{\sum_{f=1}^{F}\sum_{m=1}^{M_f}\left(\hat{\mS}_{\mathrm{D}}(\mx;k_f)^*\mU_m(k_f)\right)\left(\hat{\mS}_{\mathrm{D}}(\mx;k_f)^*\overline{\mV}_m(k_f)\right)}\\
    &=\frac{1}{F}\abs{\sum_{f=1}^{F}\sum_{m=1}^{M}\left(\sum_{s=1}^{N}\vt_s\cdot\vn(\my_m)e^{ik_f\vt_s\cdot(\mx-\my_m)}\right)\left(\sum_{t=1}^{N}\vt_t\cdot\vn(\my_m)e^{ik_f\vt_t\cdot(\mx-\my_m)}\right)}\\
    &=\frac{1}{4\pi^2F}\abs{\sum_{f=1}^{F}\sum_{m=1}^{M}\left(\int_{\mathbb{S}^1}\vt\cdot\vn(\my_m)e^{ik_f\vt\cdot(\mx-\my_m)}d\vt\right)^2}\\
    &=\frac{1}{F}\abs{\sum_{f=1}^{F}\sum_{m=1}^{M}\bigg[\left(\frac{\mx-\my_m}{|\mx-\my_m|}\cdot\vn(\my_m)\right)J_1(k_f|\mx-\my_m|)\bigg]^2}\\
    &=\frac{1}{k_F-k_1}\abs{\sum_{m=1}^{M}\int_{k_1}^{k_F}\left[\left(\frac{\mx-\my_m}{|\mx-\my_m|}\cdot\vn(\my_m)\right)J_1(k|\mx-\my_m|)\right]^2dk}.
  \end{align*}
  Unfortunately, there is no finite representation of the integral $\int J_1(x)^2 dx$. Therefore we cannot go further. In order to observe some properties of (\ref{ImagingFunctionA}), we consider the following two cases.
  \begin{enumerate}
    \item Assume that $\mx$ is close to $\my_m$ such that $0<k|\mx-\my_m|\ll\sqrt2$. Then applying asymptotic form of Bessel function
    \[J_1(k|\mx-\my_m|)\approx\frac{k|\mx-\my_m|}{2}\quad\mbox{for}\quad0<|\mx-\my_m|\ll\frac{\sqrt{2}}{k},\]
    we can observe that
    \begin{align*}
      \mathcal{I}_{\mathrm{A}}(\mx;F)&=\frac{1}{k_F-k_1}\abs{\sum_{m=1}^{M}\int_{k_1}^{k_F}\left[\left(\frac{\mx-\my_m}{|\mx-\my_m|}\cdot\vn(\my_m)\right)J_1(k|\mx-\my_m|)\right]^2dk}\\
      &=\frac{1}{k_F-k_1}\abs{\sum_{m=1}^{M}\bigg((\mx-\my_m)\cdot\vn(\my_m)\bigg)^2\int_{k_1}^{k_F}k^2dk}\\
      &=\frac{(k_F)^3-(k_1)^3}{12(k_F-k_1)}\abs{\sum_{m=1}^{M}\bigg((\mx-\my_m)\cdot\vn(\my_m)\bigg)^2}.
    \end{align*}
    \item Assume that $\mx$ is far away from $\my_m$ such that $k|\mx-\my_m|\gg|1-0.25|$. Then since (\ref{ApproximationBesselFunction}) can be approximated as follows
     \[J_1(k|\mx-\my_m|)\approx\sqrt{\frac{2}{k\pi|\mx-\my_m|}}\cos\left(k|\mx-\my_m|-\frac{3\pi}{4}\right)\quad\mbox{for}\quad |\mx-\my_m|\gg\frac{3}{4k},\]
    we can observe that
    \begin{align*}
      \mathcal{I}_{\mathrm{A}}(\mx;F)&=\frac{1}{k_F-k_1}\abs{\sum_{m=1}^{M}\int_{k_1}^{k_F}\left[\left(\frac{\mx-\my_m}{|\mx-\my_m|}\cdot\vn(\my_m)\right)J_1(k|\mx-\my_m|)\right]^2dk}\\
      &=\frac{2}{\pi(k_F-k_1)}\abs{\sum_{m=1}^{M}\frac{((\mx-\my_m)\cdot\vn(\my_m))^2}{|\mx-\my_m|^3}\int_{k_1}^{k_F}\frac{1}{\sqrt{k}}\cos\left(k|\mx-\my_m|-\frac{3\pi}{4}\right)dk}.
    \end{align*}
    Let $k|\mx-\my_m|=t$. Then since $t$ is sufficiently large, based on following asymptotic behavior (see \cite{GR})
    \begin{align*}
      \frac{1}{\sqrt{2\pi}}\int_0^x\frac{\sin t}{\sqrt{t}}dt&=\frac12-\frac{1}{\sqrt{2\pi x}}\cos x+O\left(\frac{1}{x}\right)\\
      \frac{1}{\sqrt{2\pi}}\int_0^x\frac{\cos t}{\sqrt{t}}dt&=\frac12+\frac{1}{\sqrt{2\pi x}}\sin x+O\left(\frac{1}{x}\right),
    \end{align*}
    we can obtain
    \begin{align*}
      \int_{k_1}^{k_F}&\frac{1}{\sqrt{k}}\cos\left(k|\mx-\my_m|-\frac{3\pi}{4}\right)dk=\frac{1}{\sqrt{2|\mx-\my_m|}}\int_{k_1|\mx-\my_m|}^{k_F|\mx-\my_m|}\frac{\sin t-\cos t}{\sqrt{t}}dt\\
      &=\frac{\cos(k_1|\mx-\my_m|)+\sin(k_1|\mx-\my_m|)}{\sqrt{2k_1}|\mx-\my_m|}-\frac{\cos(k_F|\mx-\my_m|)+\sin(k_F|\mx-\my_m|)}{\sqrt{2k_F}|\mx-\my_m|}.
    \end{align*}
    Hence,
    \[\mathcal{I}_{\mathrm{A}}(\mx;F)\leq\frac{2}{\pi(k_F-k_1)}\abs{\sum_{m=1}^{M}\frac{((\mx-\my_m)\cdot\vn(\my_m))^2}{\sqrt{2k_F}|\mx-\my_m|^4}}.\]
  \end{enumerate}
\end{proof}

Above result tells us that $\mathcal{I}_{\mathrm{A}}(\mx;F)=0$ at $\mx=\my_m\in\Gamma$ so that $\mathcal{I}_{\mathrm{A}}(\mx;F)$ should plots $0$ (or small values) along the crack(s). Moreover, based on (\ref{PropertyIA}), map of $\mathcal{I}_{\mathrm{A}}(\mx;F)$ gives two curves in the neighborhood of true crack(s). This means that although $\mathcal{I}_{\mathrm{A}}(\mx;F)$ does not produces image of crack(s), an approximate shape of crack(s) can be recognized from the images two-curves. See Figure \ref{FigIA} and various numerical examples in Section \ref{Sec4-3}.

\begin{figure}[!ht]
\begin{center}
\includegraphics[width=0.9\textwidth]{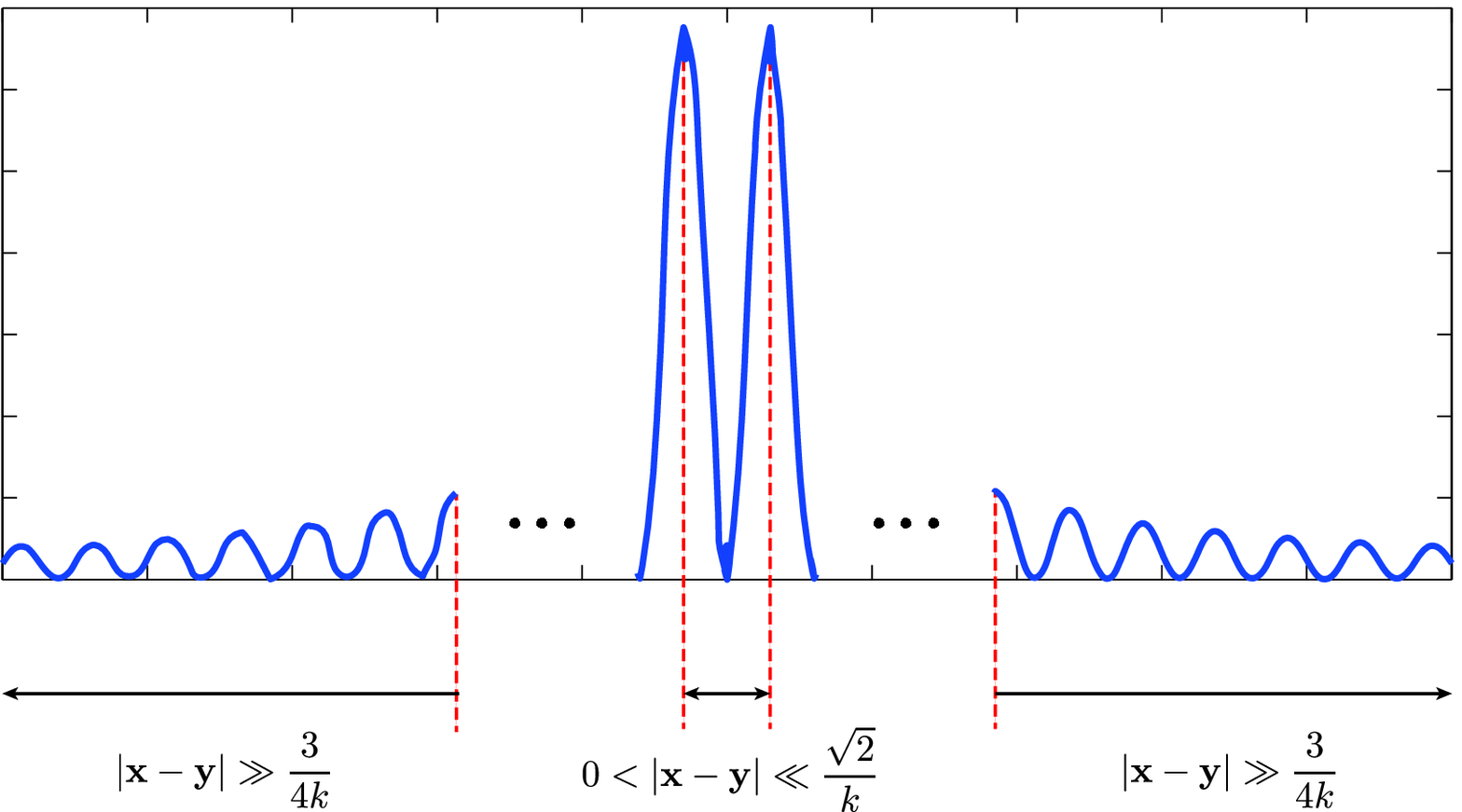}
\caption{\label{FigIA}1-D plot of $\mathcal{I}_{\mathrm{A}}(\mx;F)$.}
\end{center}
\end{figure}

Throughout a similar argument of Theorem \ref{StructureImagingFunctionalS}, we can obtain following result.
\begin{thm}\label{StructureImagingFunctionalSN}
  For sufficiently large $N_{obs}$ and $k_F$, (\ref{ImagingFunctionA}) can be written as follows:
  \begin{enumerate}
    \item If $k_F<+\infty$ then
    \begin{equation}
      \mathcal{I}_{\mathrm{A}}(\mx;F)=\frac{1}{k_F-k_1}\abs{\sum_{m=1}^{M}\sum_{s=1}^{N_{inc}}\bigg(\frac{\vt_s}{|\mx-\my_m|}\cdot\vn(\my_m)\bigg)\bigg(\frac{\mx-\my_m}{|\mx-\my_m|}\cdot\vn(\my_m)\bigg)\Lambda_2(\mx,\my_m;\vt_s)},
    \end{equation}
    where
    \begin{equation}\label{FunctionLambda2}
      \Lambda_2(\mx,\my_m;\vt_s)=\frac{1}{2}\bigg(J_0(k_1|\mx-\my_m|)^2-J_0(k_F|\mx-\my_m|)^2\bigg)+\Lambda_3(k_F,k_1,|\mx-\my_m|;\vt_s).
    \end{equation}
    Here, $\Lambda_3(k_F,k_1,|\mx-\my_m|;\vt_s)$ satisfies
    \[\Lambda_3(k_F,k_1,|\mx-\my_m|;\vt_s):=2\sum_{n=1}^{\infty}i^n\cos(n\hat{\theta}_s)\int_{k_1}^{k_F}J_1(k|\mx-\my_m|)J_n(k|\mx-\my_m|)dk,\]
    where $\hat{\theta}_s$ is given by (\ref{thetahat}).
    \item If $k_F\longrightarrow+\infty$ then
    \begin{equation}\label{StructureSN2}
      \mathcal{I}_{\mathrm{A}}(\mx;F)=\frac{1}{k_F-k_1}\abs{\sum_{m=1}^{M}\sum_{s=1}^{N_{inc}}\bigg(\vt_s\cdot\vn(\my_m)\bigg)\bigg(\frac{\mx-\my_m}{|\mx-\my_m|}\cdot\vn(\my_m)\bigg)\Lambda_4(\mx,\my_m;\vt_s)},
    \end{equation}
    where $\Lambda_4(\mx,\my_m;\vt_s)$ is
    \[\Lambda_4(\mx,\my_m;\vt_s)=\frac{1}{|\mx-\my_m|}\left(1+i\frac{\vt_s\cdot(\mx-\my_m)}{\sqrt{|\mx-\my_m|^2-(\vt_s\cdot(\mx-\my_m))^2}}\right).\]
  \end{enumerate}
\end{thm}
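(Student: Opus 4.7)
The plan is to follow the pattern of Theorem \ref{StructureImagingFunctionalS} but now retain the geometric weight $\vt_s\cdot\vn(\my_m)$ inherited from the Neumann illumination vector (\ref{VecEN}), which will produce Bessel functions of order one rather than zero after averaging over observation directions. First, invoking the approximations (\ref{Approx1})--(\ref{Approx2}) for the singular vectors of the Neumann MSR matrix and absorbing the unimodular phases inside the absolute value, I will rewrite
\begin{equation*}
\mathcal{I}_{\mathrm{A}}(\mx;F)\approx\frac{1}{F}\left|\sum_{f=1}^{F}\sum_{m=1}^{M}\left(\sum_{s=1}^{N_{inc}}\vt_s\cdot\vn(\my_m)e^{ik_f\vt_s\cdot(\mx-\my_m)}\right)\left(\sum_{t=1}^{N_{obs}}\vt_t\cdot\vn(\my_m)e^{ik_f\vt_t\cdot(\mx-\my_m)}\right)\right|,
\end{equation*}
using exactly the same reduction that produces the displayed formula at the start of the proof of Theorem \ref{StructureImagingFunctionalA}.

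Since only $N_{obs}$ is assumed sufficiently large, I apply the \emph{second} identity of Theorem \ref{TheoremBessel} to the inner sum over observation directions, which produces the factor $i\bigl(\frac{\mx-\my_m}{|\mx-\my_m|}\cdot\vn(\my_m)\bigr)J_1(k_f|\mx-\my_m|)$, while the $N_{inc}$ sum is kept discrete. Converting the frequency average $\frac{1}{F}\sum_{f=1}^{F}$ into $\frac{1}{k_F-k_1}\int_{k_1}^{k_F}dk$ then reduces the whole problem to evaluating
\begin{equation*}
\int_{k_1}^{k_F}e^{ik\vt_s\cdot(\mx-\my_m)}J_1(k|\mx-\my_m|)\,dk.
\end{equation*}

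For part (1), I expand the exponential via the Jacobi-Anger formula (\ref{JAExp}) and integrate term by term. The $n=0$ contribution, $\int_{k_1}^{k_F}J_0(k|\mx-\my_m|)J_1(k|\mx-\my_m|)\,dk$, can be computed in closed form through the substitution $u=k|\mx-\my_m|$ together with the standard identity $\frac{d}{du}J_0(u)=-J_1(u)$, yielding $\frac{1}{2|\mx-\my_m|}\bigl(J_0(k_1|\mx-\my_m|)^2-J_0(k_F|\mx-\my_m|)^2\bigr)$. The remaining $n\geq1$ terms are grouped into $\Lambda_3$, and the two geometric dot products $\vt_s\cdot\vn(\my_m)$ (from the $N_{inc}$ sum) and $\frac{\mx-\my_m}{|\mx-\my_m|}\cdot\vn(\my_m)$ (from the Bessel reduction) are pulled outside, producing the stated prefactor and $\Lambda_2$. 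For part (2), I let $k_F\to+\infty$ so that the integral becomes an improper one on $(0,\infty)$, and apply the identity (\ref{BesselRelation}) with $\nu=1$, $a=\vt_s\cdot(\mx-\my_m)$ and $b=|\mx-\my_m|$. Using $\sin(\sin^{-1}(a/b))=a/b$ and $\cos(\sin^{-1}(a/b))=\sqrt{1-(a/b)^2}$, the improper integral collapses exactly to $\Lambda_4(\mx,\my_m;\vt_s)$ after multiplying through by the two normal-direction factors.

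The principal obstacle will be the justification of the termwise integration of the Jacobi-Anger series in part (1): I need to control $\int_{k_1}^{k_F}J_n(k|\mx-\my_m|)J_1(k|\mx-\my_m|)\,dk$ uniformly in $n$ so that the sum defining $\Lambda_3$ is convergent and contributes at lower order than the leading $n=0$ term. This will be handled by splitting into two regimes exactly as in Theorem \ref{StructureImagingFunctionalS}: in the small-argument regime $k_F|\mx-\my_m|\ll\sqrt{\cL+1}$ I will use the crude bound (\ref{BoundednessofBesselFunction}) combined with H\"older's inequality, while in the large-argument regime $k_F|\mx-\my_m|\gg|\cL^2-0.25|$ I will use the asymptotic form (\ref{ApproximationBesselFunction}) to bound each integral by a quantity of order $\sqrt{k_F/|\mx-\my_m|}$, yielding convergence of the truncated series and the desired negligibility of $\Lambda_3/(k_F-k_1)$ compared with the leading term.
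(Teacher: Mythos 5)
Your proposal follows essentially the same route as the paper's proof: the same reduction via (\ref{Approx1})--(\ref{Approx2}), the same application of the second identity of Theorem \ref{TheoremBessel} to the observation sum only, the same Jacobi--Anger expansion with the $n=0$ term evaluated through $\int J_0J_1\,dx=-\tfrac12 J_0^2$ (your derivative-of-$J_0$ substitution is just a derivation of that identity), and the same use of (\ref{BesselRelation}) with $\nu=1$ for the $k_F\to+\infty$ limit. Your added care about justifying the negligibility of $\Lambda_3$ by the two-regime argument is exactly what the paper delegates to ``the similar process of the proof of Theorem \ref{StructureImagingFunctionalS}'', so the proposal is correct and matches the paper.
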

\begin{proof}
  Assume that $M_f$ is almost equal to $M$ for every $f=1,2,\cdots,F$. Then, (\ref{ImagingFunctionA}) becomes
  \begin{align*}
    \mathcal{I}_{\mathrm{A}}(\mx;F)&=\frac{1}{F}\abs{\sum_{f=1}^{F}\sum_{m=1}^{M_f}\left(\hat{\mS}_{\mathrm{D}}(\mx;k_f)^*\mU_m(k_f)\right)\left(\hat{\mS}_{\mathrm{D}}(\mx;k_f)^*\overline{\mV}_m(k_f)\right)}\\
    &=\frac{1}{F}\abs{\sum_{f=1}^{F}\sum_{m=1}^{M}\left(\sum_{s=1}^{N_{inc}}\vt_s\cdot\vn(\my_m)e^{ik_f\vt_s\cdot(\mx-\my_m)}\right)\left(\sum_{t=1}^{N_{obs}}\vt_t\cdot\vn(\my_m)e^{ik_f\vt_t\cdot(\mx-\my_m)}\right)}\\
    &=\frac{1}{2\pi F}\abs{\sum_{f=1}^{F}\sum_{m=1}^{M}\sum_{s=1}^{N_{inc}}\vt_s\cdot\vn(\my_m)e^{ik_f\vt_s\cdot(\mx-\my_m)}\int_{\mathbb{S}^1}\vt\cdot\vn(\my_m)e^{ik_f\vt\cdot(\mx-\my_m)}d\vt}\\
    &=\frac{1}{F}\abs{\sum_{f=1}^{F}\sum_{m=1}^{M}\sum_{s=1}^{N_{inc}}\bigg(\vt_s\cdot\vn(\my_m)\bigg)\bigg(\frac{\mx-\my_m}{|\mx-\my_m|}\cdot\vn(\my_m)\bigg)e^{ik_f\vt_s\cdot(\mx-\my_m)}J_1(k_f|\mx-\my_m|)}\\
    &=\abs{\sum_{m=1}^{M}\sum_{s=1}^{N_{inc}}\left(\frac{\vt_s\cdot\vn(\my_m)}{k_F-k_1}\right)\left(\frac{\mx-\my_m}{|\mx-\my_m|}\cdot\vn(\my_m)\right)\int_{k_1}^{k_F}e^{ik\vt_s\cdot(\mx-\my_m)}J_1(k|\mx-\my_m|)dk}.
  \end{align*}
  \begin{enumerate}
    \item Assume that $k_F<+\infty$. Considering the polar coordinate $\vt_s:=(\cos\theta_s,\sin\theta_s)$, $\mx-\my_m=r_m(\cos\phi,\sin\phi)$, and applying Jacobi-Anger expansion (\ref{JAExp}), we can write
        \begin{multline*}
          \int_{k_1}^{k_F}e^{ik\vt_s\cdot(\mx-\my_m)}J_1(k|\mx-\my_m|)dk=\int_{k_1}^{k_F}e^{ik|\mx-\my_m|\cos(\theta_s-\phi)}J_1(k|\mx-\my_m|)dk\\
          =\int_{k_1}^{k_F}J_0(k|\mx-\my_m|)J_1(k|\mx-\my_m|)dk+2\sum_{n=1}^{\infty}i^n\cos(n\hat{\theta}_s)\int_{k_1}^{k_F}J_1(k|\mx-\my_m|)J_n(k|\mx-\my_m|)dk.
        \end{multline*}
        Hence, applying well-known indefinite integral
        \begin{equation}\label{IntegralBessel01}
          \int J_0(x)J_1(x)dx=-\frac12 J_0(x)^2
        \end{equation}
        yields (\ref{FunctionLambda2}). Moreover, through the similar process of the proof of Theorem \ref{StructureImagingFunctionalS}, we can observe that the term
        \[\frac{1}{k_F-k_1}\Lambda_3(k_F,k_1,|\mx-\my_m|;\vt_s)\]
        can be negligible.
    \item Suppose that $k_F\longrightarrow+\infty$. Then, applying (\ref{BesselRelation}) yields
      \begin{align*}
        \lim_{k_F\to\infty}&\int_{k_1}^{k_F}e^{ik\vt_s\cdot(\mx-\my_m)}J_1(k|\mx-\my_m|)dk\approx\int_{0}^{\infty}e^{ik\vt_s\cdot(\mx-\my_m)}J_1(k|\mx-\my_m|)dk\\
        &=\frac{\displaystyle\cos\left(\sin^{-1}\frac{\vt_s\cdot(\mx-\my_m)}{|\mx-\my_m|}\right)+i\sin\left(\sin^{-1}\frac{\vt_s\cdot(\mx-\my_m)}{|\mx-\my_m|}\right)} {\displaystyle|\mx-\my_m|\sqrt{\displaystyle 1-\left(\vt_s\cdot\frac{\mx-\my_m}{|\mx-\my_m|}\right)^2}}.
      \end{align*}
      Therefore, from the elementary calculus
      \[\cos(\sin^{-1}\phi)=\sqrt{1-\phi^2}\quad\mbox{and}\quad\sin(\sin^{-1}\phi)=\phi,\]
      structure (\ref{StructureSN2}) can be obtained.
  \end{enumerate}
\end{proof}
The result in Theorem \ref{StructureImagingFunctionalSN} shows that when the number of incident directions are small, the property of $\mathcal{I}_{\mathrm{A}}(\mx;F)$ is similar to the one in Theorem \ref{StructureImagingFunctionalA} but due to the remaining terms (for example, $\Lambda_3(k_F,k_1,|\mx-\my_m|;\vt_s)$), produces results should be poor.

\section{Analysis of multi-frequency subspace migration imaging functionals: limited-view case}\label{Sec4}
\subsection{Common features}
We now turn our attention to the limited-view problems. We assume that the unit circle divided into the two-disjoint connected sets $\mathbb{S}^1=\mathbb{S}_+^1\cup\mathbb{S}_-^1$ and every $\vt_n$ are elements of $\mathbb{S}_+^1$ such that
\begin{equation}\label{IODir}
  \vt_n=\left(\cos\theta_n,\sin\theta_n\right),\quad\theta_n=\alpha+(\beta-\alpha)\frac{n-1}{N-1},
\end{equation}
where $0<\alpha<\beta<2\pi$. In this case, we let $\mathcal{I}_{\mathrm{L}}(\mx;F)$ be either (\ref{ImagingFunctionD}) or (\ref{ImagingFunctionA}). In order to explore the structure of $\mathcal{I}_{\mathrm{L}}(\mx;F)$, we must evaluate following integrals
\[\int_{\mathbb{S}_+^1}e^{ik\vt\cdot\mx}d\vt\quad\mbox{and}\quad\int_{\mathbb{S}_+^1}\vt\cdot\vx e^{ik\vt\cdot\mx}d\vt.\]
In our knowledge, there is no finite representation of above integrals so at this moment, we cannot conclude any properties of imaging functionals (\ref{ImagingFunctionD}) and (\ref{ImagingFunctionA}). Hence, we find approximations of above integrals and consequently discover certain properties of $\mathcal{I}_{\mathrm{L}}(\mx;F)$.

\begin{thm}\label{TheoremBesselLimited}
  Let $\mx=r(\cos\phi,\sin\phi)$ and $\vx=(\cos\xi,\sin\xi)$. Then for sufficiently large $N$, following relations holds
  \begin{align*}
    \int_{\mathbb{S}_+^1}e^{ik\vt\cdot\mx}d\vt=&(\beta-\alpha)J_0(k|\mx|)+4\sum_{n=1}^{\infty}\Lambda_{\mathrm{D}}(\alpha,\beta,k|\mx|;n)\\
    \int_{\mathbb{S}_+^1}\vt\cdot\vx e^{ik\vt\cdot\mx}d\vt=&2J_0(k|\mx|)\sin\frac{\beta-\alpha}{2}\cos\frac{\beta+\alpha-2\xi}{2}\\
    &+iJ_1(k|\mx|)\bigg[(\beta-\alpha)\bigg(\frac{\mx}{|\mx|}\cdot\vx\bigg)+\sin(\beta-\alpha)\cos(\beta+\alpha-\xi-\phi)\bigg]\\
    &+2\sum_{n=2}^{\infty}\Lambda_{\mathrm{N}}(\alpha,\beta,k|\mx|;n),
  \end{align*}
  where
  \[\Lambda_{\mathrm{D}}(\alpha,\beta,k|\mx|;n)=\frac{i^n}{n}J_n(k|\mx|)\cos\frac{n(\beta+\alpha-2\phi)}{2}\sin\frac{n(\beta-\alpha)}{2}\]
  and
  \begin{align*}
    \Lambda_{\mathrm{N}}(\alpha,\beta,k|\mx|;n)=&i^nJ_n(k_f|\mx-\my_m|)\bigg[\frac{1}{1-n}\sin\frac{(1-n)(\beta-\alpha)}{2}\cos\frac{(1-n)(\beta+\alpha)+2n\phi-2\xi}{2}\\
    &+\frac{1}{1+n}\sin\frac{(1+n)(\beta-\alpha)}{2}\cos\frac{(1+n)(\beta+\alpha)-2n\phi-2\xi}{2}\bigg].
  \end{align*}
\end{thm}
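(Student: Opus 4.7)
The plan is to follow the template of Theorem~\ref{TheoremBessel} but carry out the computation on the arc $\theta\in[\alpha,\beta]$ rather than the full circle. Writing $\vt=(\cos\theta,\sin\theta)$, $\mx=r(\cos\phi,\sin\phi)$, and $\vx=(\cos\xi,\sin\xi)$, and invoking the Riemann-sum approximation (as in Theorem~\ref{TheoremBessel}) to replace the discrete sum by the integral for sufficiently large $N$, the task reduces to evaluating
\[\int_\alpha^\beta e^{ikr\cos(\theta-\phi)}\,d\theta \qquad\text{and}\qquad \int_\alpha^\beta \cos(\theta-\xi)\,e^{ikr\cos(\theta-\phi)}\,d\theta.\]

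For the first integral, I would substitute the Jacobi--Anger expansion (\ref{JAExp}) and exchange sum and integration (legitimate since the bound $|J_n(kr)|\le (kr)^n/(2^n n!)$ guarantees absolute convergence). The $n=0$ term returns the leading summand $(\beta-\alpha)J_0(kr)$, and for $n\ge1$ direct evaluation of $\int_\alpha^\beta\cos(n(\theta-\phi))\,d\theta$, combined with the identity $\sin A-\sin B=2\cos\frac{A+B}{2}\sin\frac{A-B}{2}$, gives $\frac{2}{n}\cos\frac{n(\beta+\alpha-2\phi)}{2}\sin\frac{n(\beta-\alpha)}{2}$; multiplied by $2i^nJ_n(kr)$ and summed this assembles into $4\sum_{n=1}^{\infty}\Lambda_{\mathrm{D}}$, as claimed.

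The second integral is treated similarly, except I would first apply the product-to-sum identity
\[\cos(\theta-\xi)\cos(n(\theta-\phi))=\tfrac{1}{2}\bigl[\cos((1-n)\theta+n\phi-\xi)+\cos((1+n)\theta-n\phi-\xi)\bigr].\]
The $n=0$ contribution integrates directly to $2J_0(kr)\sin\frac{\beta-\alpha}{2}\cos\frac{\beta+\alpha-2\xi}{2}$. The case $n=1$ must be isolated because the factor $1-n$ vanishes: the first cosine collapses to the constant $\cos(\phi-\xi)=\frac{\mx}{|\mx|}\cdot\vx$, producing the linear-in-$(\beta-\alpha)$ term, while the second cosine yields the boundary piece $\frac{1}{2}\sin(\beta-\alpha)\cos(\beta+\alpha-\phi-\xi)$; once multiplied by $2iJ_1(kr)$ these two pieces reproduce the explicit $J_1$ term in the statement. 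For $n\ge2$, both denominators $1\mp n$ are nonzero and a second application of $\sin A-\sin B=2\cos\frac{A+B}{2}\sin\frac{A-B}{2}$ to each resulting difference of sines assembles the pair of summands defining $\Lambda_{\mathrm{N}}$.

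The main obstacle is bookkeeping rather than analysis: keeping track of the $n=1$ resonance and correctly separating the oscillatory and non-oscillatory contributions so that the closed-form prefactors involving $J_0(k|\mx|)$ and $J_1(k|\mx|)$ in the statement emerge cleanly from the Jacobi--Anger series, while the higher-order terms aggregate neatly into $\Lambda_{\mathrm{D}}$ and $\Lambda_{\mathrm{N}}$. There is no deeper analytic obstruction: in contrast to the full-view setting of Theorem~\ref{TheoremBessel}, the endpoint contributions at $\theta=\alpha,\beta$ no longer cancel, which is precisely why the answer retains two infinite-series remainders instead of collapsing to a single Bessel function.
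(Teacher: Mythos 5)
Your proposal is correct and follows essentially the same route as the paper: polar coordinates, substitution of the Jacobi--Anger expansion, and term-by-term integration over $[\alpha,\beta]$, with the $n=1$ resonance handled separately (the paper invokes the Gradshteyn--Ryzhik antiderivatives of $\cos(ax+b)\cos(cx+d)$, which is just the product-to-sum identity you use). Your explicit justification of the sum--integral interchange via $|J_n(kr)|\le(kr)^n/(2^nn!)$ is a small tightening over the paper's unstated interchange, but not a different argument.
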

\begin{proof}
  Similar to the proof of Theorem \ref{TheoremBessel}, we let $\vt=(\cos\theta,\sin\theta)$, $\mx=r(\cos\phi,\sin\phi)$, and $\vx=(\cos\xi,\sin\xi)$. Then applying Jacobi-Anger expansion (\ref{JAExp}), we can obtain
  \begin{align*}
    \int_{\mathbb{S}_+^1}e^{ik\vt\cdot\mx}d\vt&=\int_{\alpha}^{\beta}e^{ikr\cos(\theta-\phi)}d\theta\\
    &\approx(\beta-\alpha)J_0(kr)+2\sum_{n=1}^{\infty}i^nJ_n(kr)\int_{\alpha}^{\beta}\cos n(\theta-\phi)d\theta\\ &=(\beta-\alpha)J_0(kr)+4\sum_{n=1}^{\infty}\frac{i^n}{n}J_n(kr)\cos\frac{n(\beta+\alpha-2\phi)}{2}\sin\frac{n(\beta-\alpha)}{2}\\
    &=(\beta-\alpha)J_0(k|\mx|)+4\sum_{n=1}^{\infty}\Lambda_{\mathrm{D}}(\alpha,\beta,|\mx|;n).
  \end{align*}
  Similarly, since
  \begin{align*}
    \int_{\mathbb{S}_+^1}\vt\cdot\vx e^{ik\vt\cdot\mx}d\vt&=\int_{\alpha}^{\beta}\cos(\theta-\xi)e^{ikr\cos(\theta-\phi)}d\theta\\ &=\int_{\alpha}^{\beta}\cos(\theta-\xi)\left[J_0(k|\mx|)+2\sum_{n=1}^{\infty}i^nJ_n(k|\mx|)\cos n(\theta-\phi)\right]d\theta.
  \end{align*}
  Elementary calculus yields
  \[\int_{\alpha}^{\beta}J_0(k|\mx|)\cos(\theta-\xi)d\theta=2J_0(k|\mx|)\sin\frac{\beta-\alpha}{2}\cos\frac{\beta+\alpha-2\xi}{2}.\]
  In order to evaluate remaining terms, we recall following indefinite integral (see \cite[Formula 2.532-3,6]{GR})
  \begin{align*}
    \int\cos(ax+b)\cos(ax+d)&=\frac{x\cos(b-d)}{2}+\frac{\sin(2ax+b+d)}{4a}\\
    \int\cos(ax+b)\cos(cx+d)&=\frac{\sin[(a-c)x+b-d]}{2(a-c)}+\frac{\sin[(a+c)x+b+d]}{2(a+c)}\quad\mbox{for}\quad a^2\ne c^2.
  \end{align*}
  Then
  \begin{multline*}
    \int_{\alpha}^{\beta}2iJ_1(k|\mx|)\cos(\theta-\xi)\cos(\theta-\phi)d\theta\\ =iJ_1(k|\mx|)\bigg[(\beta-\alpha)\bigg(\frac{\mx}{|\mx|}\cdot\vx\bigg)+\sin(\beta-\alpha)\cos(\beta+\alpha-\xi-\phi)\bigg]
  \end{multline*}
  and for $n\ne1$,
  \begin{align*}
    \int_{\alpha}^{\beta}&2i^nJ_n(k|\mx|)\cos(\theta-\xi)\cos n(\theta-\phi)d\theta\\ =&i^nJ_n(k_f|\mx-\my_m|)\bigg[\frac{1}{1-n}\sin\frac{(1-n)(\beta-\alpha)}{2}\cos\frac{(1-n)(\beta+\alpha)+2n\phi-2\xi}{2}\\
    &+\frac{1}{1+n}\sin\frac{(1+n)(\beta-\alpha)}{2}\cos\frac{(1+n)(\beta+\alpha)-2n\phi-2\xi}{2}\bigg].
  \end{align*}
\end{proof}

\subsection{Dirichlet boundary condition (TM) case}
First, we consider the TM case. Applying Theorem \ref{TheoremBesselLimited}, we can obtain following results.
\begin{thm}\label{StructureImagingFunctionalLimitedTM}
  For sufficiently large $N$ and $F$, (\ref{ImagingFunctionD}) can be written as follows:
  \begin{multline*}
    \mathcal{I}_{\mathrm{L}}(\mx;F)\approx\left|\sum_{m=1}^{M}\left[\frac{k_F}{k_F-k_1}\bigg(J_0(k_F|\mx-\my_m|)^2+J_1(k_F|\mx-\my_m|)^2\bigg)\right.\right.\\
        \left.\left.-\frac{k_1}{k_F-k_1}\bigg(J_0(k_1|\mx-\my_m|)^2+J_1(k_1|\mx-\my_m|)^2\bigg)\right]\right|.
  \end{multline*}
\end{thm}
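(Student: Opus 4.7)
The plan is to follow the same outline as the proof of Theorem \ref{StructureImagingFunctional}, but with the limited-view Bessel identity from Theorem \ref{TheoremBesselLimited} replacing the full-circle version from Theorem \ref{TheoremBessel}, and then to argue that the extra angular correction terms $\Lambda_{\mathrm{D}}$ contribute negligibly so that only the same leading bracket survives.

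First, assuming $M_f \approx M$ for every $f$ and applying (\ref{SVD}) together with (\ref{Approx1})--(\ref{Approx2}) to (\ref{ImagingFunctionD}), I would reduce
\[
\mathcal{I}_{\mathrm{L}}(\mx;F) \approx \frac{1}{F}\left|\sum_{f=1}^{F}\sum_{m=1}^{M}\left(\hat{\mS}_{\mathrm{D}}(\mx;k_f)^{*}\,\overline{\hat{\mS}_{\mathrm{D}}(\my_m;k_f)}\right)^{2}\right|.
\]
Each inner product equals $\frac{1}{N}\sum_{s=1}^{N} e^{ik_f\vt_s\cdot(\mx-\my_m)}$, and since $\{\vt_s\}$ is equispaced on the arc parameterized by $[\alpha,\beta]$ in (\ref{IODir}), this Riemann sum is well approximated by $\frac{1}{\beta-\alpha}\int_{\mathbb{S}_+^1} e^{ik_f\vt\cdot(\mx-\my_m)}\,d\vt$. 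Invoking the first identity of Theorem \ref{TheoremBesselLimited} with $\mx$ replaced by $\mx-\my_m$ then gives
\[
\frac{1}{N}\sum_{s=1}^{N} e^{ik_f\vt_s\cdot(\mx-\my_m)} \approx J_0(k_f|\mx-\my_m|) + \frac{4}{\beta-\alpha}\sum_{n=1}^{\infty}\Lambda_{\mathrm{D}}(\alpha,\beta,k_f|\mx-\my_m|;n).
\]

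Second, I would convert the frequency average to a Riemann integral over $[k_1,k_F]$ and square. The leading contribution is
\[
\frac{1}{k_F-k_1}\int_{k_1}^{k_F} J_0(k|\mx-\my_m|)^2\,dk,
\]
which, via the antiderivative $\int J_0(t)^2\,dt = t(J_0(t)^2+J_1(t)^2) + \int J_1(t)^2\,dt$ and the substitution $t=k|\mx-\my_m|$ (exactly as in the proof of Theorem \ref{StructureImagingFunctional}), produces precisely the two bracketed $(J_0^2+J_1^2)$ terms stated in the claim, plus a residual $\frac{1}{k_F-k_1}\int_{k_1}^{k_F}J_1(k|\mx-\my_m|)^2\,dk$ that is negligible by Remark 3.3.

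The main obstacle is controlling the cross terms $J_0 \cdot \Lambda_{\mathrm{D}}$ and the quadratic terms $\Lambda_{\mathrm{D}}^2$ generated by the square. Here I would follow the two-regime strategy used in the proof of Theorem \ref{StructureImagingFunctionalS}. Each $\Lambda_{\mathrm{D}}(\alpha,\beta,k|\mx-\my_m|;n)$ carries a $\frac{1}{n}$ factor from the angular integration and an oscillatory $J_n(k|\mx-\my_m|)$; when $k_F|\mx-\my_m| \ll \sqrt{n+1}$ the pointwise bound (\ref{BoundednessofBesselFunction}) shows that the truncated series is $\ll O(1)$, while when $k_F|\mx-\my_m| \gg |n^2-0.25|$ the asymptotic form (\ref{ApproximationBesselFunction}) together with the averaging factor $\frac{1}{k_F-k_1}$ bounds the oscillatory integrals $\int_{k_1}^{k_F} J_0(k|\mx-\my_m|)J_n(k|\mx-\my_m|)\,dk$ by a quantity of the same order. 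Summing over $n$ then yields an $o(1)$ contribution relative to the leading $J_0^2$ term. The delicate step is verifying uniformity in the aperture parameters $\alpha,\beta$; once this is settled, the stated approximation follows, showing that within the limited-view regime the imaging functional retains the same leading structure as its full-view counterpart.
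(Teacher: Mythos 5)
Your proposal follows essentially the same route as the paper's own proof: reduce $\mathcal{I}_{\mathrm{L}}$ to $\frac{1}{(\beta-\alpha)^2(k_F-k_1)}\bigl|\sum_m\int_{k_1}^{k_F}(\int_{\mathbb{S}_+^1}e^{ik\vt\cdot(\mx-\my_m)}d\vt)^2dk\bigr|$, expand the square via Theorem \ref{TheoremBesselLimited} into the leading $(\beta-\alpha)^2J_0^2$ term plus cross and quadratic terms in $\Lambda_{\mathrm{D}}$, dispose of the latter by the same two-regime bound used for (\ref{Boundednesso1}) in Theorem \ref{StructureImagingFunctionalS}, and evaluate the leading term with the antiderivative of $J_0^2$ while discarding the residual $\int J_1^2$ integral as negligible. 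The only difference is cosmetic (you flag uniformity in $\alpha,\beta$, which the paper does not discuss), so the proposal is correct and matches the paper's argument.
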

\begin{proof}
Similar to the proof of Theorem \ref{StructureImagingFunctional}, we assume that for every $f$, number of non-zero singular values $M_f$ is almost equal to $M$. Then
\begin{align*}
  \mathcal{I}_{\mathrm{L}}(\mx;F)\approx&\frac{1}{F}\abs{\sum_{f=1}^{F}\sum_{m=1}^{M}  \left(\hat{\mS}_{\mathrm{D}}(\mx;k_f)^*\overline{\hat{\mS}_{\mathrm{D}}(\my_m;k_f)}\right)\left(\hat{\mS}_{\mathrm{D}}(\mx;k_f)^*\overline{\hat{\mS}_{\mathrm{D}}(\my_m;k_f)}\right)}\\
  =&\frac{1}{F}\abs{\sum_{f=1}^{F}\sum_{m=1}^{M}\left(\sum_{s=1}^{N}e^{ik_f\vt_s\cdot(\mx-\my_m)}\right)\left(\sum_{t=1}^{N}e^{ik_f\vt_t\cdot(\mx-\my_m)}\right)}\\
  =&\frac{1}{(\beta-\alpha)^2F}\abs{\sum_{f=1}^{F}\sum_{m=1}^{M}\left(\int_{\mathbb{S}_+^1}e^{ik_f\vt\cdot(\mx-\my_m)}d\vt\right)^2}\\
  =&\frac{1}{(\beta-\alpha)^2(k_F-k_1)}\abs{\sum_{m=1}^{M}\int_{k_1}^{k_F}\left(\int_{\mathbb{S}_+^1}e^{ik\vt\cdot(\mx-\my_m)}d\vt\right)^2dk}.
\end{align*}
Letting $\vt=(\cos\theta,\sin\theta)$ and $\mx=r(\cos\phi,\sin\phi)$, and applying Theorem \ref{TheoremBesselLimited}, we can evaluate the following
\begin{align*}
  \left(\int_{\mathbb{S}_+^1}e^{ik\vt\cdot(\mx-\my_m)}d\vt\right)^2 =&\left[(\beta-\alpha)J_0(k|\mx-\my_m|)+4\sum_{n=1}^{\infty}\Lambda_{\mathrm{D}}(\alpha,\beta,k|\mx-\my_m|;n)\right]^2\\
  =&(\beta-\alpha)^2J_0(k|\mx-\my_m|)^2\\
  &+8(\beta-\alpha)\sum_{n=1}^{\infty}J_0(k|\mx-\my_m|)\Lambda_{\mathrm{D}}(\alpha,\beta,k|\mx-\my_m|;n)\\
  &+16\left(\sum_{n=1}^{\infty}\Lambda_{\mathrm{D}}(\alpha,\beta,k|\mx-\my_m|;n)\right)^2.
\end{align*}
Similar to the derivation of (\ref{Boundednesso1}) in Theorem \ref{StructureImagingFunctionalS}, for sufficiently large $k$,
\begin{align*}
  &\frac{1}{k_F-k_1}\sum_{n=1}^{\infty}J_0(k|\mx-\my_m|)\Lambda_{\mathrm{D}}(\alpha,\beta,k|\mx-\my_m|;n)\ll O(1)\\
  &\frac{1}{k_F-k_1}\left(\sum_{n=1}^{\infty}\Lambda_{\mathrm{D}}(\alpha,\beta,k|\mx-\my_m|;n)\right)^2\ll O(1).
\end{align*}
Hence,
\begin{multline*}
  \mathcal{I}_{\mathrm{L}}(\mx;F)\approx\left|\sum_{m=1}^{M}\left[\frac{k_F}{k_F-k_1}\bigg(J_0(k_F|\mx-\my_m|)^2+J_1(k_F|\mx-\my_m|)^2\bigg)\right.\right.\\
        \left.\left.-\frac{k_1}{k_F-k_1}\bigg(J_0(k_1|\mx-\my_m|)^2+J_1(k_1|\mx-\my_m|)^2\bigg)\right]\right|.
\end{multline*}
\end{proof}

Above result shows that the terms $\Lambda_{\mathrm{D}}(\alpha,\beta,|\mx-\my_m|;n)$ will disturb the shape identification of $\Gamma$ i.e., imaging performance of $\mathcal{I}_{\mathrm{L}}(\mx;F)$ highly depends on the range of incident and observation directions, and applied wavenumber $k_f$. Note that if one can find $\alpha$ and $\beta$ such that
\[\beta\ne\alpha\quad\mbox{and}\quad\sin\frac{n(\beta-\alpha)}{2}=0,\]
an accurate shape of $\Gamma$ can be obtained. Notice that this is only for $\beta-\alpha=2\pi$, i.e., full-view case. Hence, for obtaining a good result in the limited-view problem, sufficiently large $k_f$ must be applied.

\subsection{Neumann boundary condition (TM) case}
Now, we consider the TE case. With the same configuration of previous subsection, we can obtain following result.
\begin{thm}\label{StructureImagingFunctionalLimitedTE}
  For sufficiently large $N$ and $F$, (\ref{ImagingFunctionN}) can be written as follows:
  \begin{align*}
    \mathcal{I}_{\mathrm{L}}(\mx;F)\approx\frac{1}{(\beta-\alpha)^2}&\left|\sum_{m=1}^{M}\left[\frac{k_F(C_1)^2}{k_F-k_1}\bigg(J_0(k_F|\mx-\my_m|)^2+J_1(k_F|\mx-\my_m|)^2\bigg)\right.\right.\\
        &-\frac{k_1(C_1)^2}{k_F-k_1}\bigg(J_0(k_1|\mx-\my_m|)^2+J_1(k_1|\mx-\my_m|)^2\bigg)\\
        &+\frac{(C_1)^2-(C_2)^2}{k_F-k_1}\int_{k_1}^{k_F}J_1(k|\mx-\my_m|)^2dk\\
        &\left.\left.+i\frac{C_1C_2}{2(k_F-k_1)|\mx-\my_m|}\bigg(J_0(k_1|\mx-\my_m|)^2-J_0(k_F|\mx-\my_m|)^2\bigg)\right]\right|,
  \end{align*}
  where constants $C_1$ and $C_2$ are defined in (\ref{ConstantC}).
\end{thm}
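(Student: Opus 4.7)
My plan is to mirror the proof of Theorem \ref{StructureImagingFunctionalLimitedTM}, but now invoking the \emph{second} identity of Theorem \ref{TheoremBesselLimited} so that both $J_0$ and $J_1$ contributions appear inside a single integrand. First I assume $M_f\approx M$ for every frequency $k_f$ and use the Neumann analog of (\ref{Approx1}), namely $\mU_m(k_f)\approx e^{i\gamma_m}\hat{\mS}_{\mathrm{N}}(\my_m;k_f)$. This converts each inner product $\hat{\mS}_{\mathrm{D}}(\mx;k_f)^*\mU_m(k_f)$ into the normalized sum $\frac{1}{N}\sum_s \vt_s\cdot\vn(\my_m)\,e^{ik_f\vt_s\cdot(\mx-\my_m)}$, which under the limited-view distribution (\ref{IODir}) is approximated by $\frac{1}{\beta-\alpha}\int_{\mathbb{S}_+^1}\vt\cdot\vn(\my_m)\,e^{ik_f\vt\cdot(\mx-\my_m)}\,d\vt$. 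The identical manipulation applies to the $\overline{\mV}_m$ factor and produces the global prefactor $1/(\beta-\alpha)^2$ in front of the final expression.

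Next I apply the second identity of Theorem \ref{TheoremBesselLimited} with $\vx=\vn(\my_m)$ and the displacement vector $\mx-\my_m$ in place of $\mx$. The resulting representation has the shape $C_1\,J_0(k|\mx-\my_m|)+iC_2\,J_1(k|\mx-\my_m|)+2\sum_{n\ge 2}\Lambda_{\mathrm{N}}$, where $C_1$ and $C_2$ are the trigonometric prefactors of $J_0$ and of the leading $iJ_1$ term read off from that identity; these are precisely the constants introduced in (\ref{ConstantC}). Squaring this representation yields a $C_1^2 J_0^2$ piece, a $-C_2^2 J_1^2$ piece, a $2iC_1C_2 J_0 J_1$ cross piece, and mixed products involving the tail $\Lambda_{\mathrm{N}}$.

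The multi-frequency average $\frac{1}{F}\sum_f$ then becomes $\frac{1}{k_F-k_1}\int_{k_1}^{k_F}dk$, and I evaluate the three dominant $k$-integrals using the identities already exploited in the paper. The indefinite integral $\int J_0(t)^2\,dt=t(J_0(t)^2+J_1(t)^2)+\int J_1(t)^2\,dt$ together with the substitution $t=k|\mx-\my_m|$ produces the two boundary terms proportional to $k_F C_1^2(J_0(k_F|\mx-\my_m|)^2+J_1(k_F|\mx-\my_m|)^2)$ and $-k_1 C_1^2(J_0(k_1|\mx-\my_m|)^2+J_1(k_1|\mx-\my_m|)^2)$, together with an additional $C_1^2\int_{k_1}^{k_F}J_1^2\,dk$; the latter combines with the $-C_2^2\int_{k_1}^{k_F}J_1^2\,dk$ coming from the $J_1^2$ piece to give the stated $(C_1^2-C_2^2)\int J_1^2\,dk$ contribution. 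For the cross term I invoke (\ref{IntegralBessel01}), $\int J_0(x)J_1(x)\,dx=-\tfrac12 J_0(x)^2$, and change variables to obtain the final $\frac{iC_1C_2}{2(k_F-k_1)|\mx-\my_m|}\bigl(J_0(k_1|\mx-\my_m|)^2-J_0(k_F|\mx-\my_m|)^2\bigr)$ contribution.

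The main obstacle will be controlling the residual $\Lambda_{\mathrm{N}}$ terms for $n\ge 2$ and their mixed products with the leading $J_0,J_1$ pieces. These are handled exactly as in the proofs of Theorems \ref{StructureImagingFunctionalS} and \ref{StructureImagingFunctionalLimitedTM}: when $\mx$ is close to $\my_m$ the power-series bound (\ref{BoundednessofBesselFunction}) together with H\"older's inequality forces every such term to be negligibly small compared to the retained ones, while when $\mx$ is far from $\my_m$ the asymptotic expansion (\ref{ApproximationBesselFunction}) of the $J_n$ against the oscillatory factor $\cos(k|\mx-\my_m|-n\pi/2-\pi/4)/\sqrt{k|\mx-\my_m|}$ gives the same $\ll O(1)$ conclusion after integration in $k$. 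The only genuinely new bookkeeping is matching the signs and the numerical factor of $2$ in the cross term with the prescribed definitions of $C_1$ and $C_2$ in (\ref{ConstantC}).
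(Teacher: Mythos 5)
Your proposal is correct and follows essentially the same route as the paper: reduce the singular-vector inner products to the limited-view angular integral $\frac{1}{\beta-\alpha}\int_{\mathbb{S}_+^1}\vt\cdot\vn(\my_m)e^{ik\vt\cdot(\mx-\my_m)}d\vt$, expand it via the second identity of Theorem \ref{TheoremBesselLimited} into $C_1J_0+iC_2J_1$ plus a negligible tail, square, convert the frequency sum to a $k$-integral, and evaluate the three dominant pieces with the $\int J_0^2$ identity and (\ref{IntegralBessel01}). Your closing remark about the factor of $2$ in the cross term is apt — the paper's own intermediate step records the cross term as $iC_1C_2\int J_0J_1\,dk$ rather than $2iC_1C_2\int J_0J_1\,dk$, which is what makes its final coefficient $C_1C_2/2$ rather than $C_1C_2$.
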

\begin{proof}
Suppose that for every $f$, number of non-zero singular values $M_f$ is almost equal to $M$. Then (\ref{ImagingFunctionA}) becomes
\begin{align*}
  \mathcal{I}_{\mathrm{L}}(\mx;F)&=\abs{\sum_{f=1}^{F}\sum_{m=1}^{M_f}\left(\hat{\mS}_{\mathrm{D}}(\mx;k_f)^*\mU_m(k_f)\right)\left(\hat{\mS}_{\mathrm{D}}(\mx;k_f)^*\overline{\mV}_m(k_f)\right)}\\
  &=\frac{1}{(\beta-\alpha)^2F}\abs{\sum_{f=1}^{F}\sum_{m=1}^{M}\left(\int_{\mathbb{S}_+^1}\vt\cdot\vn(\my_m)e^{ik_f\vt\cdot(\mx-\my_m)}d\vt\right)^2}\\
  &=\frac{1}{(\beta-\alpha)^2(k_F-k_1)}\abs{\sum_{m=1}^{M}\int_{k_1}^{k_F}\left(\int_{\mathbb{S}_+^1}\vt\cdot\vn(\my_m)e^{ik\vt\cdot(\mx-\my_m)}d\vt\right)^2dk}.
\end{align*}
Then, by setting $\vt:=(\cos\theta,\sin\theta)$, $\vn(\my_m)=(\cos\nu_m,\sin\nu_m)$, and $\mx-\my_m:=r_m(\cos\phi_m,\sin\phi_m)$, applying Jacobi-Anger expansion yields
\begin{align*}
  \int_{\mathbb{S}_+^1}&\vt\cdot\vn(\my_m)e^{ik\vt\cdot(\mx-\my_m)}d\vt=\int_{\alpha}^{\beta}\cos(\theta-\nu_m)e^{ik|\mx-\my_m|\cos(\theta-\phi_m)}d\theta\\
  =&\int_{\alpha}^{\beta}\cos(\theta-\nu_m)\left[J_0(k|\mx-\my_m|)+2\sum_{n=1}^{\infty}i^nJ_n(k|\mx-\my_m|)\cos n(\theta-\phi_m)\right]d\theta\\
  =&C_1J_0(k|\mx-\my_m|)+iC_2J_1(k|\mx-\my_m|)+2\sum_{n=2}^{\infty}\Lambda_{\mathrm{N}}(\alpha,\beta,k|\mx-\my_m|;n),
\end{align*}
where
\begin{align}
\begin{aligned}\label{ConstantC}
  C_1&=2\sin\frac{\beta-\alpha}{2}\cos\frac{\beta+\alpha-2\nu_m}{2}\\
  C_2&=(\beta-\alpha)\bigg(\frac{\mx-\my_m}{|\mx-\my_m|}\cdot\vn(\my_m)\bigg)+\sin(\beta-\alpha)\cos(\beta+\alpha-\nu_m-\phi_m).
\end{aligned}
\end{align}
Same as the derivation of (\ref{Boundednesso1}) in Theorem \ref{StructureImagingFunctionalS}, the following holds for sufficiently large $k$:
\begin{align*}
  &\frac{C_1}{k_F-k_1}\sum_{n=2}^{\infty}J_0(k|\mx-\my_m|)\Lambda_{\mathrm{N}}(\alpha,\beta,k|\mx-\my_m|;n)\ll O(1)\\
  &\frac{C_2}{k_F-k_1}\sum_{n=2}^{\infty}J_1(k|\mx-\my_m|)\Lambda_{\mathrm{N}}(\alpha,\beta,k|\mx-\my_m|;n)\ll O(1)\\
  &\frac{1}{k_F-k_1}\left(\sum_{n=2}^{\infty}\Lambda_{\mathrm{N}}(\alpha,\beta,k|\mx-\my_m|;n)\right)^2\ll O(1).
\end{align*}
Hence, we can evaluate
\begin{align*}
  \mathcal{I}_{\mathrm{L}}&(\mx;F)=\frac{1}{(\beta-\alpha)^2(k_F-k_1)}\left|(C_1)^2\sum_{m=1}^{M}\int_{k_1}^{k_F}J_0(k|\mx-\my_m|)^2dk\right.\\
  &\left.-(C_2)^2\sum_{m=1}^{M}\int_{k_1}^{k_F}J_1(k|\mx-\my_m|)^2dk+iC_1C_2\sum_{m=1}^{M}\int_{k_1}^{k_F}J_0(k|\mx-\my_m|)J_1(k|\mx-\my_m|)dk\right|.
\end{align*}
Finally, applying Theorem \ref{StructureImagingFunctional} and (\ref{IntegralBessel01}), we can obtain desired result.
\end{proof}
It is interesting to observe that opposite to the (\ref{ImagingFunctionA}) in Theorem \ref{StructureImagingFunctionalA}, $\mathcal{I}_{\mathrm{L}}(\mx;F)$ will plots large magnitude at $\mx=\my_m\in\Gamma$ since $J_0(k|\mx-\my_m|)$ exists. But it also produces unexpected points of large magnitude at $\mx\notin\Gamma$. Note that in the full-view case, i.e., $\beta-\alpha=2\pi$, then
\[C_1=\Lambda_{\mathrm{N}}(\alpha,\beta,k|\mx-\my_m|;n)=0\quad\mbox{and}\quad C_2=2\pi\bigg(\frac{\mx-\my_m}{|\mx-\my_m|}\cdot\vn(\my_m)\bigg).\]
Hence, we can obtain Theorem \ref{StructureImagingFunctionalA}.

Now, we end up this subsection with the following conclusion: if the range of incident and observation directions is sufficiently wide, produced image should be acceptable but if it is narrow, one cannot obtain a good result when each applied wavenumber $k_f$ is sufficiently large enough.

\section{Numerical examples}\label{Sec4}
\subsection{Common features}\label{Sec4-1}
In this section, we present some numerical examples for imaging arc-like cracks satisfying the Dirichlet (TM polarization) or Neumann boundary condition (TE polarization). Throughout this section, the applied wave number is taken of the form $k_f=\frac{2\pi}{\lambda_f}$; here $\lambda_f$, $f=1,2,\cdots,F$, is the given wavelength. In this paper, the wavenumbers $k_f$ are always equi-distributed in the interval $[k_1,k_f]$.

Four $\Gamma_j$ are chosen for illustration:
\begin{align*}
\Gamma_1&=\set{(s,0.3):s\in[-0.5,0.5]}\\
\Gamma_2&=\set{\left(s,\frac{1}{2}\cos\frac{s\pi}{2}+\frac{1}{5}\sin\frac{s\pi}{2}-\frac{1}{10}\cos\frac{3s\pi}{2}\right):s\in[-1,1]}\\
\Gamma_3&=\set{\left(2\sin\frac{s}{2},\sin s\right):s\in\left[\frac{\pi}{4},\frac{7\pi}{4}\right]}\\
\Gamma_4&=\Gamma_4^{(1)}\cup\Gamma_4^{(2)}
\end{align*}
where
\begin{align*}
\Gamma_4^{(1)}&=\set{\left(s-0.2,-0.5s^2+0.6\right):s\in[-0.5,0.5]}\\
\Gamma_4^{(2)}&=\set{\left(s+0.2,s^3+s^2-0.6\right):s\in[-0.5,0.5]}.
\end{align*}
The search domain $\Omega$ is illustrated in Table \ref{Configuration} for $\Gamma_j$, $j=1,2,3$ and $4$. For each $\mx\in\Omega$, the step size of $\mx$ is taken of the order of 0.02. As for the observation directions $\hat{\mx}_j$, same as (\ref{IODir}), they are taken as
\[\hat{\mx}_j=\left(\cos\theta_j,\sin\theta_j\right),\quad\theta_j=\alpha+(\beta-\alpha)\frac{j-1}{N-1}\]
where $\alpha=0$ and $\beta=2\pi$ for the full-view case, and $\alpha=\pi/6$ and $\beta=5\pi/6$ for the limited-view case.

It is worth mentioning that, since the reliable and efficient solution of the direct scattering problem indicated previously is very important (for example, avoiding inverse crime, etc.), all numerical data in this section (the elements $u_{\infty}(\hat{\mx}_j,\vt_l;k_f)$ for $j,l=1,2,\cdots,N$ of the dataset $\mathbb{K}(k_f)$) are generated by the Nystr{\"o}m method for both the Dirichlet and Neumann boundary conditions as presented in \cite{K} and \cite{M2}, respectively. After obtaining the dataset, a 15dB Gaussian random noise is added to the unperturbed data to show the robustness of the proposed algorithm. In order to obtain the number of nonzero singular values $M_f$ for each frequency, a $0.01$-threshold scheme (choosing first $M$ singular values $s_m$ such that $s_m/s_1\geq0.01$) is adopted. A more detailed discussion of thresholding can be found in \cite{PL1,PL3} (see \cite{HSZ} for volumetric extended target case).


\begin{table}
\begin{center}
\begin{tabular}{c||c|c|c|c|c|c||c}
\hline \multirow{2}*{Crack}&\multicolumn{2}{c|}{TM case (section \ref{Sec4-2})}&\multicolumn{2}{c|}{TE case (section \ref{Sec4-3})}&\multirow{2}*{$\lambda_1$}&\multirow{2}*{$\lambda_F$}&search domain\cr\cline{2-5}
&$N$&$F$&$N$&$F$& & &$\Omega$\\
\hline\hline
$\Gamma_1$&$~~~~~~16~~~~~~$&$10$&$~~~~~~16~~~~~~$&$10$&$0.5$&$0.4$&$[-1,1]\times[-1,1]$\\
$\Gamma_2$&$28$&$12$&$36$&$12$&$0.6$&$0.3$&$[-2,2]\times[-2,2]$\\
$\Gamma_3$&$40$&$16$&$64$&$16$&$0.5$&$0.3$&$[-2,2]\times[-1,3]$\\
$\Gamma_4$&$32$&$24$&$64$&$24$&$0.4$&$0.2$&$[-1,1]\times[-1,1]$\\
\hline
\end{tabular}
\caption{\label{Configuration}Test configuration for $\Gamma_j$, $j=1,2,3$ and $4$.}
\end{center}
\end{table}

\subsection{Dirichlet boundary condition case - TM}\label{Sec4-2}
In this case, we consider the imaging of crack with Dirichlet boundary condition. First, let us consider the $\Gamma_1$. Throughout many references, \cite{P4,PL1,PL3}, one can easily notice that when a crack is straight line, it can be successfully retrieved. In this result, an expected result is appeared, refer to Figure \ref{FigGamma1D}.

\begin{figure}[!ht]
\begin{center}
\subfloat[Map of $\mathcal{I}_{\mathrm{D}}(\mx;F)$]{\label{FigGamma1DF}\includegraphics[width=0.32\textwidth]{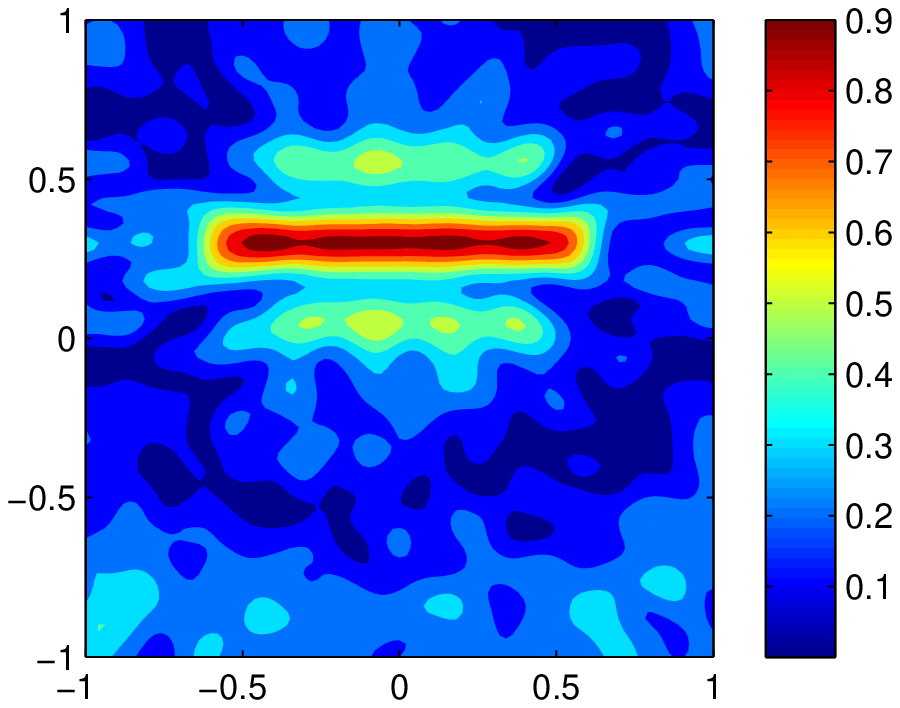}}
\subfloat[Map of $\mathcal{I}_{\mathrm{L}}(\mx;F)$]{\label{FigGamma1DN}\includegraphics[width=0.32\textwidth]{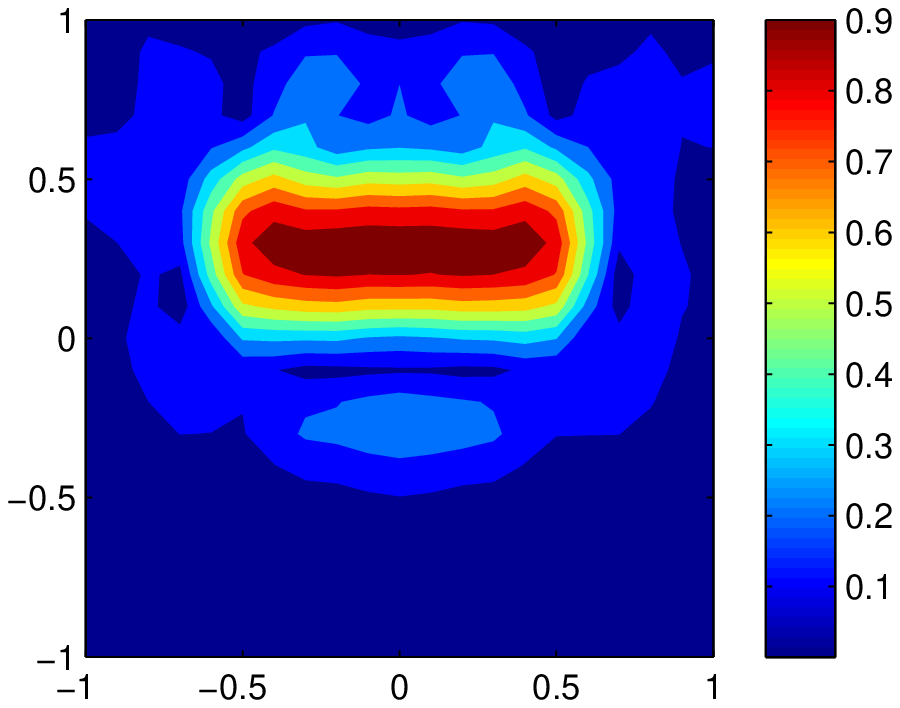}}
\subfloat[True shape]{\label{FigGamma1DT}\includegraphics[width=0.32\textwidth]{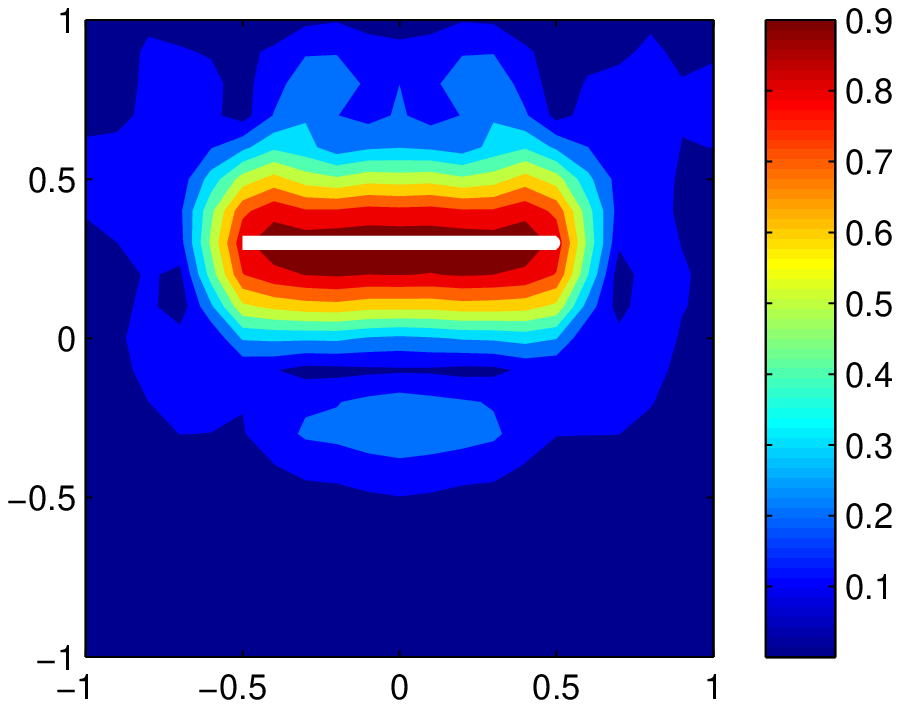}}
\caption{\label{FigGamma1D}Maps of $\mathcal{I}_{\mathrm{D}}(\mx;F)$ and $\mathcal{I}_{\mathrm{L}}(\mx;F)$ for $\Gamma_1$.}
\end{center}
\end{figure}

Similarly with the penetrable inclusion case as dealt with in \cite{P3,P4,PL1,PL2,PL3}, when the crack is not anymore a straight line, for example, the image of $\Gamma_2$, poor results are observed. Fortunately, in this case, the location of the end-points of $\Gamma_2$ is well identified. That is, connected by a straight line, it should provide a good initial guess for an iterative solution algorithm, refer to Figure \ref{FigGamma2D}.

\begin{figure}[!ht]
\begin{center}
\subfloat[Map of $\mathcal{I}_{\mathrm{D}}(\mx;F)$]{\label{FigGamma2DF}\includegraphics[width=0.32\textwidth]{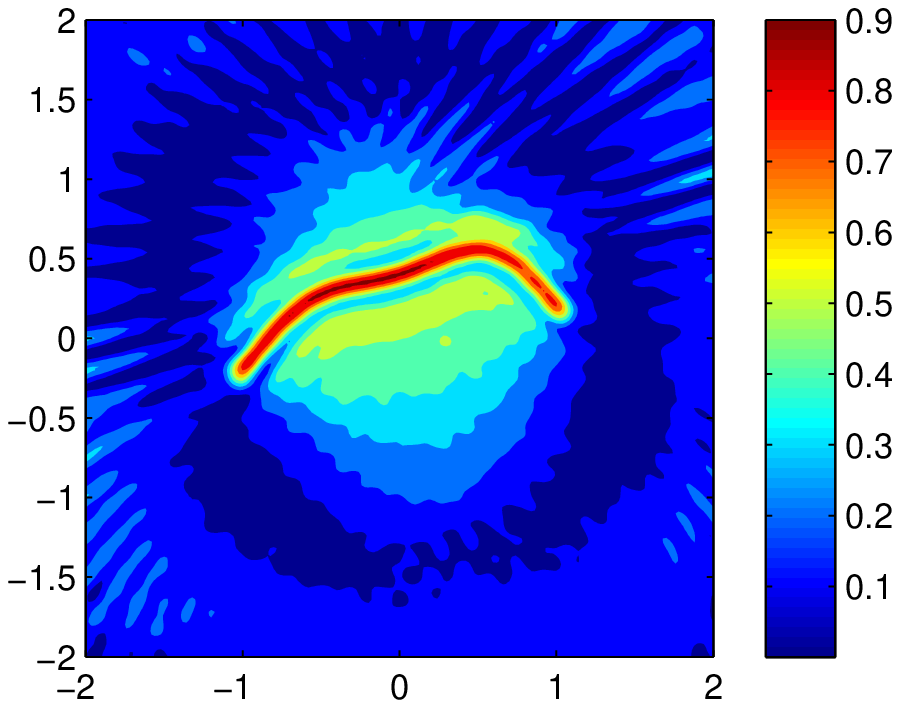}}
\subfloat[Map of $\mathcal{I}_{\mathrm{L}}(\mx;F)$]{\label{FigGamma2DN}\includegraphics[width=0.32\textwidth]{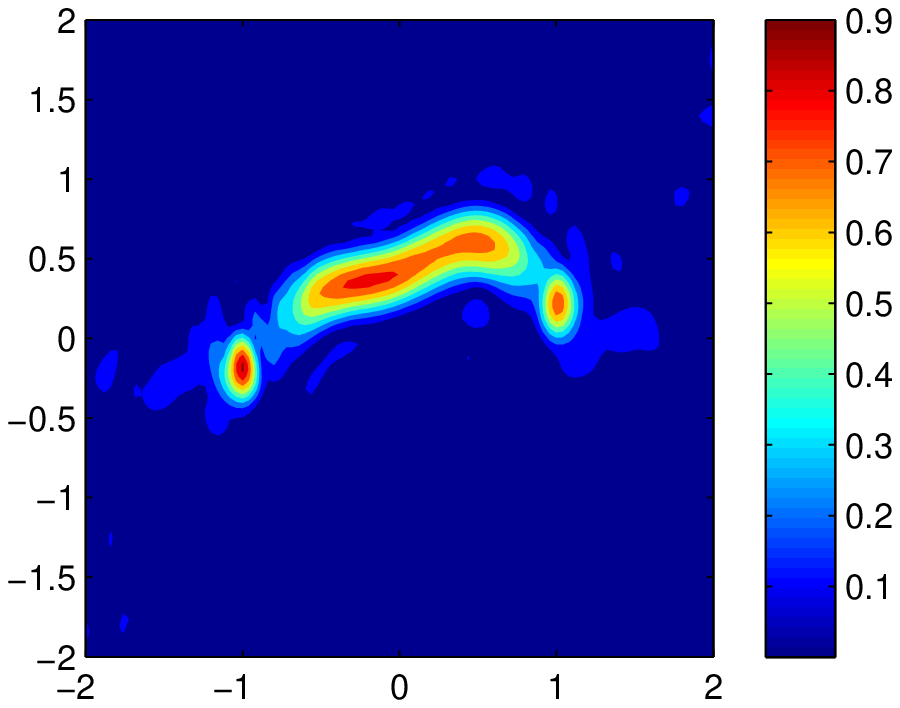}}
\subfloat[True shape]{\label{FigGamma2DT}\includegraphics[width=0.32\textwidth]{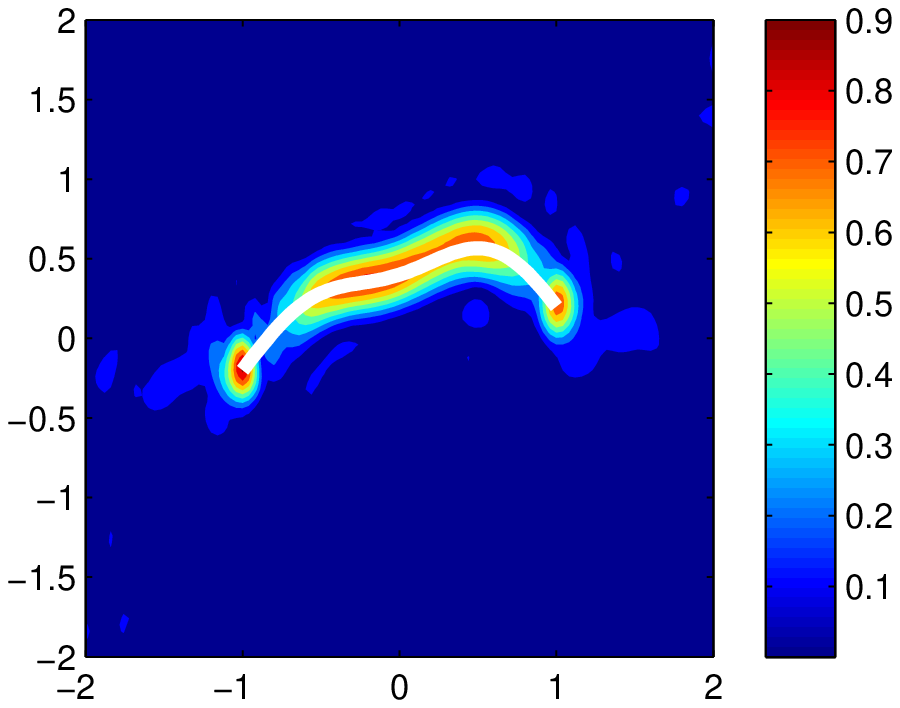}}
\caption{\label{FigGamma2D}Same as Figure \ref{FigGamma1D} except the crack is $\Gamma_2$.}
\end{center}
\end{figure}

In addition, for a complicated crack case $\Gamma_3$, only limited part of crack can be imaged, refer to Figure \ref{FigGamma3D}. In order to detect the remaining part of $\Gamma_3$, one must change the observation (and also incident) directions. For example, if one wants to detect the right-hand side of $\Gamma_3$, $\alpha=-\frac{\pi}{6}$ and $\beta=\frac{\pi}{6}$ of (\ref{IODir}) will be a good choice. In Figure \ref{FigGamma3D-2}, corresponding results are exhibited. For the imaging of extended targets, similar phenomenon can be found in \cite{HHSZ}.

\begin{figure}[!ht]
\begin{center}
\subfloat[Map of $\mathcal{I}_{\mathrm{D}}(\mx;F)$]{\label{FigGamma3DF}\includegraphics[width=0.32\textwidth]{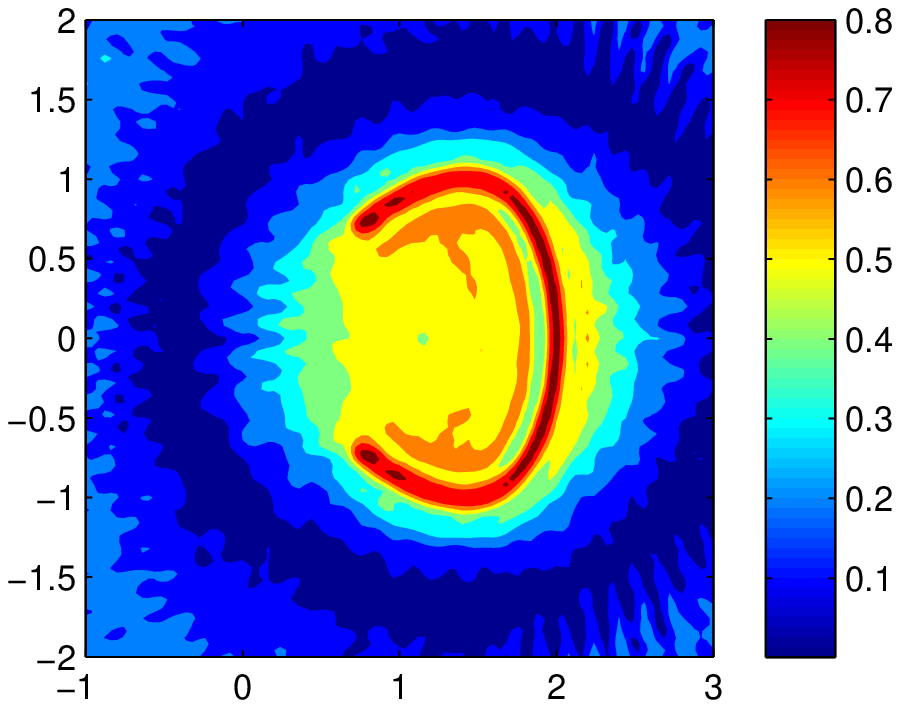}}
\subfloat[Map of $\mathcal{I}_{\mathrm{L}}(\mx;F)$]{\label{FigGamma3DN}\includegraphics[width=0.32\textwidth]{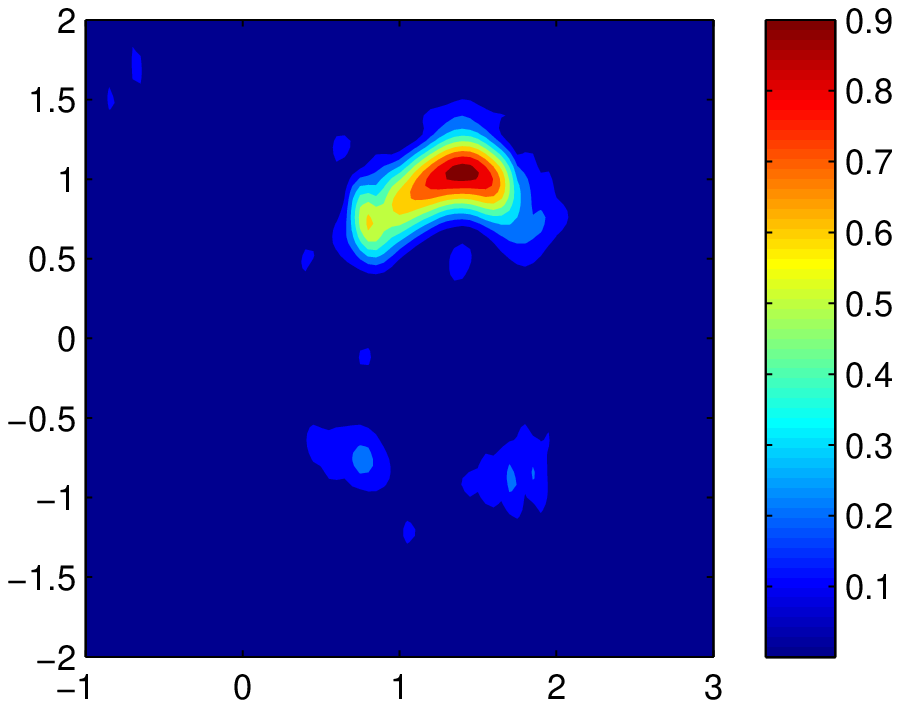}}
\subfloat[True shape]{\label{FigGamma3DT}\includegraphics[width=0.32\textwidth]{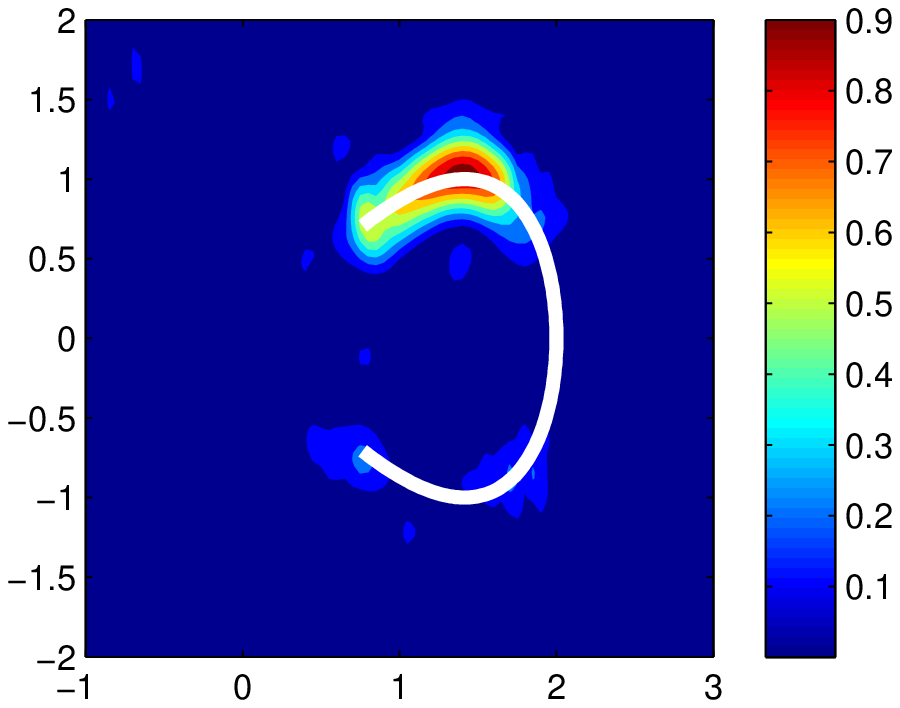}}
\caption{\label{FigGamma3D}Same as Figure \ref{FigGamma1D} except the crack is $\Gamma_3$.}
\end{center}
\end{figure}

\begin{figure}[!ht]
\begin{center}
\subfloat[$\alpha=\frac{5}{6}\pi$ and $\beta=\frac{7}{6}\pi$]{\label{FigGamma3Db}\includegraphics[width=0.32\textwidth]{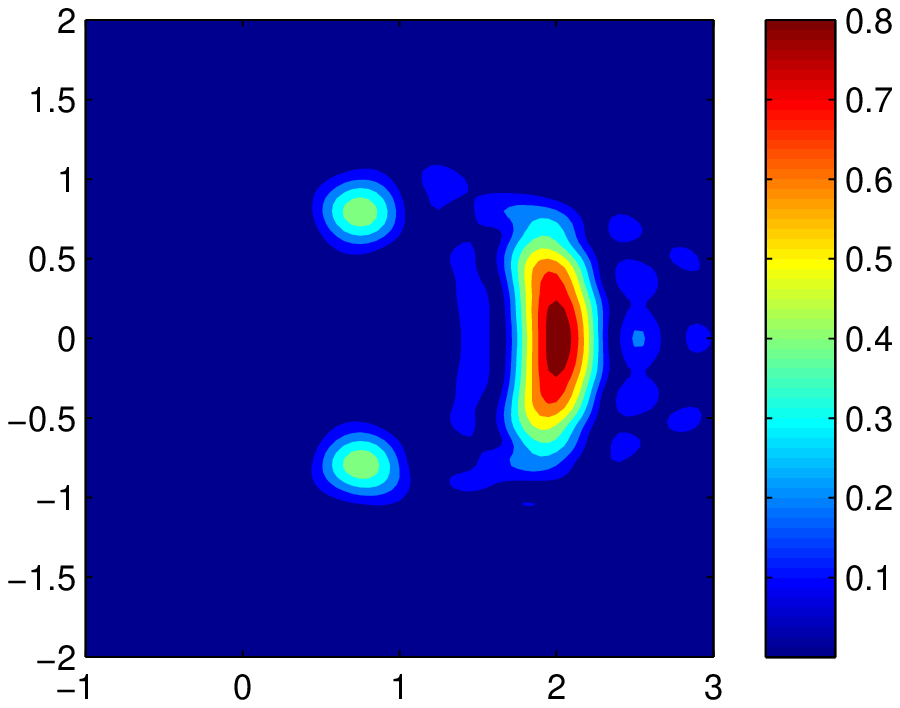}}
\subfloat[$\alpha=\frac{7}{6}\pi$ and $\beta=\frac{11}{6}\pi$]{\label{FigGamma3DNb}\includegraphics[width=0.32\textwidth]{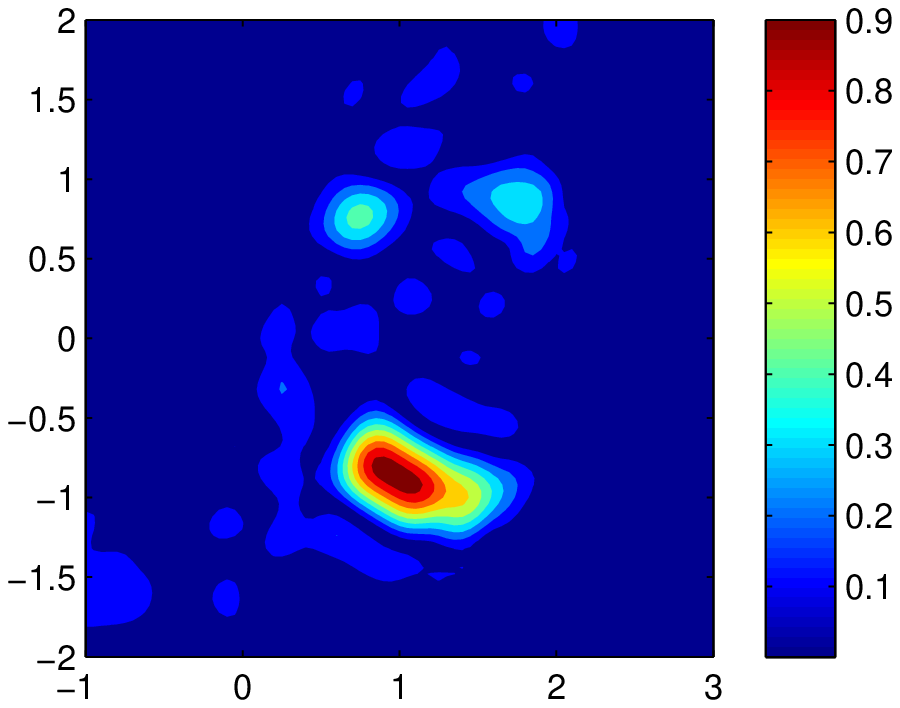}}
\subfloat[$\alpha=-\frac{1}{6}\pi$ and $\beta=\frac{1}{6}\pi$]{\label{FigGamma3DTb}\includegraphics[width=0.32\textwidth]{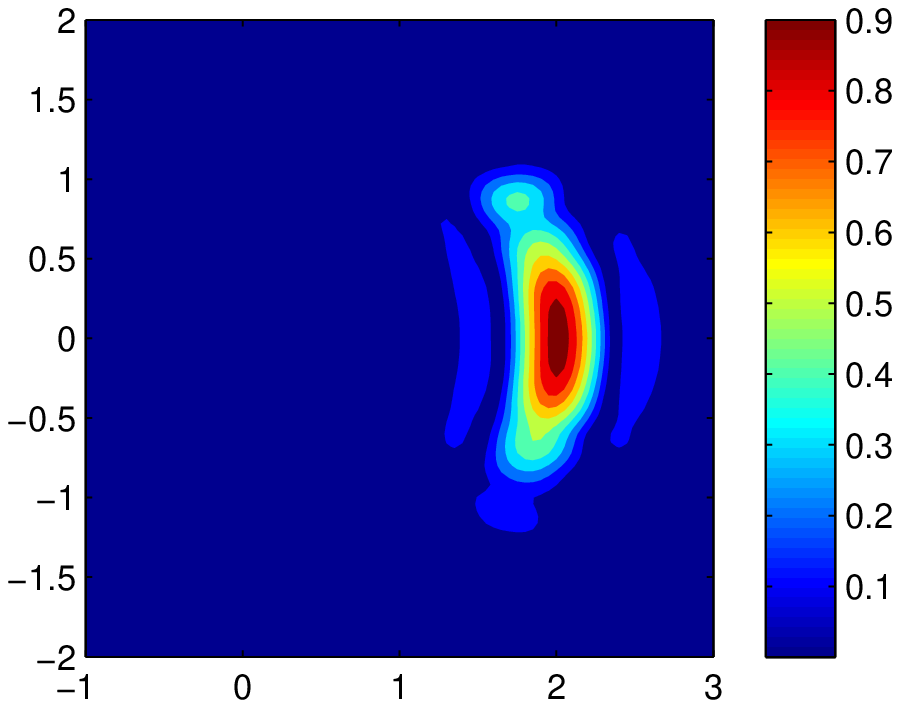}}
\caption{\label{FigGamma3D-2}Maps of $\mathcal{I}_{\mathrm{L}}(\mx;F)$ for $\Gamma_3$.}
\end{center}
\end{figure}

\begin{rem}
Instead of the scattered field dataset generated from the Nystr{\"o}m method, some authors introduce a similar formulation involving the solution of a second-kind Fredholm integral equation along the crack, refer to \cite{N}. Numerical experimentation shows that images of a crack from far-field data acquired by the Nystr{\"o}m method or from the ones calculated via this alternative formulation are almost indistinguishable (see Figure \ref{FigGammaD-Naz}). Using near-field data instead of far-field one yields similar results.
\end{rem}

\begin{figure}[!ht]
\begin{center}
\subfloat[Map of $\mathcal{I}_{\mathrm{L}}(\mx;F)$ for $\Gamma_1$]{\label{FigGamma1DNaz}\includegraphics[width=0.32\textwidth]{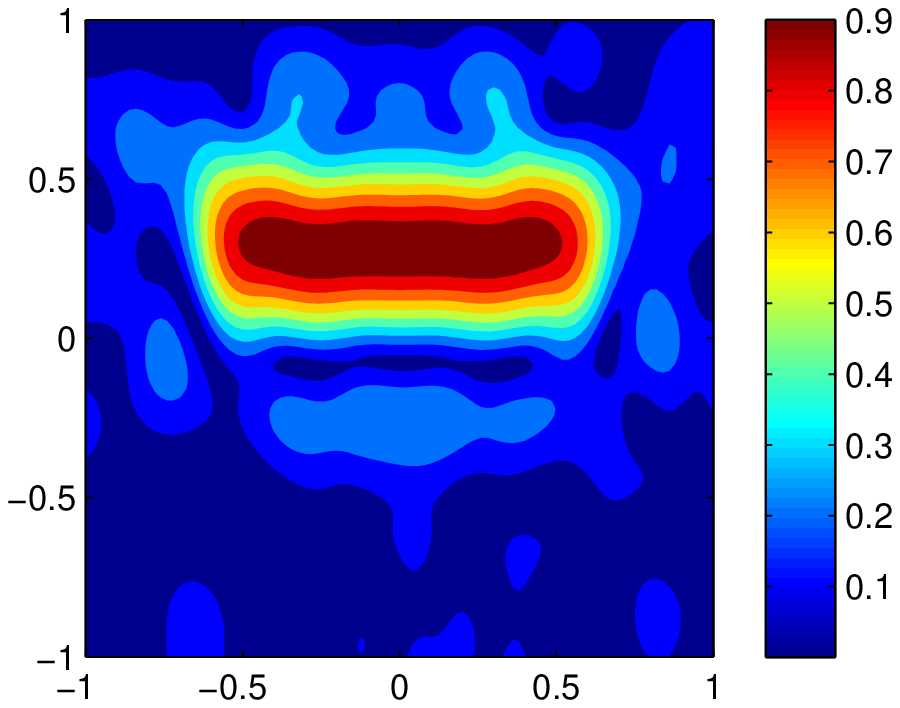}}
\subfloat[Map of $\mathcal{I}_{\mathrm{L}}(\mx;F)$ for $\Gamma_2$]{\label{FigGamma2DNaz}\includegraphics[width=0.32\textwidth]{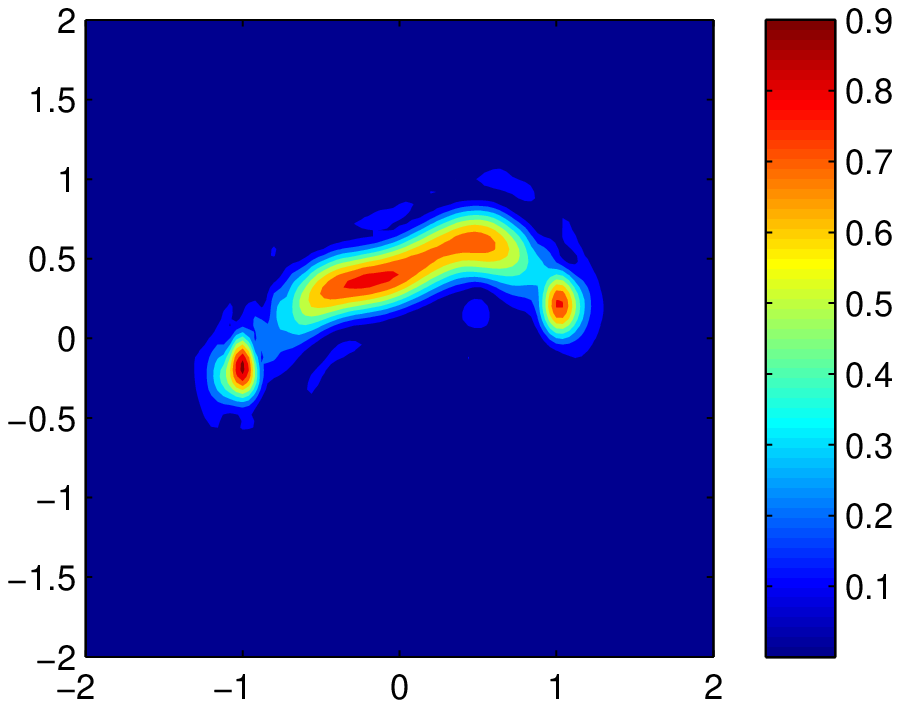}}
\subfloat[Map of $\mathcal{I}_{\mathrm{L}}(\mx;F)$ for $\Gamma_3$]{\label{FigGamma3DNaz}\includegraphics[width=0.32\textwidth]{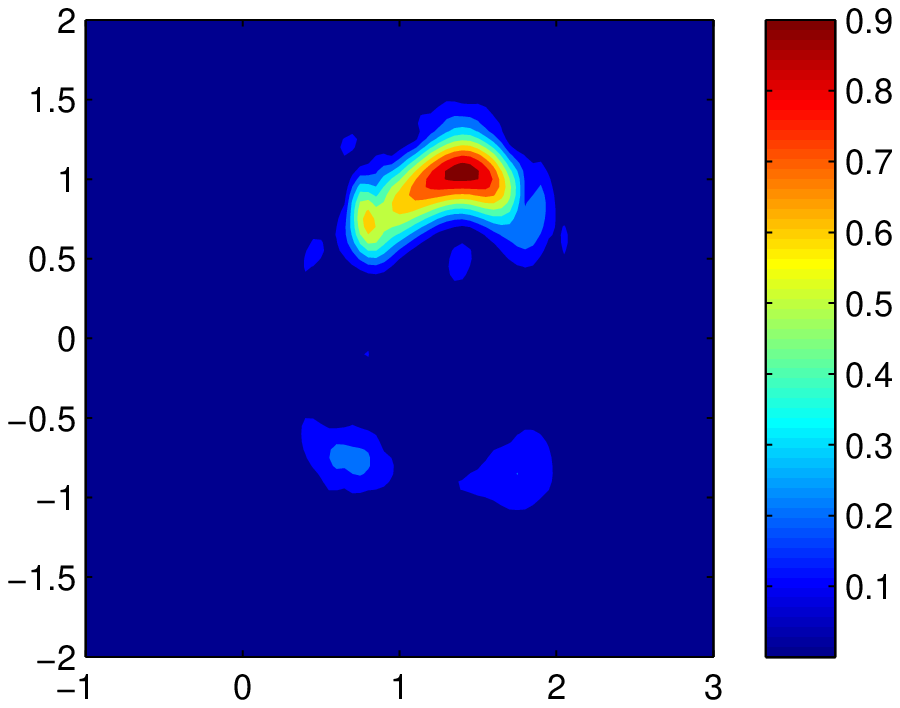}}
\caption{\label{FigGammaD-Naz}Map of $\mathcal{I}_{\mathrm{L}}(\mx;F)$ for $\Gamma_j$, $j=1,2,3$, where dataset generated via method in \cite{N}.}
\end{center}
\end{figure}

Both the mathematical configuration and the numerical analysis could be extended in somewhat straightforward fashion to the case of non-overlapped multiple cracks. We will not present the derivation herein and simply illust some examples. Let us notice that the elements $u_{\infty}(\hat{\mx}_j,\vt_l;k_f)$ for $j,l=1,2,\cdots,N$ of dataset $\mathbb{K}(k_f)$ are generated from the Nystr\"om method now applied to scattering by more than one crack (see \cite{N,PhDWKP}).

Now, let us work with $\Gamma_4$. Maps of $\mathcal{I}_{\mathrm{D}}(\mx;F)$ are displayed in Figure \ref{FigGamma4D}. Similarly with the previous example, we can only identify the $\Gamma_4^{(1)}$ with $\alpha=\frac{\pi}{6}$ and $\beta=\frac{5\pi}{6}$ of (\ref{IODir}). For retrieving $\Gamma_4^{(2)}$, one must choose another incident (and observation) direction setting for example, $\alpha=\frac{7\pi}{6}$ and $\beta=\frac{11\pi}{6}$ is a good choice.

\begin{figure}[!ht]
\begin{center}
\subfloat[Map of $\mathcal{I}_{\mathrm{D}}(\mx;F)$]{\label{FigGamma4Da}\includegraphics[width=0.32\textwidth]{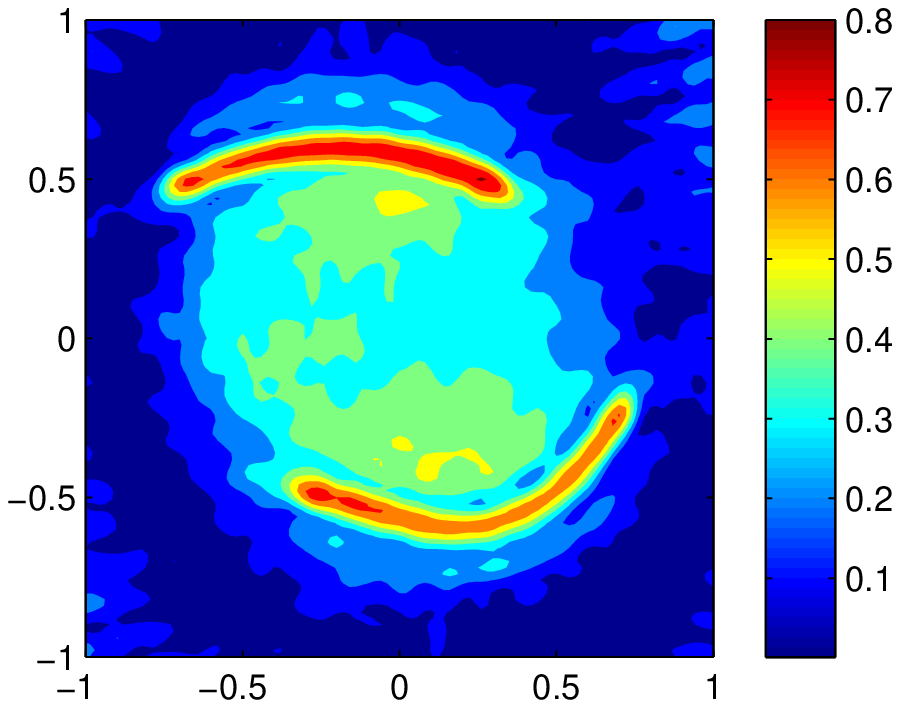}}
\subfloat[Map of $\mathcal{I}_{\mathrm{L}}(\mx;F)$]{\label{FigGamma4DNa}\includegraphics[width=0.32\textwidth]{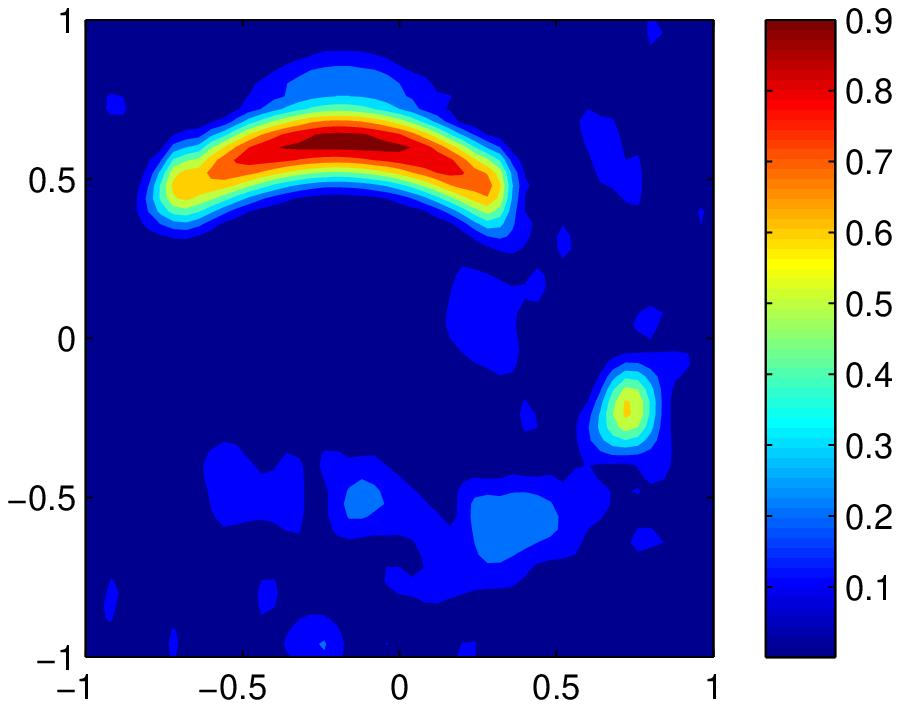}}
\subfloat[True shape]{\label{FigGamma4DTa}\includegraphics[width=0.32\textwidth]{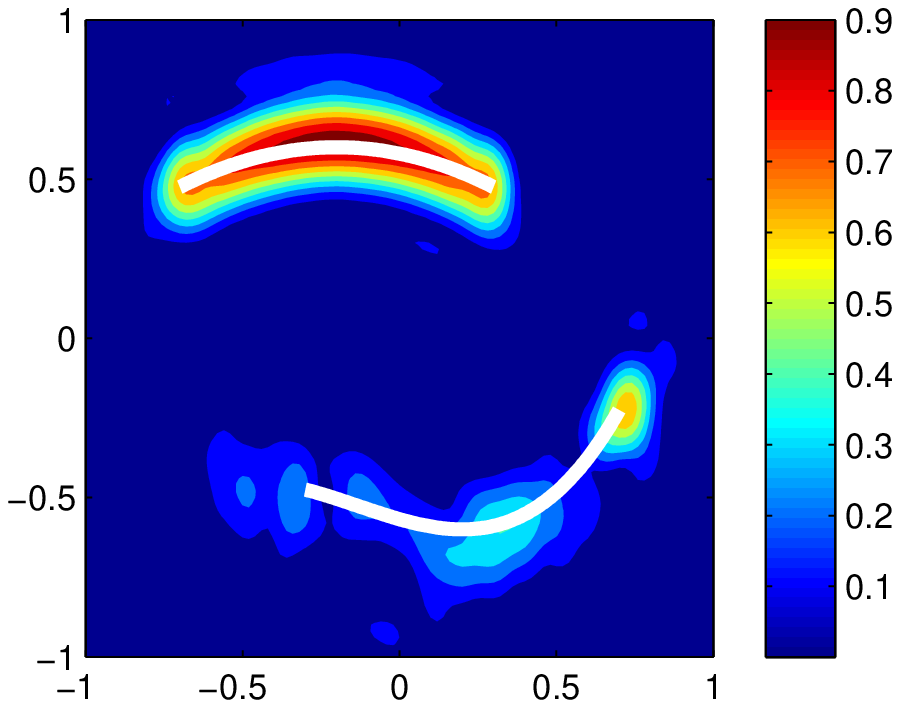}}
\caption{\label{FigGamma4D}Same as Figure \ref{FigGamma1D} except the crack is $\Gamma_4$.}
\end{center}
\end{figure}
%
%
\subsection{Neumann boundary condition case - TE}\label{Sec4-3}
In this case, we present some imaging results for the Neumann boundary condition. The configuration is the same as previously and we use a set of fixed directions $\vn_l$ as
\[\vn_l=\left(\cos\frac{2\pi l}{L},\sin\frac{2\pi l}{L}\right)\quad\mbox{for}\quad l=1,2,\cdots,L.\]
Throughout this section, we use $L=8$ directions for the imaging of $\Gamma_1$ and $L=24$ for $\Gamma_2$, $\Gamma_3$ and $\Gamma_4$. Let us emphasize here that the fact that the direction normal to $\Gamma$ is not known of us results in the slowness of the imaging, i.e., some computational costs are needed (about 2 minutes are required to obtain Figure \ref{FigGamma1N} and 15 minutes to obtain Figures \ref{FigGamma2N}, \ref{FigGamma3N}, \ref{FigGammaN-Naz} and \ref{FigGamma4N} on a personal computer with 2.44 GHz dual-core pentium processor).

Let us consider the imaging of $\Gamma_1$. Similarly with the Dirichlet boundary condition case, although a blurred image appeared, it can be successfully retrieved, refer to Figure \ref{FigGamma1N}.

\begin{figure}[!ht]
\begin{center}
\subfloat[Map of $\mathcal{I}_{\mathrm{N}}(\mx;F)$]{\label{FigGamma1NF}\includegraphics[width=0.32\textwidth]{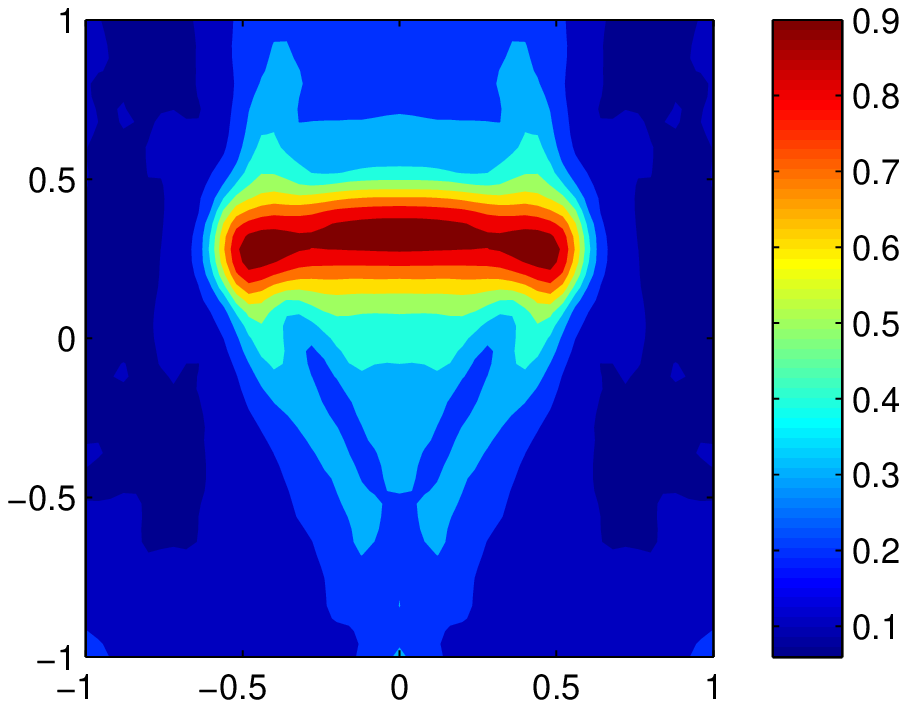}}
\subfloat[Map of $\mathcal{I}_{\mathrm{A}}(\mx;F)$]{\label{FigGamma1NN}\includegraphics[width=0.32\textwidth]{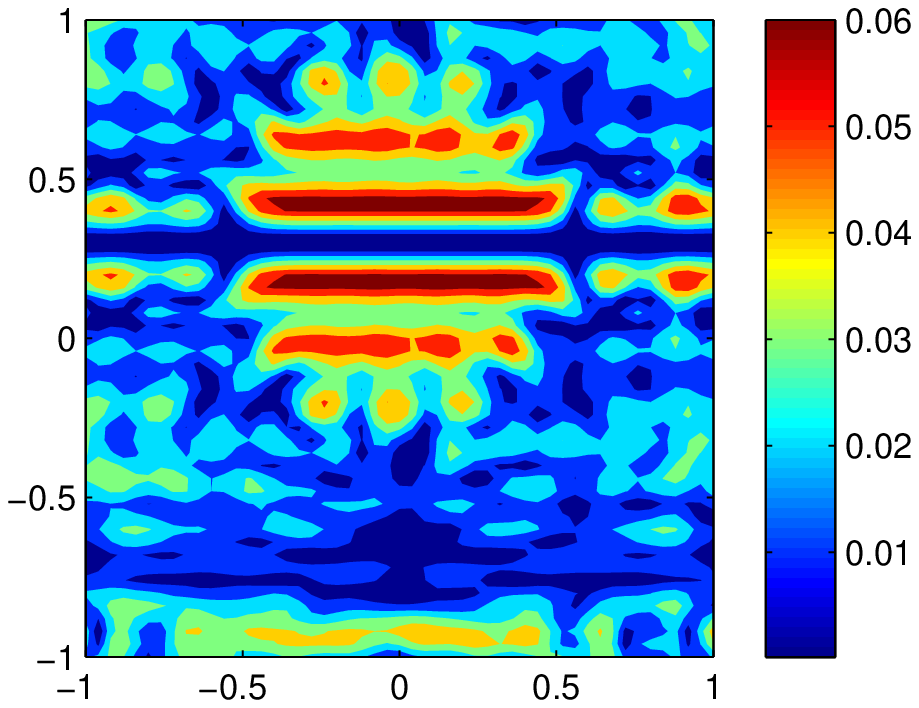}}
\subfloat[Map of $\mathcal{I}_{\mathrm{L}}(\mx;F)$]{\label{FigGamma1NT}\includegraphics[width=0.32\textwidth]{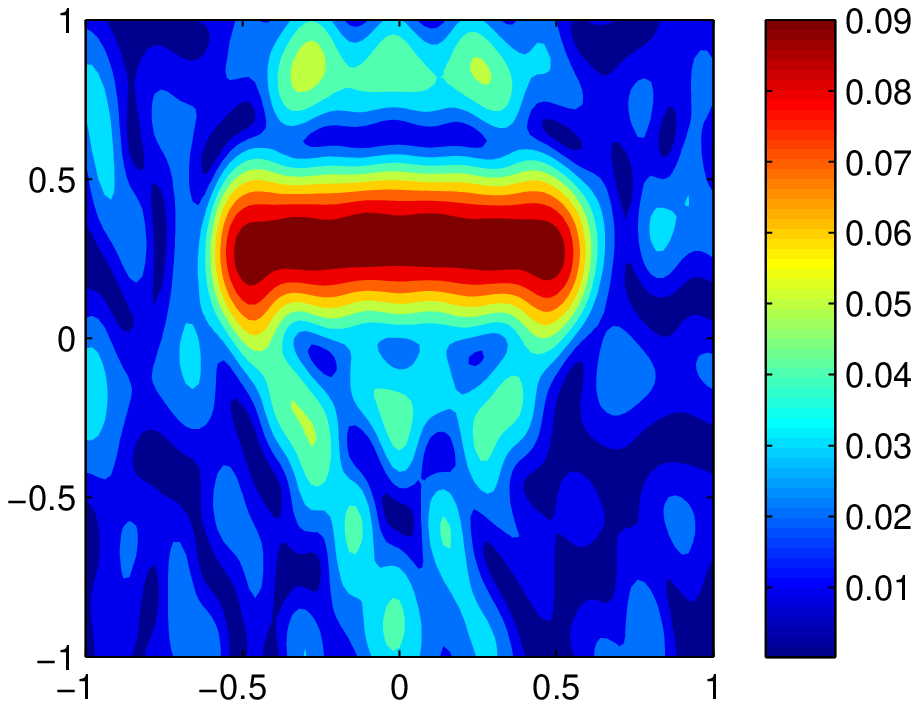}}
\caption{\label{FigGamma1N}Maps of $\mathcal{I}_{\mathrm{N}}(\mx;F)$, $\mathcal{I}_{\mathrm{A}}(\mx;F)$, and $\mathcal{I}_{\mathrm{L}}(\mx;F)$ for $\Gamma_1$.}
\end{center}
\end{figure}

In contrast with the Dirichlet boundary condition case, see Figure \ref{FigGamma2N}, when the crack is not anymore a straight line, a few ghost replicas appear and the location of end-points cannot be identified.

\begin{figure}[!ht]
\begin{center}
\subfloat[Map of $\mathcal{I}_{\mathrm{N}}(\mx;F)$]{\label{FigGamma2NF}\includegraphics[width=0.32\textwidth]{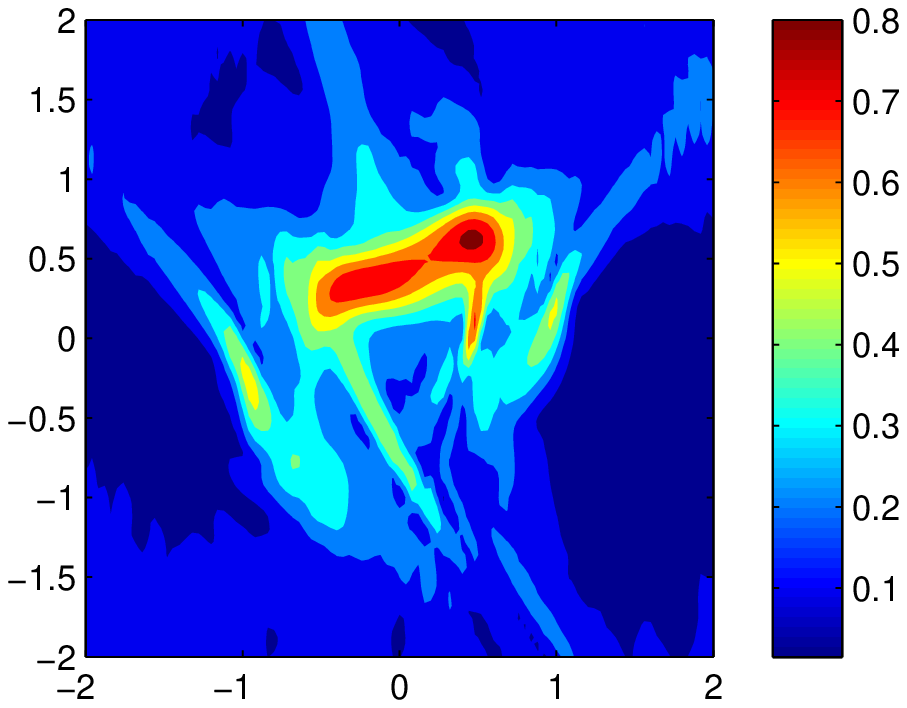}}
\subfloat[Map of $\mathcal{I}_{\mathrm{A}}(\mx;F)$]{\label{FigGamma2NN}\includegraphics[width=0.32\textwidth]{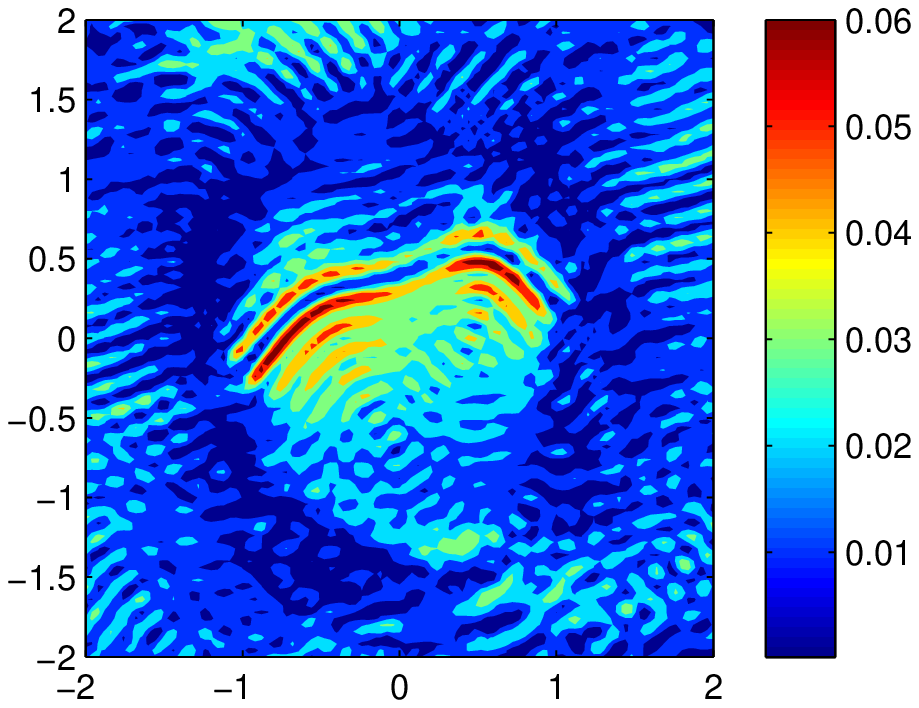}}
\subfloat[Map of $\mathcal{I}_{\mathrm{L}}(\mx;F)$]{\label{FigGamma2NT}\includegraphics[width=0.32\textwidth]{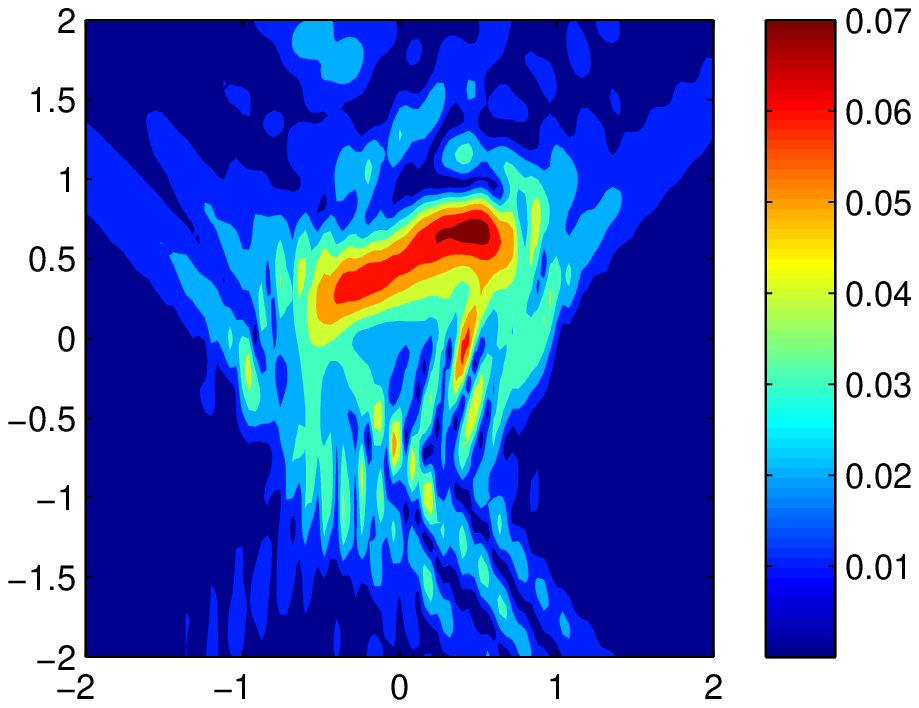}}
\caption{\label{FigGamma2N}Same as Figure \ref{FigGamma1N} except the crack is $\Gamma_2$.}
\end{center}
\end{figure}

Let us consider the imaging of $\Gamma_3$. Typical results are in Figure \ref{FigGamma3N}. It is interesting to observe that opposite to the Dirichlet boundary condition case, two parts of $\Gamma_3$ can be retrieved. However, the result is still bad. It means that if one wants to detect the remaining parts of $\Gamma_3$, more incident (and observation) directions are needed.

\begin{figure}[!ht]
\begin{center}
\subfloat[Map of $\mathcal{I}_{\mathrm{N}}(\mx;F)$]{\label{FigGamma3NF}\includegraphics[width=0.32\textwidth]{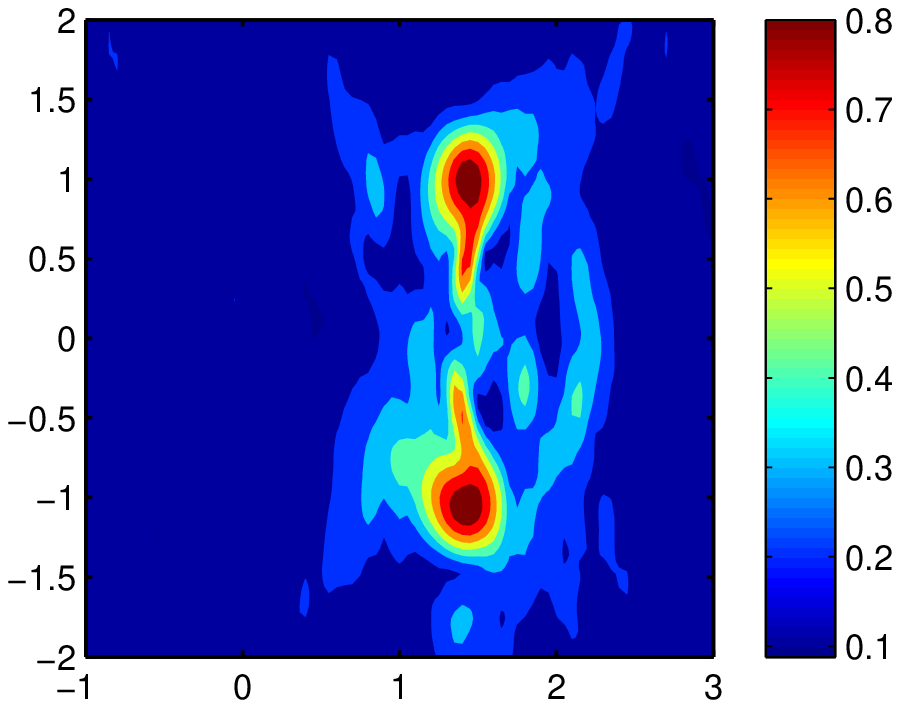}}
\subfloat[Map of $\mathcal{I}_{\mathrm{A}}(\mx;F)$]{\label{FigGamma3NN}\includegraphics[width=0.32\textwidth]{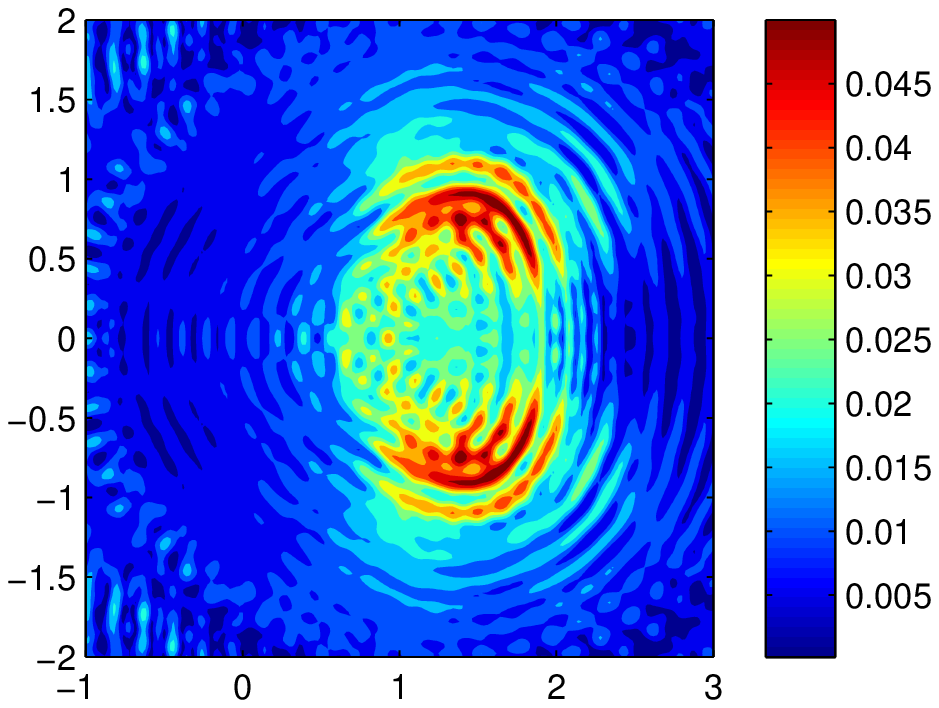}}
\subfloat[Map of $\mathcal{I}_{\mathrm{L}}(\mx;F)$]{\label{FigGamma3NT}\includegraphics[width=0.32\textwidth]{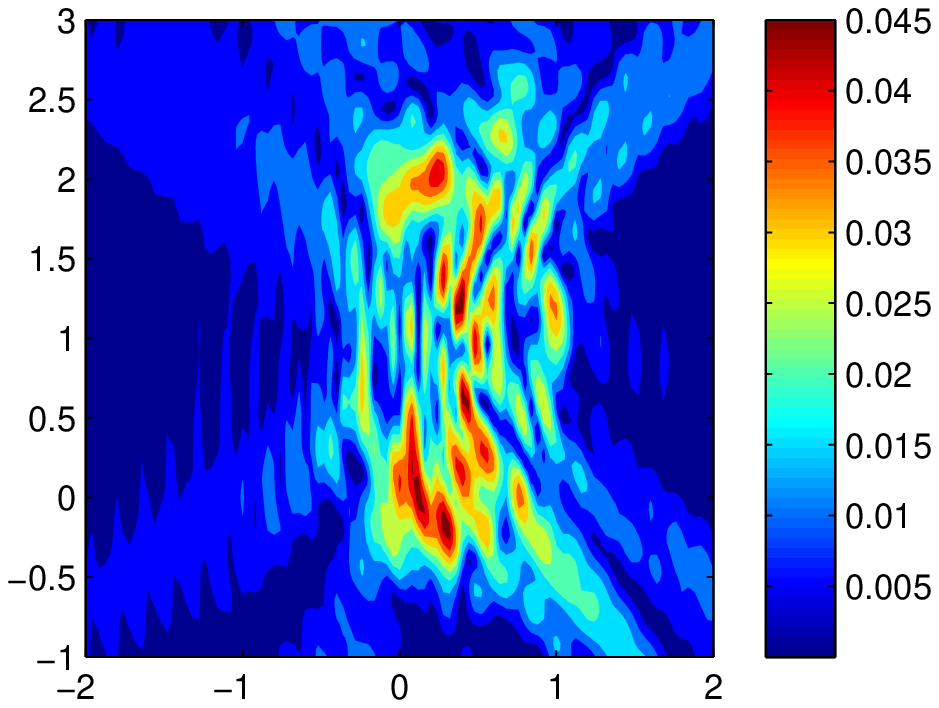}}
\caption{\label{FigGamma3N}Same as Figure \ref{FigGamma1N} except the crack is $\Gamma_3$.}
\end{center}
\end{figure}

\begin{rem}
Similarly with the Dirichlet boundary condition case, we can generate images from far-field data computed by the algorithm presented in \cite{N}, refer to Figure \ref{FigGammaN-Naz}. Numerical experimentation shows that images of a crack from data acquired by the Nystr\"om method or from the ones calculated via this alternative formulation are almost same. Using near-field data instead of far-field one yields similar results.
\end{rem}

\begin{figure}[!ht]
\begin{center}
\subfloat[Map of $\mathcal{I}_{\mathrm{A}}(\mx;F)$ for $\Gamma_1$]{\label{FigGamma1NNaz}\includegraphics[width=0.32\textwidth]{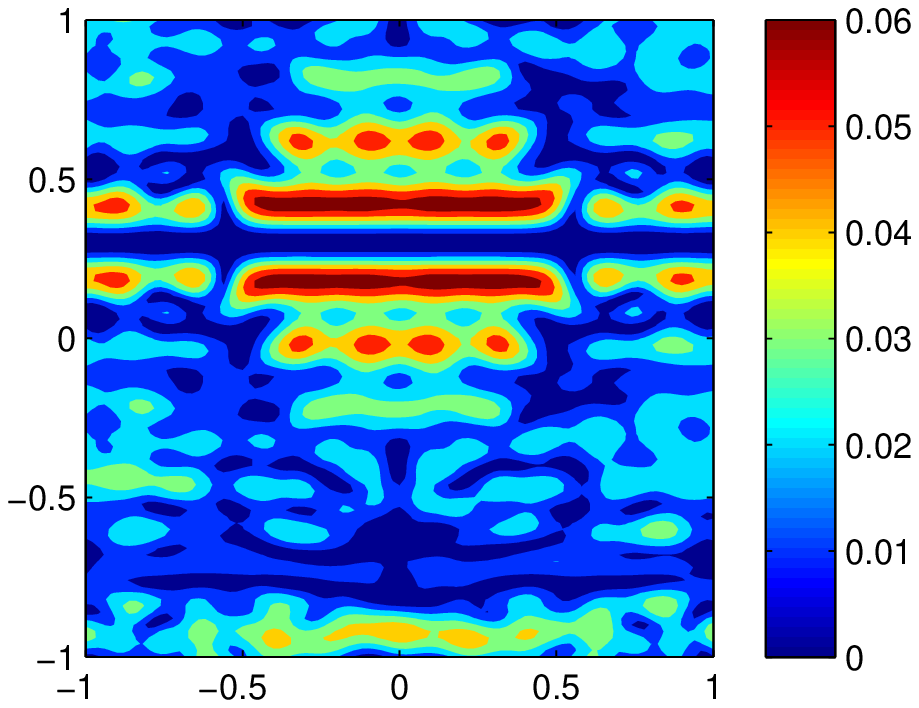}}
\subfloat[Map of $\mathcal{I}_{\mathrm{A}}(\mx;F)$ for $\Gamma_2$]{\label{FigGamma2NNaz}\includegraphics[width=0.32\textwidth]{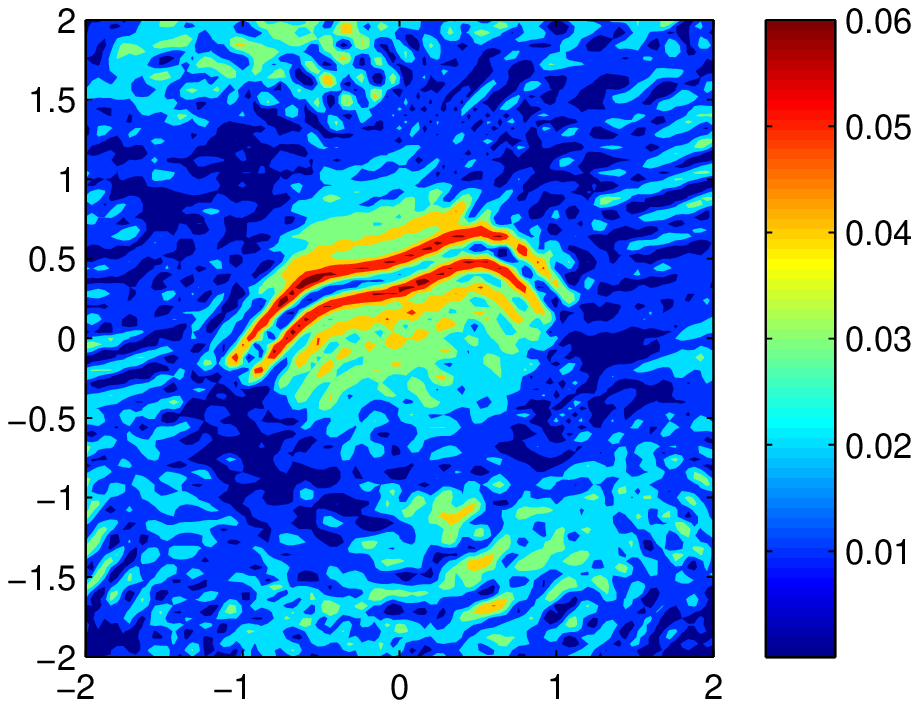}}
\subfloat[Map of $\mathcal{I}_{\mathrm{A}}(\mx;F)$ for $\Gamma_3$]{\label{FigGamma3NNaz}\includegraphics[width=0.32\textwidth]{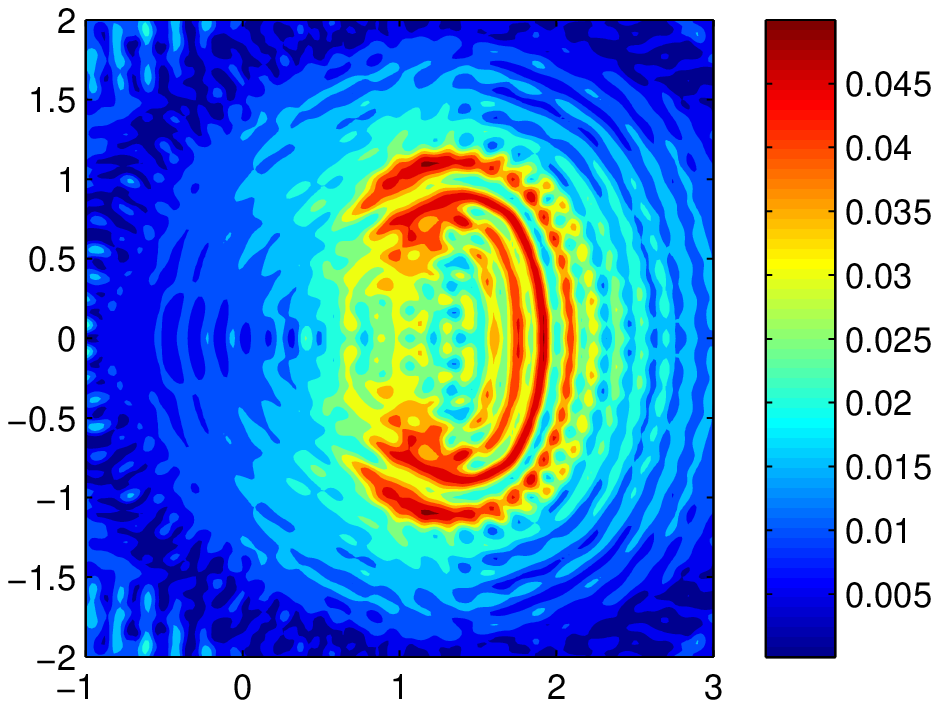}}
\caption{\label{FigGammaN-Naz}Map of $\mathcal{I}_{\mathrm{A}}(\mx;F)$ for $\Gamma_j$, $j=1,2,3$, where dataset generated via method in \cite{N}.}
\end{center}
\end{figure}

For the final example, imaging of multiple cracks is illustrated in Figure \ref{FigGamma4N}. Unlike the Dirichlet boundary condition case, some parts of not only $\Gamma_4^{(1)}$ but also $\Gamma_4^{(2)}$ are retrieved but the result is still poor.

\begin{figure}[!ht]
\begin{center}
\subfloat[Map of $\mathcal{I}_{\mathrm{N}}(\mx;F)$]{\label{FigGamma4NF}\includegraphics[width=0.32\textwidth]{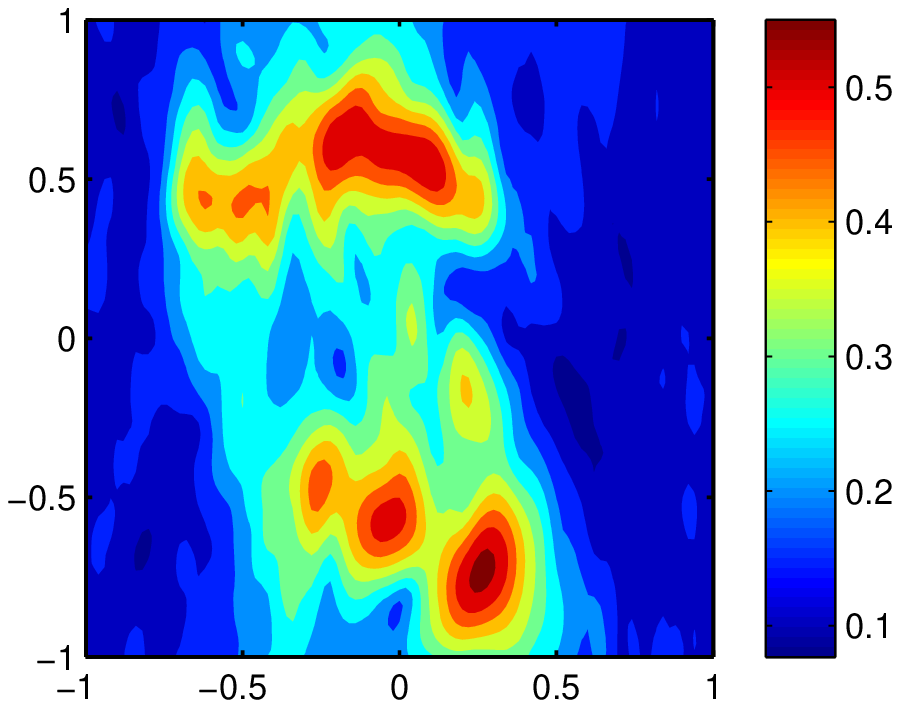}}
\subfloat[Map of $\mathcal{I}_{\mathrm{A}}(\mx;F)$]{\label{FigGamma4NN}\includegraphics[width=0.32\textwidth]{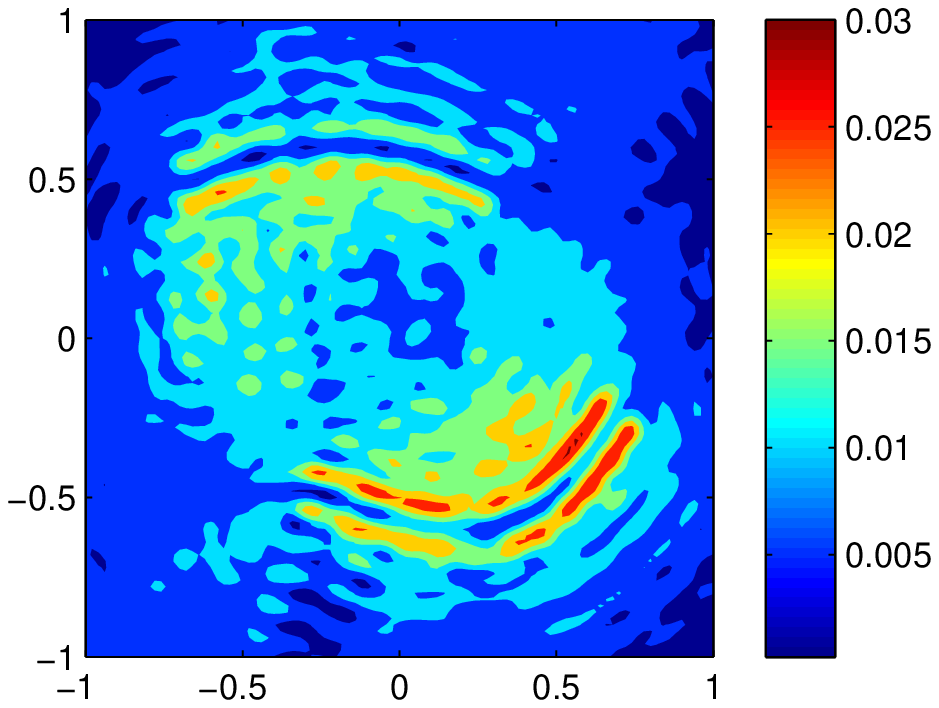}}
\subfloat[Map of $\mathcal{I}_{\mathrm{L}}(\mx;F)$]{\label{FigGamma4NT}\includegraphics[width=0.32\textwidth]{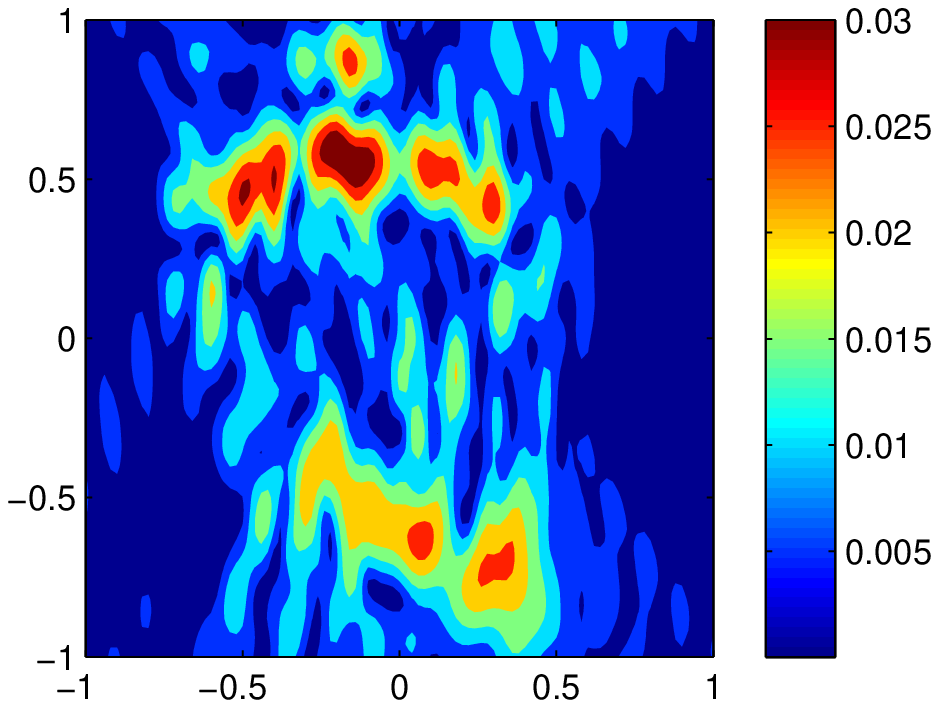}}
\caption{\label{FigGamma4N}Same as Figure \ref{FigGamma1N} except the crack is $\Gamma_4$.}
\end{center}
\end{figure}

\subsection{Complete shape reconstruction - TM case only}\label{Sec4-4}
Now, we consider the complete shape reconstruction of perfectly conducting crack with Dirichlet boundary condition (\ref{HelmBC}). For that purpose, we proceed the Newton method introduced in \cite[Section 7]{K} in order to reconstruct $\Gamma_2$ from far-field measurement with $k=\frac{2\pi}{0.5}$, $\hat{\mx}_j$ of (\ref{IODir}) with $N=8$, $\alpha=\pi/6$, and $\beta=5\pi/6$.

In order to perform the Newton method, we denote $\Gamma_2^{(n)}$ be the crack after $n-$th iteration, i.e., $\Gamma_2^{(0)}$ is the initial guess. Throughout this section, we assume that $\Gamma_2^{(n)}$ can be represented as follows:
\[\Gamma_2^{(n)}=\set{\mz^{(n)}(s):s\in[-1,1]},\]
where $\mz^{(n)}:[-1,1]\longrightarrow\mathbb{R}^2$ is of the form
\[\mz^{(n)}(s)=\left(s,\sum_{j=0}^{p}a_j^{(n)} T_j(s)\right),\quad s\in[-1,1].\]
Here $T_j(s)$ denotes the Chebyshev polynomials of the first kind defined by the recurrence relation
\begin{align*}
  T_0(s)&=1\\
  T_1(s)&=s\\
  T_{j+1}(s)&=2sT_j(s)-T_{j-1}(s).
\end{align*}
Based on the numerical experience in \cite{K}, we use $p=5$ polynomials to reconstruct $\Gamma_2$. Note that true crack $\Gamma_2=\set{\mz(s):s\in[-1,1]}$ is represented as
\[\mz(s)\approx\bigg(s,0.26T_0(s)+0.23T_1(s)-0.22T_2(s)-0.03T_3(s)-0.06T_4(s)\bigg).\]

From the identified parts of $\Gamma_2$ in Figure \ref{FigGamma2DF}, we can evaluate the coefficients $a_j^{(0)}$, $j=0,1,\cdots,5$. With this good initial guess (see Figure \ref{Gamma2iter0}), we apply Newton's method until the value of discrete least square functional in two consecutive steps
\begin{equation}\label{leastsquare}
\mathcal{R}(n):=\frac{1}{2}\sum_{j=1}^{N}\abs{u_\infty^{\mbox{\tiny true}}(\hat{\mx}_j;\vt)-u_\infty^{\mbox{\tiny comp}}(\hat{\mx}_j;\vt)}^2
\end{equation}
was less than a tolerance $0.001$, i.e., we stop this at $n-$th iteration procedure when $|\mathcal{R}(n)-\mathcal{R}(n-1)|<0.001$. In this experiment, only 4 iterations yield a good shape reconstruction of $\Gamma_2$. Obtained values of $a_j^{(n)}$ and corresponding shape of crack $\Gamma_2^{(4)}$ are illustrated in Table \ref{table:Gamma2} and Figure \ref{FigGamma2Iteration}, respectively.

We believe that the results in section \ref{Sec4-3} could be good initial guesses for Neumann boundary condition problem \cite{M2}. It is worth mentioning that if one proceed Newton method with an initial guess that does not close to the true crack (or arbitrary blind initial guess), it is very hard to obtain a desired result even with more iterations, refer to \cite{PL4}.

\begin{table}[!ht]
\begin{center}
\begin{tabular}{c|c|c|c|c|c|c|c}\hline
iterations&$a_0^{(n)}$&$a_1^{(n)}$&$a_2^{(n)}$&$a_3^{(n)}$&$a_4^{(n)}$&$a_5^{(n)}$&value of $\mathcal{R}(n)$\\
\hline\hline
$0$&$0.2741$&$0.2267$&$-0.2062$&$-0.0276$&$-0.0678$&$0.0009$&$0.1203$\\
$1$&$0.2702$&$0.2274$&$-0.2052$&$-0.0277$&$-0.0639$&$0.0007$&$0.0935$\\
$2$&$0.2630$&$0.2274$&$-0.2085$&$-0.0281$&$-0.0614$&$0.0006$&$0.0388$\\
$3$&$0.2622$&$0.2275$&$-0.2110$&$-0.0282$&$-0.0611$&$0.0006$&$0.0344$\\
$4$&$0.2619$&$0.2276$&$-0.2114$&$-0.0282$&$-0.0610$&$0.0006$&$0.0337$\\
\hline
true&$0.2600$&$0.2300$&$-0.2200$&$-0.0300$&$-0.0600$&$0.0000$&~\\
\hline
\end{tabular}
\caption{\label{table:Gamma2}Numerical results for $\Gamma_2$.}
\end{center}
\end{table}

\begin{figure}[!ht]
\begin{center}
\subfloat[Initial guess $\Gamma_2^{(0)}$]{\label{Gamma2iter0}\includegraphics[width=0.32\textwidth]{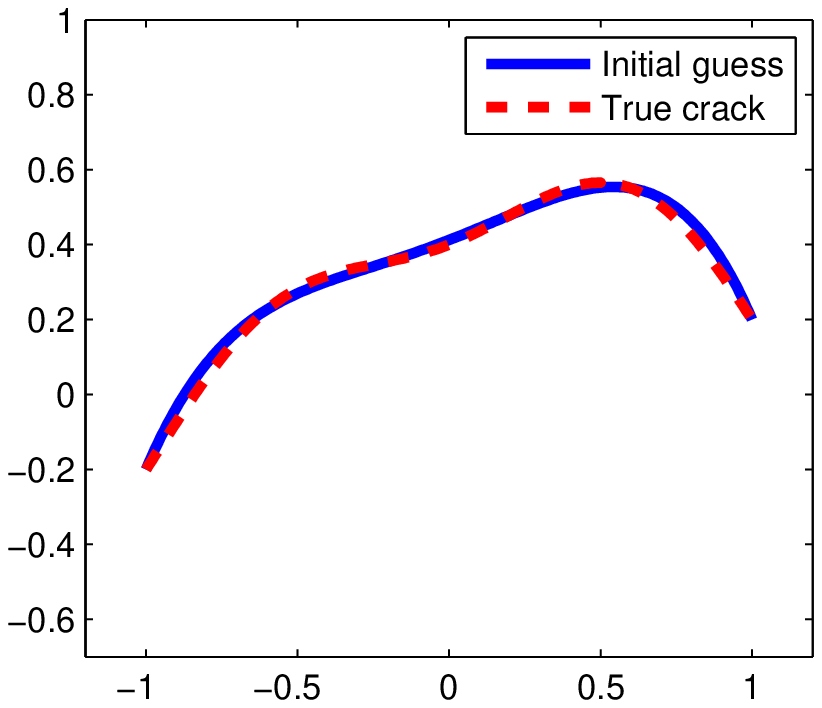}}
\subfloat[After 4 iterations $\Gamma_2^{(4)}$]{\label{Gamma2iter4}\includegraphics[width=0.32\textwidth]{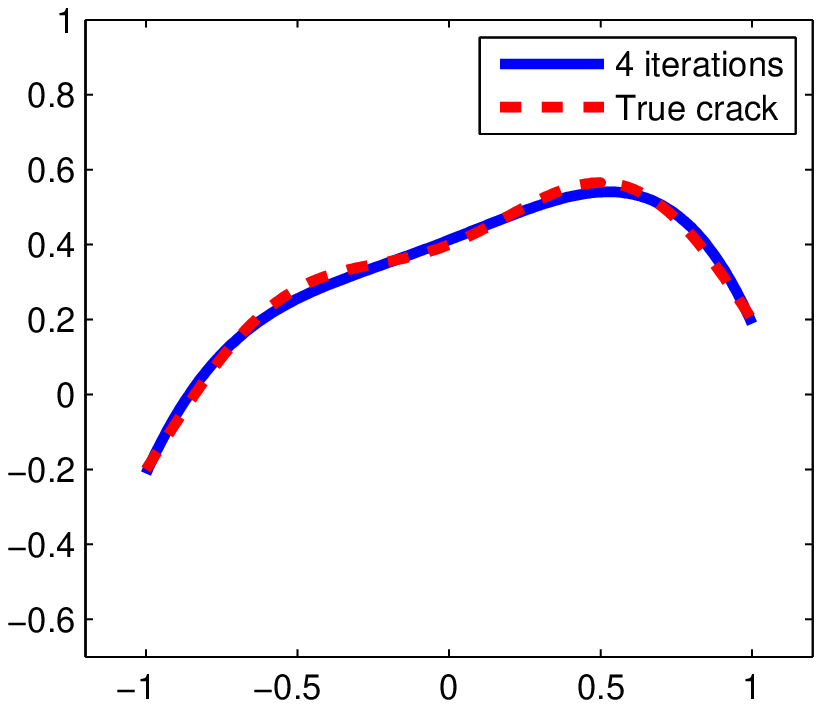}}
\subfloat[Value of least square (\ref{leastsquare})]{\label{Gamma2residue}\includegraphics[width=0.32\textwidth]{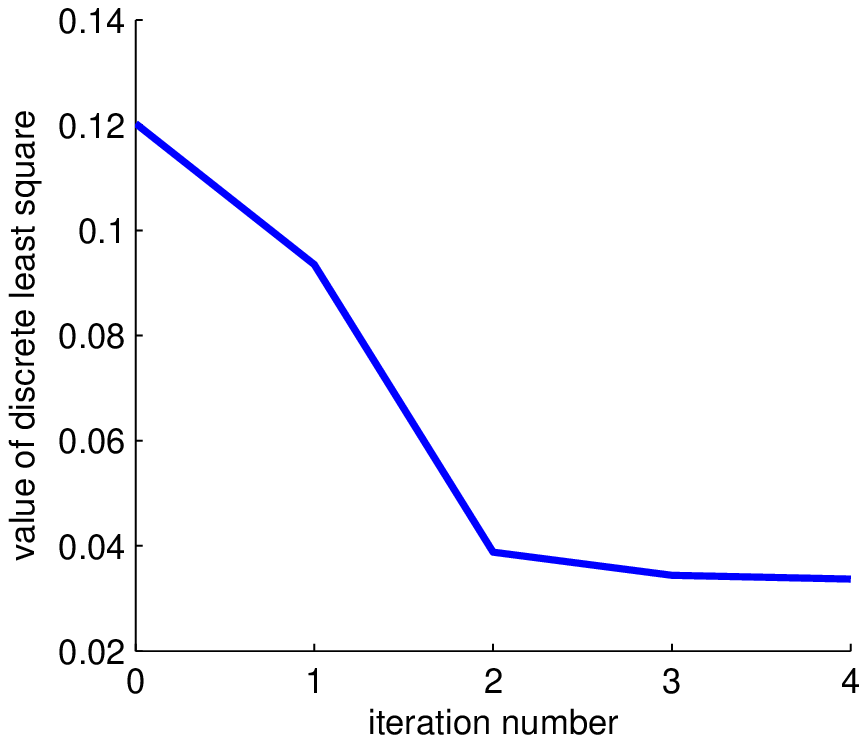}}
\caption{\label{FigGamma2Iteration}Shape reconstruction of $\Gamma_2$ via Newton method in \cite{K}.}
\end{center}
\end{figure}

\section{Conclusion}\label{Sec5}
In this paper, subspace migration imaging technique has been considered to image perfectly conducting, arc-like cracks modeled via a Dirichlet or Neumann boundary condition (TM and TE polarization in two-dimensional electromagnetics) in the two-dimensional full- and limited-view inverse scattering problems. It is based on the factorization of collected Multi-Static Response (MSR) matrix at multi-frequencies of operation and the structure of singular vectors associated to the nonzero singular values.

Throughout rigorous derivation of various definite integrations of Bessel function, we have examined that subspace migration imaging functional can be represented as the combination of Bessel function of integer order of the first kind, and this investigation presents certain properties, limitations in TM and TE polarization cases, and a way of improvements of imaging in TM case.

Presented various numerical simulations from synthetic data computed by rigorous solution methods, it has been shown that the subspace migration imaging technique is very fast, effective and robust with respect to noise for imaging of perfectly conducting cracks. Moreover, it can be easily applied to the imaging of non-overlapped multiple cracks. Nevertheless, some improvements are still required, e.g., when the crack is of large curvature or highly oscillating shaped, and the choice of the normal direction on the crack and method of implementation in TE polarization case.

It is needless to say that such results are obtained at low computational cost. So, though they do not guarantee complete shaping of the cracks, they could be a good initial guess of a level-set evolution or of a standard iterative algorithm \cite{ADIM,AGJKLY,CR,DL,GH,K,KS,M2,PL4,RLLZ,VXB}.

Finally, we mention a point which is interesting but the proof is unsolved and numerical examples are left out in this paper: Opposite to the improvement in section \ref{subsec}, we introduce the following multi-frequency imaging functional weighted by log of given wavenumber $k_f$:
\begin{equation}\label{ImagingFunctionLog}
\mathcal{I}_{\mathrm{WL}}(\mx;F)=\frac{1}{F}\abs{\sum_{f=1}^{F}\sum_{m=1}^{M_f} \ln(k_f)\left(\hat{\mS}_{\mathrm{D}}(\mx;k_f)^*\mU_m(k_f)\right)\left(\hat{\mS}_{\mathrm{D}}(\mx;k_f)^*\overline{\mV}_m(k_f)\right)}.
\end{equation}
Note that if $\mx\ne\my_m$, then since $0\leq\ln(k)J_0(k|\mx-\my_m|)^2\leq kJ_0(k|\mx-\my_m|)^2$,
\begin{align*}
  0\leq\mathcal{I}_{\mathrm{WL}}(\mx;F)&=\frac{1}{k_F-k_1}\left|\sum_{m=1}^{M}\int_{k_1}^{k_F}\ln(k)J_0(k|\mx-\mz|)^2dk\right|\\
  &\leq\frac{1}{k_F-k_1}\left|\sum_{m=1}^{M}\int_{k_1}^{k_F}k J_0(k|\mx-\mz|)^2dk\right|=\mathcal{I}_{\mathrm{W}}(\mx;F,1),
\end{align*}
we can observe that $\mathcal{I}_{\mathrm{WL}}(\mx;F)$ is an improved version of $\mathcal{I}_{\mathrm{W}}(\mx;F,1)$ due to the less oscillation property. Throughout several numerical results, we identified that this imaging functional is also an improvement of multi-frequency subspace migration and offers better results than (\ref{ImagingFunctionW}). In order to identify the structure of (\ref{ImagingFunctionLog}) one must evaluate the following definite integration of Bessel function combined with the natural logarithmic function:
\[\int \ln(x) J_0(x)^2 dx.\]
But in our knowledge, there is no finite representation of this integration. Hence, derivation of this integration and examination of structure of (\ref{ImagingFunctionLog}) should be an interesting and remarkable research topic. Moreover, throughout several numerical experiments, it turns out that if a function $\zeta$ satisfies for sufficiently large $x$
\[1<\zeta(x)<x,\]
then following imaging functional successfully improves $\mathcal{I}_{\mathrm{W}}(\mx;F,1)$
\begin{equation}\label{ImagingFunctionFuture}
\mathcal{I}_{\mathrm{WF}}(\mx;F)=\frac{1}{F}\abs{\sum_{f=1}^{F}\sum_{m=1}^{M_f} \zeta(k_f)\left(\hat{\mS}_{\mathrm{D}}(\mx;k_f)^*\mU_m(k_f)\right)\left(\hat{\mS}_{\mathrm{D}}(\mx;k_f)^*\overline{\mV}_m(k_f)\right)}.
\end{equation}
For example, $\zeta(k_f)=\ln(k_f)$, $\zeta(k_f)=\sqrt{k_f}$, and so on. Hence, finding an optimal function $\zeta$ will be an interesting task.

\end{document}